\newtheorem{theorem}{Theorem}
\newtheorem*{theorem*}{Theorem}
\newtheorem{lemma}{Lemma}
\newtheorem{corollary}{Corollary}
\theoremstyle{definition}
\newtheorem{definition}{Definition}
\theoremstyle{remark}
\newtheorem{remark}{Remark}
\newtheorem{example}{Example}
\newcommand\newtarget[2]{\Hy@raisedlink{\hypertarget{#1}{}}#2}
\newcommand{\E}{\mathbb E}								
\newcommand{\V}{\mathrm{Var}}							
\renewcommand{\P}{\mathbb{P}}							
\newcommand{\Q}{\mathbb{Q}}								
\newcommand{\R}{\mathbb{R}}								
\newcommand{\norm}[1]{\left\lVert{#1}\right\rVert}		
\newcommand{\independent}{{\ \perp \! \! \! \perp\ }}	
\newcommand{\nindependent}{\ \cancel{\perp \! \! \! \perp}\ } 
\newcommand{\iidsim}{\stackrel{\mathrm{i.i.d.}}{\sim}} 	
\newcommand{\indsim}{\stackrel{\mathrm{ind}}{\sim}}		
\newcommand{\convp}{\overset p \rightarrow}             
\newcommand{\convd}{\overset d \rightarrow}             
\newcommand{\cS}{\mathcal{S}}
\newcommand{\cF}{\mathcal F}							
\newcommand{\prx}{\bm X}								
\newcommand{\srx}{X}									
\newcommand{\peps}{\varepsilon}						
\newcommand{\seps}{\varepsilon}							
\newcommand{\pxi}{\xi}						
\newcommand{\sxi}{\xi}							
\newcommand{\law}{\mathcal L}							
\newcommand{\nulllaws}{\mathscr L^0}					
\newcommand{\lawhat}{\widehat{\mathcal L}}				
\newcommand{\minus}{\textnormal{-}} 						    
\newcommand{\mj}{{\textnormal{-}j}} 					
\NewDocumentCommand{\exa}{g}{%
  X\IfValueT{#1}{_{#1}}%
}
\newcommand{\cxa}{\bm X}  
\NewDocumentCommand{\rz}{g}{%
  \bm Z\IfValueT{#1}{_{#1}}%
}
\newcommand{\mz}{\bm Z}   
\NewDocumentCommand{\rx}{g g}{%
  \bm X%
  \IfValueT{#1}{%
    _{#1\IfValueT{#2}{,#2}}%
  }%
}
\NewDocumentCommand{\rxs}{g g}{%
  \bm x%
  \IfValueT{#1}{%
    _{#1\IfValueT{#2}{,#2}}%
  }%
}
\NewDocumentCommand{\rxk}{g g}{%
  \widetilde{\bm X}%
  \IfValueT{#1}{%
    _{#1,\IfValueTF{#2}{#2}{\bullet}}%
  }%
}
\NewDocumentCommand{\rxka}{g g}{%
  \widetilde{\bm X}%
  \IfValueT{#1}{%
    _{#1,\IfValueTF{#2}{#2}{\bullet}}%
  }%
}
\NewDocumentCommand{\ex}{m g}{%
X_{#1\IfValueT{#2}{,#2}}%
}
\NewDocumentCommand{\exk}{m g}{%
\widetilde X_{#1\IfValueT{#2}{,#2}}%
}
\NewDocumentCommand{\exka}{m g}{%
\widetilde X_{#1\IfValueT{#2}{,#2}}%
}
\NewDocumentCommand{\exs}{m g}{%
x_{#1\IfValueT{#2}{,#2}}%
}
\NewDocumentCommand{\cx}{g g}{%
  \bm X%
  \IfValueT{#1}{%
    _{#1\IfValueT{#2}{,#2}}%
  }%
}
\NewDocumentCommand{\cxk}{g g}{%
  \widetilde{\bm X}%
  \IfValueT{#1}{%
    _{#1\IfValueT{#2}{,#2}}%
  }%
}
\NewDocumentCommand{\mx}{g g}{%
  \bm X_{%
    \IfValueTF{#1}{#1}{\bullet},%
    \IfValueTF{#2}{#2}{\bullet}%
  }%
}
\NewDocumentCommand{\mxk}{g g}{%
  \widetilde{\bm X}_{%
    \IfValueTF{#1}{#1}{\bullet},%
    \IfValueTF{#2}{#2}{\bullet}%
  }%
}
\NewDocumentCommand{\ey}{g}{%
  Y%
  \IfValueT{#1}{_{#1}}%
}
\NewDocumentCommand{\cy}{g}{%
  \bm Y%
  \IfValueT{#1}{_{#1}}%
}
\let\oldnl\nl
\newcommand{\nonl}{\renewcommand{\nl}{\let\nl\oldnl}} 
\begin{document}

\pagenumbering{arabic}

	\title{Doubly robust and computationally efficient \\ high-dimensional variable selection}
	\author{Abhinav Chakraborty\footnote{Equal contribution.},\addtocounter{footnote}{-1} Jeffrey Zhang\footnotemark, and Eugene Katsevich}
	
	\maketitle
	\thispagestyle{empty}
	\begin{abstract}
	
Variable selection can be performed by testing conditional independence (CI) between each predictor and the response, given the other predictors. The projected covariance measure (PCM) test is a doubly robust and powerful CI test. However, directly deploying PCM for variable selection brings computational challenges: testing a single variable involves a few machine learning fits, so testing $p$ variables requires $O(p)$ fits. Inspired by model-X ideas, we observe that an estimate of the joint predictor distribution and a single response-on-all-predictors fit can be used to reconstruct all PCM fits. This yields tower PCM (tPCM), a computationally efficient extension of PCM to variable selection. When the joint predictor distribution is sufficiently tractable, as in applications like genome-wide association studies, tPCM offers a substantial speedup over PCM---up to 130x in our simulations---while matching its power. tPCM also improves on model-X methods like knockoffs and holdout randomization test (HRT) by returning per-variable $p$-values and improving speed, respectively. We prove that tPCM is doubly robust and asymptotically equivalent to both PCM and HRT. We thus extend the bridge between model-X and doubly robust approaches, demonstrating their independent arrival at equivalent methods and showing that this intersection is a fruitful source of new methodologies like tPCM.
	
	\end{abstract}

	\section{Introduction}

	\subsection{The variable selection problem}
	
	Variable selection, which involves identifying a subset of predictors that are relevant to a response variable of interest, is a common statistical challenge. For example, in genome-wide association studies (GWAS), researchers aim to identify genetic variants that influence disease susceptibility, while in finance, analysts seek indicators that predict stock prices. In these problems and many others, only a small fraction of the available predictors are expected to have an impact on the response. 
	
	Let us denote the predictor variables $\rx = (\ex{1}, \dots, \ex{p}) \in \R^p$ and the response variable $\ey \in \R$. We have $2n$ i.i.d. observations $(\rx{i}{\bullet}, \ey{i}) \iidsim \law(\rx, \ey)$, denoted $(\mx, \cy) \equiv \{(\rx{i}{\bullet}, \ey{i})\}_{i = 1, \dots, 2n}$ (we use $2n$ rather than $n$ for convenience). We consider the variable selection problem of testing the conditional independence (CI) of the response $\ey$ and the predictor $\ex{j}$ given the other predictors $\rx{\mj}$, for each $j$ \citep{CetL16}:
	\begin{equation}
		H_{0j}: \ey \independent \ex{j} \mid \rx{\mj} \quad \textnormal{versus} \quad H_{1j}: \ey \nindependent \ex{j} \mid \rx{\mj}.
		\label{eq:conditional-independence}
	\end{equation}
	Depending on the context, it may be desired to control the family-wise error rate (FWER) or the false discovery rate (FDR) among the selected variables. This problem poses a range of statistical and computational challenges (Section~\ref{sec:desirable-properties}), which no existing method has addressed jointly (Section~\ref{sec:existing-approaches}). We introduce a new method that cross-pollinates ideas from two strands of existing work to meet these challenges (Section~\ref{sec:our-contributions}).	

	\subsection{Desirable properties of variable selection methods} \label{sec:desirable-properties}

	We lay out four desirable criteria for variable selection methods. 

	\vspace{-0.15in}

	\paragraph{Type-I error control with nuisance estimation.} A central requirement for general-purpose variable selection is asymptotic Type-I error control when nuisance components such as $\law(\rx)$ and $\law(\ey \mid \rx)$ (or functionals thereof) must be estimated from the observed data, rather than assumed known or estimated from a large auxiliary sample. The latter scenarios do occur \citep{Ham2022a,Aufiero2022a,Zhang2022} but not commonly; see the discussion of model-X methods in Section~\ref{sec:existing-approaches}.
	
	\vspace{-0.15in}

		\begin{wrapfigure}[13]{r}{0.3\textwidth}
		\vspace{-0.20in}
  	\centering
	  \includegraphics[width=0.3\textwidth]{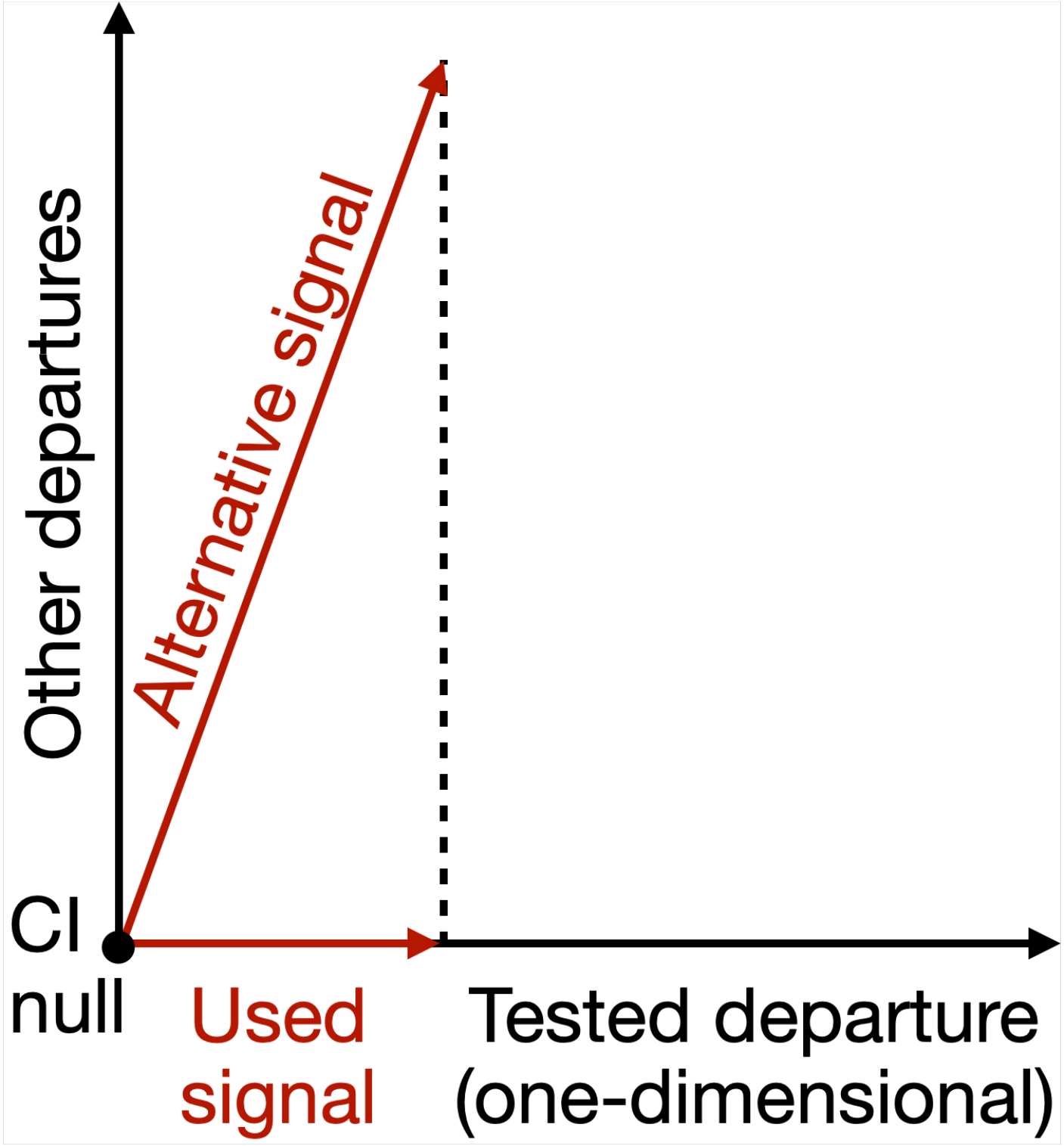}
  	\caption{Testing against 1D alternatives.}
  	\label{fig:1d-alternatives}
	\end{wrapfigure}

	\paragraph{Power against multi-dimensional alternatives.} Some CI tests detect departures from the null along a single prespecified ``direction'' in the space of alternatives. The power of such a test against any local alternative signal is a function of its projection onto this direction (Figure~\ref{fig:1d-alternatives}; see also Appendix~\ref{sec:estimands}). If the alternative is known to lie along this direction, then such tests can have optimal power \citep{Niu2022}. Otherwise, such tests can have low power if the portion of the signal that projects onto this direction is small. For this reason, it is desirable for a general-purpose variable selection method to have nontrivial power against multi-dimensional alternatives, i.e., not to detect departures from the null along only a single prespecified direction.
	
	\vspace{-0.15in}

	\paragraph{Computational speed.} The speeds of variable selection methods are often bottlenecked by running machine learning (ML) procedures to estimate functionals of $\law(\rx)$ and/or $\law(\ey \mid \rx)$ or by resampling to build null distributions. Many existing methods require either $O(p)$ ML fits or $O(p^2)$ resamples to test $p$ variables, which can be prohibitive when $p$ is large. In GWAS, for example, $(p, n) \approx (10^6, 10^5)$ is typical. Our goal is to design a method that requires only $O(1)$ ML fits and $O(p)$ resamples.
	
	\vspace{-0.15in}

\paragraph{Returning $p$-values for each variable.}
Variable selection methods that return $p$-values are preferred for two reasons. First, $p$-values permit FWER adjustments, whereas non-$p$-value methods generally lack powerful FWER control, though they may accommodate alternatives like $k$-FWER or FDR (see Section~\ref{sec:existing-approaches}). Second, practitioners expect $p$-values: they drive common visualizations (volcano, QQ, and Manhattan plots) and are routinely inspected to assess significance. GWAS, a prototypical modern application of large-scale variable selection, exemplifies these practices. The field explicitly adopts FWER control, maintaining the genome-wide $p$-value threshold of $5 \times 10^{-8}$ \citep{Risch1996b} for three decades. Furthermore, variant–disease associations must be supported with $p$-values for inclusion in the GWAS Catalog \citep{Sollis2023a}.

	\subsection{An overview of existing approaches} \label{sec:existing-approaches}
	
	We evaluate a set of leading methods for variable selection based on the above criteria (Table~\ref{tab:comparison}), deferring additional discussion of related work to Section~\ref{sec:related-work}. These methods fall into two categories: \textit{model-X} and \textit{doubly robust}. Both classes of methods include black-box ML estimates of the nuisances $\law(\rx)$ and/or $\law(\ey \mid \rx)$ (or functionals thereof). 

	\setlength{\tabcolsep}{4.5pt}
	\begin{table}[h]
		\centering
		\begin{tabular}{l|cc|cc|c}
		& \multicolumn{2}{c|}{Model-X} & \multicolumn{2}{c|}{Doubly robust} & Best of both \\ 
		\cline{2-6}
		& knockoffs & HRT & GCM & PCM & tPCM \\ 
		\hline
		Type-I control with nuisances & \checkmark  & \checkmark (new)  & \checkmark & \checkmark & \checkmark \\
		Power vs multi-dim alternatives & \checkmark & \checkmark &  & \checkmark & \checkmark \\ 
		$O(1)$ ML fits and $O(p)$ resamples & \checkmark &  &  &  & \checkmark \\ 
		Produces $p$-values for each variable &  & \checkmark & \checkmark & \checkmark & \checkmark \\ 
		\end{tabular}
		\caption{Comparison of four existing variable selection methods and the proposed method (tPCM) based on four statistical and computational criteria. The Type-I error control of HRT with nuisance estimation is established in this paper (Corollary~\ref{cor:hrt-type-I-error-control}).}
		\label{tab:comparison}
	\end{table}

	\vspace{-0.25in}

	\paragraph{Model-X methods.} This class of methods is based on the model-X framework \citep{CetL16}, where $\law(\rx)$ is assumed known but no assumptions are made about $\law(\ey \mid \rx)$. Such methods include model-X knockoffs \citep{CetL16} and the holdout randomization test (HRT; \cite{Tansey2018}). These methods are commonly deployed in practice by fitting $\law(\rx)$ in-sample. For knockoffs, recent works \citep{Fan2018a, Fan2023, Fan2025} have provided conditions under which asymptotic FDR control is maintained in this regime. Such a result was not available for HRT, but we provide one in this paper (see Section~\ref{sec:our-contributions}). We now describe each method. Knockoffs involves constructing negative control knockoff variables $\rxk$ that mimic the dependence structure of the original predictors $\rx$, and using test statistics that contrast the importance of the original and knockoff variables. This method does not provide $p$-values for each variable, which makes it incompatible with powerful FWER control at levels $\alpha < 0.5$, but still allows control of the FDR and $k$-FWER \citep{Janson2016,CetL16}. On the other hand, HRT learns $\hat m(\rx) \approx \E[\ey \mid \rx]$ on $n$ samples and quantifies the significance of the $j$th variable by comparing the prediction error $\sum_{i = 1}^n (\ey{i} - \hat m(\rx{i}{\bullet}))^2 $ on the remaining $n$ samples to its distribution under resampling $\cx{\bullet}{j} \mid \mx{\bullet}{\mj}$. HRT requires up to $O(p^2)$ resamples to test all $p$ variables, which can be expensive.

	\vspace{-0.15in}

	\paragraph{Doubly robust methods.} These methods test the CI hypothesis~\eqref{eq:conditional-independence} for a single variable $j$ based on asymptotically normal estimates of the functional 
	\begin{equation}
	\psi_j(g) = \E[(g(\rx) - \E[g(\rx) \mid \rx{\mj}])(\ey - \E[\ey \mid \rx{\mj}])],
	\label{eq:dr-functional}
	\end{equation}
	which vanishes under CI for any fixed function $g$. The generalized covariance measure (GCM) test \citep{Shah2018} employs $g_{\text{GCM}}(\rx) = \ex{j}$, which leads to power against a one-dimensional set of alternatives (see Appendix~\ref{sec:estimands}). The projected covariance measure (PCM; \cite{Lundborg2022a}) extends this idea to obtain power against multi-dimensional alternatives by setting $g_{\text{PCM}}(\rx) = \E[\ey \mid \rx]$,
	where $\E[\ey \mid \rx]$ is learned on a portion of the data. For testing CI using functionals of the form \eqref{eq:dr-functional}, taking $g_{\text{PCM}}(\rx) = \E[\ey \mid \rx]$ is in fact optimal in a precise sense \citep{Lundborg2022a}. Moreover,
	\begin{equation}
	\text{CI} \quad \Longrightarrow \quad \psi_j^{\text{PCM}} = \psi_j(g_{\text{PCM}}) = 0 \quad \Longrightarrow \quad \psi_j^{\text{GCM}} = \psi_j(g_{\text{GCM}}) = 0,
	\end{equation}
	\begin{wrapfigure}[8]{r}{0.325\textwidth}
		\vspace{-0.24in}
  	\centering
	  \includegraphics[width=0.325\textwidth]{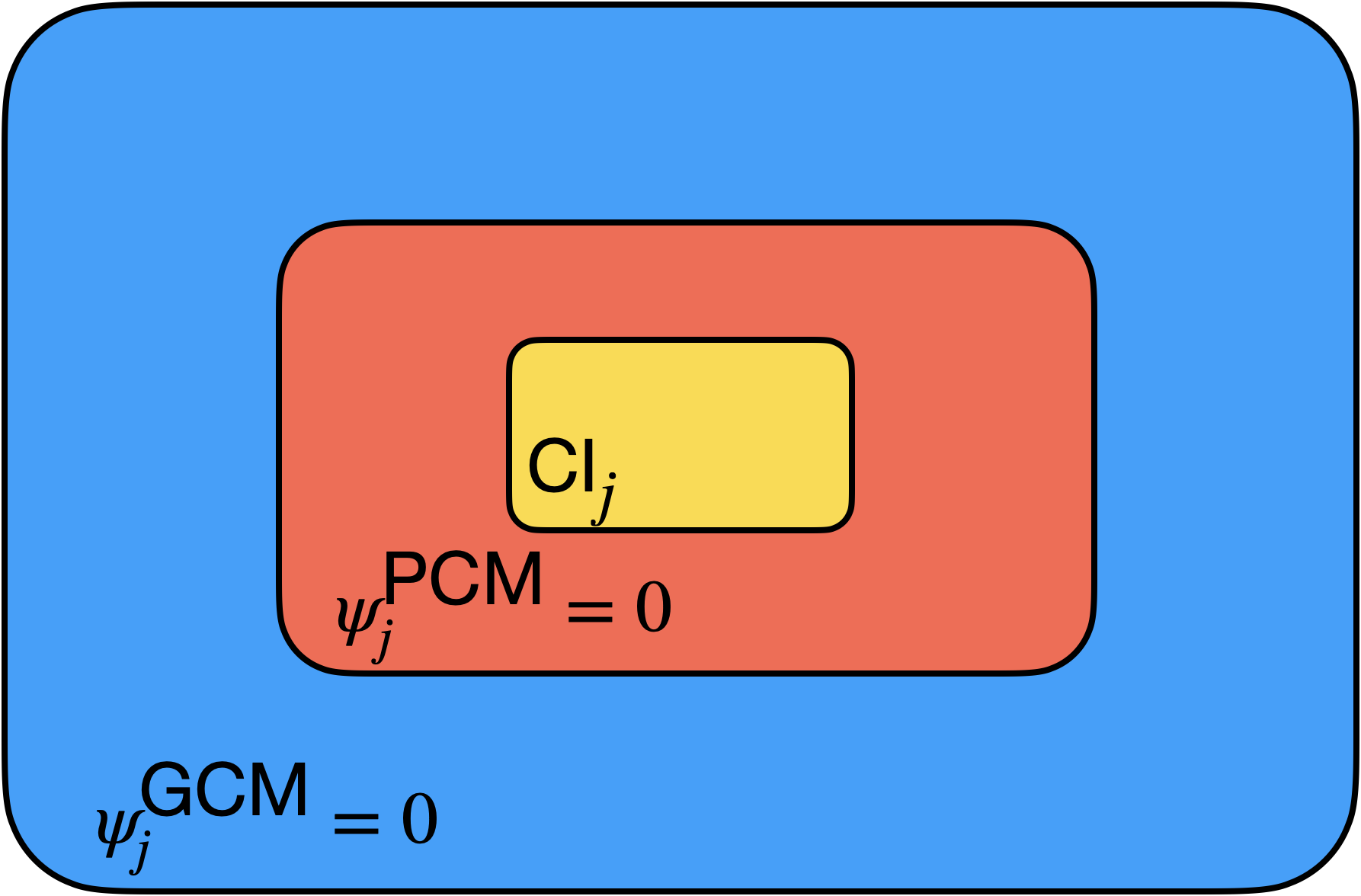}
	  \captionsetup{skip=5pt} 
  	\caption{Nested nulls.}
  	\label{fig:nested-nulls}
	\end{wrapfigure}
	so PCM is sensitive to a broader range of departures from CI than GCM (Figure~\ref{fig:nested-nulls}). Both methods involve ML steps to estimate the quantities $\E[g(\rx) \mid \rx{\mj}]$ and $\E[\ey \mid \rx{\mj}]$, and have the double robustness property \citep{Smucler2019a} that Type-I error is controlled asymptotically if the two estimation errors converge to zero and their product decays at the rate of $o(n^{-1/2})$. Since these methods were designed for a single CI test, their direct application to the high-dimensional variable selection problem is computationally challenging because it would involve multiple ML fits for each variable $j$. For GWAS, imagine fitting a million ML models based on a hundred thousand observations each!
			
	\subsection{Our contributions} \label{sec:our-contributions}

	Given the challenges inherent in variable selection, the application of any existing method involves sacrificing at least one of the desirable statistical or computational properties (Table~\ref{tab:comparison}). \textit{By leveraging ideas from both model-X and doubly robust literatures, we develop a new method, tower PCM (tPCM), that satisfies all four criteria simultaneously.}
	
	Our approach is to resolve the computational challenge of deploying doubly robust tests in the variable selection setting by taking a model-X perspective. If we had an approximation to $\law(\rx)$, could we circumvent the need to fit $\E[\ey \mid \rx{\mj}]$ for each $j$? The tower property suggests a way to proceed:
	\begin{equation}
	\E[\ey \mid \rx{\mj}] = \E[\E[\ey \mid \rx] \mid \rx{\mj}].
	\label{eq:tower-property}
	\end{equation}
	The inner expectation $\E[\ey \mid \rx]$ can be estimated using a single ML fit involving all predictors. The outer expectation can be evaluated using our approximation to $\law(\rx)$, which implies a conditional distribution $\law(\ex{j} \mid \rx{\mj})$. If these conditional distributions can be computed efficiently, then so can the outer expectation. Efficiently computable conditionals are often available in applications of model-X methods, like GWAS \citep{SetC17}, where the hidden Markov model (HMM) is the commonly accepted model for the joint distribution of genetic variants \citep{scheet2006fast} and admits efficient conditional sampling \citep{Rabiner1989}. Applying this idea to accelerate the PCM test leads to the proposed method, tPCM. 
	
	While the tower property idea~\eqref{eq:tower-property} is straightforward, the verification of tPCM's asymptotic Type-I error control is technically challenging due to the interplay between errors in the estimation of $\law(\rx)$ and $\E[\ey \mid \rx]$. \textit{We overcome these challenges to prove asymptotic uniform Type-I error control under doubly robust type conditions on the estimation errors (Theorem~\ref{thm:tower-pcm-type-I-error}).} tPCM also produces $p$-values for each variable by construction, and inherits power against multi-dimensional alternatives from PCM. Finally, tPCM satisfies the desired computational constraint by requiring only two ML fits (one for $\law(\rx)$ and one for $\E[\ey \mid \rx]$) and $O(p)$ resamples, the latter to approximate the outer expectation~\eqref{eq:tower-property} using a constant number of resamples per variable. We note that simply counting the number of ML fits and resamples is only a rough proxy for computational cost, as these computational units can vary in their expense. The computational advantage of tPCM, while not universal, is most pronounced where $\law(\rx)$ has structure (like the HMM structure ubiquitous in GWAS) that can be exploited for efficient fitting and conditional sampling. Our numerical simulations illustrate this advantage; see Figure~\ref{fig:comp_stat_efficiency} for a preview and Section~\ref{sec:sim-study} for details. Among the methods with competitive power (tPCM, PCM, and HRT), tPCM is faster than PCM and HRT by factors of 10 and 30, respectively. \textit{In larger experiments (Figure~\ref{fig:computation-rf}, right), the advantage over these methods grows to factors of 130 and 140.} 

	\begin{figure}[h!]
		\centering
		\includegraphics[width = 0.7\textwidth]{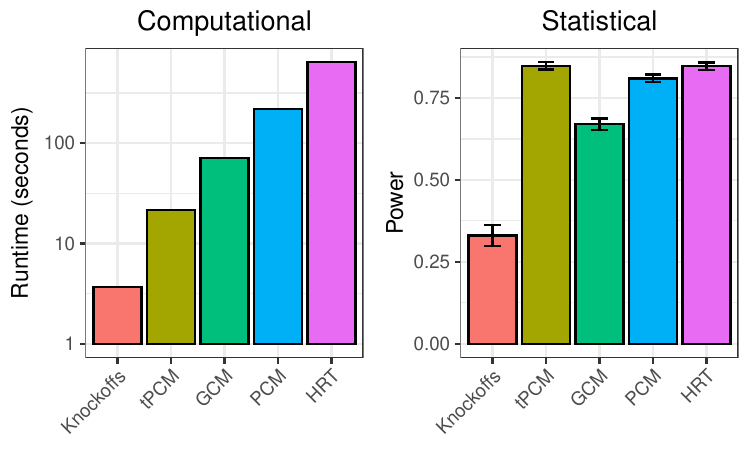}
		\caption{A comparison of the computational and statistical performance of our method, tPCM, with state-of-the-art competitors controlling FDR at level $q = 0.1$ on a problem instance of size $(2n, p) = (3000, 50)$. All methods fit $\E[\ey \mid \rx]$ via a random forest and knockoffs, tPCM, and HRT fit $\law(\rx)$ using an HMM.}
		\label{fig:comp_stat_efficiency}
	\end{figure}	
		
	tPCM is a hybrid of model-X and doubly robust methods that combines the strengths of both approaches to overcome their individual limitations. While we motivated tPCM as a computational acceleration of PCM using model-X ideas, it can also be viewed as an acceleration of HRT using doubly robust ideas, replacing the resampling-based null distribution by an asymptotic one. In fact, \textit{we prove that the tPCM is asymptotically equivalent to both PCM and HRT (Theorem~\ref{thm:HRT-PCM-equivalence}), revealing that the HRT and PCM themselves are asymptotically equivalent and that HRT is doubly robust (adding the new checkmark in Table~\ref{tab:comparison}).} We have thus demonstrated that the model-X and doubly robust literatures independently arrived at asymptotically equivalent tests, and have proposed a new method at the intersection that improves computationally upon both. We had begun probing this fruitful intersection in the context of the GCM test \citep{Niu2022}, and this work further establishes it as a source of new insights and methodologies.

	\subsection{Paper outline}

	In Section~\ref{sec:background}, we review PCM and HRT. In Section~\ref{sec:doubly robust-tower-PCM}, we introduce tPCM, compare its computational complexity to those of existing methods, and state our result on its asymptotic Type-I error control. In Section~\ref{sec:asym-equiv-tower-PCM}, we present our results on the asymptotic equivalence of tPCM, PCM, and HRT. In Section~\ref{sec:sim-study}, we present a simulation study comparing tPCM to existing methods. In Section~\ref{sec:data-analysis}, we apply tPCM to a breast cancer dataset. We conclude in Section~\ref{sec:discussion}. Proofs and additional numerical results are deferred to the Appendix.

	\section{Background: The PCM test and the HRT} \label{sec:background}

In this section, we define the PCM test and the HRT. In preparation for this, we introduce some notation. Let
\begin{equation}
m(\rx) \equiv \E_\law [\ey \mid \rx] \quad \textnormal{and} \quad m_j(\rx{\mj}) \equiv \E_\law [\ey \mid \rx{\mj}].
\end{equation}
For a fixed function $\widehat f(\rx)$, we will denote
\begin{equation}
m_{\widehat f}(\rx{\mj}) \equiv \E_\law [\widehat f(\rx) \mid \rx{\mj}].
\end{equation}
Many of the quantities we introduce will be indexed by $j$, though at times, we omit this index to lighten notation. We do not assume the model-X setting, so we treat $\law(\rx)$ as unknown. Finally, the set of null laws $\law$ for predictor $j$ is explicitly given by
\begin{equation}
	\nulllaws_{n,j} \equiv \{\law: \law(\ex{j}, \ey \mid \rx{\mj}) = \law(\ex{j}\mid\rx{\mj}) \times \law(\ey\mid\rx{\mj})\}.
\end{equation}

The algorithms reviewed in this section and the proposed tPCM test in Section~\ref{sec:doubly robust-tower-PCM} involve arbitrary black-box estimators of nuisance functions (e.g., high-dimensional regressions or machine learning methods). The theoretical guarantees for these methods require these estimators to achieve a certain level of accuracy. For more concrete examples of such nuisance estimators, see Sections~\ref{sec:sim-study}, \ref{sec:data-analysis}, and~\ref{sec:examples-type-1-error}. Additionally, the algorithms below involve sample splitting, which for simplicity we present as involving two equal-sized splits. However, our theory can accommodate splitting proportions besides 0.5, and we use different split proportions in the simulations and data analysis.

\subsection{Projected covariance measure}\label{sec:vPCM}
In this section, we describe a ``vanilla'' version of the PCM methodology proposed in \citet{Lundborg2022a}, which we shall refer to as vPCM (Algorithm \ref{alg: pcm}). vPCM is a special case of the slightly more involved PCM, which retains its essential ingredients but omits some steps that do not affect the asymptotic statistical performance. Explicitly, we omit steps 1 (iv) and 2 of Algorithm 1 in \citet{Lundborg2022a}. 

We begin by splitting our data into $D_1 \cup D_2$, with $D_1$ and $D_2$ containing $n$ samples each.  We estimate $\widehat m(\rx) \equiv \widehat{\E}[\ey \mid \rx]$ on $D_2$, and then we regress it onto $\rx{\mj}$ using $D_2$ to obtain $\widecheck m_j(\rx{\mj})$. We denote the difference of the two quantities $\widehat f_j(\rx) \equiv \widehat m(\rx) - \widecheck m_j(\rx{\mj})$. The quantity $\widehat f_j(\rx)$ is then tested for association with $\ey$, conditionally on $\rx{\mj}$, using $D_1$. To this end, we regress $\ey$ on $\rx{\mj}$ using $D_1$ to obtain an estimate of $\E[\ey|\rx{\mj}]$, which we call $ \widetilde{m}_j(\rx{\mj})$. We also regress $\widehat f_j(\rx)$ on $\rx{\mj}$ using $D_1$ to obtain $\widehat m_{\widehat f_j} (\rx{\mj})$.  We define the product of residuals stemming from the two regressions as 
\begin{equation}
L_{ij} \equiv (\ey{i} - \widetilde{m}_j(\rx{i}{\mj}))(\widehat f_j(\rx{i}{\bullet}) - \widehat m_{\widehat f_j} (\rx{i}{\mj})) 
\label{eq:l-i-j}
\end{equation}
and define the vanilla PCM statistic for predictor $j$ as:
\begin{equation}
	T^{\textnormal{vPCM}}_j \equiv \frac{\frac{1}{\sqrt n} \sum_{i=1}^n L_{ij}}{\sqrt{\frac{1}{n} \sum_{i=1}^n L_{ij}^2 - \left(\frac{1}{n} \sum_{i=1}^n L_{ij}\right)^2 }}.
\label{eq:vpcm-test-stat}
\end{equation}
Under the null hypothesis, $T_j^{\textnormal{vPCM}}$ is a sum of random quantities and for sufficiently large $n$ and under appropriate conditions, the central limit theorem (CLT) is expected to apply. Hence, we can compare our statistic to the quantiles of the normal distribution and reject for large values.
Our test is defined as
$$
\phi_j^{\textnormal{vPCM}}(\mx, \cy) \equiv \mathbbm{1}\left(T^{\textnormal{vPCM}}_j(\mx, \cy) > z_{1-\alpha}\right).
$$
Aside from the fitting of $\widehat m(\rx)$, the steps are repeated for each predictor $j = 1,\ldots,p$.

\begin{algorithm}[h]
	\SetAlgoLined 
	\KwIn{Data $\{(\rx{i}{\bullet}, \ey{i})\}_{i = 1, \dots, 2n}$}
	Split the data into $D_1 \cup D_2$, with $D_1$ and $D_2$ containing $n$ samples each.
	
	Estimate $\E[\ey \mid \rx]$ on $D_2$, call it $\widehat m(\rx)$.
	
	\For{$j \gets 1$ \KwTo $p$}{
		Regress $\widehat m(\rx)$ on $\rx{\mj}$ using $D_2$ to obtain $\widecheck m_j(\rx{\mj})$ and define $\widehat f_j(\rx)  \equiv \widehat m(\rx) - \widecheck m_j(\rx{\mj})$.

		Using $D_1$, regress $\cy$ on $\rx{\mj}$ to obtain an estimate $\widetilde{m}_j(\rx{\mj})$ of $\E[\ey|\rx{\mj}]$.

		Also on $D_1$, regress $\widehat f_j(\rx)$ on $\rx{\mj}$ to obtain $\widehat m_{\widehat f_j} (\rx{\mj})$.

		Compute $T^{\textnormal{vPCM}}_j$ based on equations~\eqref{eq:l-i-j} and~\eqref{eq:vpcm-test-stat}.

		Set $p_j \equiv 1 - \Phi(T^{\textnormal{vPCM}}_j)$.
	}
	\Return $\{p_j\}_{j = 1, \dots, p}$.
	
	\caption{Vanilla PCM}
	\label{alg: pcm}
\end{algorithm}

The primary disadvantage of Algorithm \ref{alg: pcm} is that it requires 3$p$+1 machine learning fits, which we would expect to be computationally difficult when $p$ is large. On the other hand, Algorithm \ref{alg: pcm} does not require any resampling.

\subsection{Holdout Randomization Test} 
In this section, we describe the HRT (Algorithm~\ref{alg: hrt}), which is identical to Algorithm 2 of \citet{Tansey2018} except the estimation of $\law(\rx)$, which the latter authors assumed known. As before, we divide our data into two halves, $D_1$ and $D_2$. On $D_2$, we learn the function $\widehat{m}(\rx) \equiv \widehat{\E}[\ey \mid \rx]$ and the law $\lawhat(\rx)$. On $D_1$, we compute the mean-squared error (MSE) test statistic
\begin{equation}\label{eq:HRT-test}
	T^{\textnormal{HRT}}(\mx, \cy) \equiv \frac{1}{n} \sum_{i=1}^n (\ey{i} - \widehat{m}(\rx{i}{\bullet}))^2.
\end{equation}
Next, we exploit the fact that under the null hypothesis, the conditional distribution $\law(\ex{j} \mid \rx{\mj}, \ey)$ is the same as $\law(\ex{j} \mid \rx{\mj})$, for which we have the estimate $\lawhat(\ex{j}|\rx{\mj})$. Therefore, we can approximate the distribution of $T^{\textnormal{HRT}}(\mx, \cy)$ conditional on $\cy, \mx{\bullet}{\mj},D_2$ by resampling $\exk{i}{j} \indsim \lawhat(\ex{i}{j}|\rx{i}{\mj})$ $B_{\textnormal{HRT}}$ times for each $i=1,\ldots,n$. In particular, we can approximate the following conditional quantile:
\begin{equation*}
	C_j(\cy,\mx{\bullet}{\mj}) \equiv \Q_{\alpha}\left[\frac{1}{n} \sum_{i=1}^n (\ey{i} - \widehat m(\exk{i}{j},\rx{i}{\mj}))^2 \mid \cy, \mx{\bullet}{\mj},D_2\right].
\end{equation*}
Here, and in line 8 of Algorithm~\ref{alg: hrt}, $(\exk{i}{j}, \rx{i}{\mj})$ represents the vector obtained from $\rx{i}{\bullet}$ by replacing the $j$th element with $\exk{i}{j}$. The HRT for predictor $j$ is then defined as
\begin{equation*}
\phi_j^{\textnormal{HRT}}(\mx, \cy) \equiv \mathbbm{1}\left(T^{\textnormal{HRT}}(\mx, \cy) \leq C_j(\cy,\mx{\bullet}{\mj}) \right).
\end{equation*}
The steps are than repeated for each predictor $j = 1,\ldots,p$. Algorithm~\ref{alg: hrt} describes how to compute the HRT $p$-values for each variable. 

\begin{algorithm}[h]
	\SetAlgoLined 
	\KwIn{Data $\{(\rx{i}{\bullet}, \ey{i})\}_{i = 1, \dots, 2n}$, number of resamples $B_{\textnormal{HRT}}$.}
	Split the data into $D_1 \cup D_2$, with $D_1$ and $D_2$ containing $n$ samples each.
	
	Estimate $\E[\ey \mid \rx]$ on $D_2$, call it $\widehat m(\rx)$.

	Estimate $\law(\rx)$ on $D_2$, call it $\lawhat(\rx)$.

	Compute test statistic $T^{\textnormal{HRT}}$ as in equation~\eqref{eq:HRT-test}.
	
	\For{$j \gets 1$ \KwTo $p$}{
		\For{$b \gets 1$ \KwTo $B_{\textnormal{HRT}}$}{
			Sample $\exk{i}{j} \sim \lawhat(\ex{j} \mid \rx{\mj} = \rx{i}{\mj})$ for all $i \in D_1$.

			Compute $\widetilde T_j^b \equiv \frac{1}{n} \sum_{i=1}^n (\ey{i} - \widehat{m}(\exk{i}{j}, \rx{i}{\mj}))^2$.
		}

		Set $p_j \equiv \frac{1}{B_{\textnormal{HRT}} + 1} \left(1 + \sum_{b = 1}^{B_{\textnormal{HRT}}} \mathbbm{1}\left[ T^{\textnormal{HRT}} \leq \widetilde T_j^b \right] \right)$.
	}
	\Return $\{p_j\}_{j = 1, \dots, p}$.
	\caption{Holdout Randomization Test}
	\label{alg: hrt}
\end{algorithm}

The primary disadvantage of Algorithm \ref{alg: hrt} is that it requires $p \times B_{\textnormal{HRT}}$ resamples. When using the Bonferroni correction to control the FWER, $B_{\textnormal{HRT}}$ must be at least $p/\alpha$ for nontrivial power. On the other hand, an attractive property of Algorithm \ref{alg: hrt} is that it requires only two machine learning fits.

\section{Best of both worlds: Tower PCM}\label{sec:doubly robust-tower-PCM}

In this section, we introduce the tower PCM method (Section~\ref{sec:tPCM}), followed by a discussion of its computational and statistical properties (Sections~\ref{sec:tpcm-computation} and~\ref{sec:tpcm-type-i-error-control}, respectively).

\subsection{The tower PCM algorithm} \label{sec:tPCM}

The computational bottleneck in the application of the PCM test (Algorithm~\ref{alg: pcm}) is the repeated application of regressions to obtain $\E[\ey \mid \rx{\mj}]$ for each $j$. Our key observation is that if we compute estimates $\lawhat(\rx)$ and $\widehat m(\rx) \equiv \widehat{\E}[\ey \mid \rx]$ (as in the first two steps of the HRT), then we can construct estimates of $\E[\ey \mid \rx{\mj}]$ for each $j$ without doing any additional regressions. Indeed, note that by the tower property of expectation, we have
\begin{equation*}
\E[\ey \mid \rx{\mj}] \equiv \E_{\law}[m(\rx) \mid \rx{\mj}] \approx \E_{\law}[\widehat m(\rx) \mid \rx{\mj}, D_2] \approx \E_{\lawhat}[\widehat m(\rx) \mid \rx{\mj}, D_2] \equiv \widehat m_j(\rx{\mj}).
\end{equation*}
To compute the quantity $\widehat m_j$, we can use conditional resampling based on $\lawhat(\ex{j} \mid \rx{\mj})$. Unlike the HRT, however, the goal of conditional resampling is to compute expectations rather than tail probabilities, and therefore, much fewer conditional resamples are required. Note that in the case where the $\ex{j}$ are discrete, no resampling is required at all. Equipped with $\widehat m_j$, we can proceed as in the PCM test by computing products of residuals
\begin{equation}
	R_{ij} \equiv (\ey{i}  - \widehat m_j(\rx{i}{\mj}))(\widehat m(\rx{i}{\bullet}) - \widehat{m}_j(\rx{i}{\mj})),
\end{equation}
and constructing the test statistic	
\begin{align}
	T^{\textnormal{tPCM}}_j \equiv \frac{\frac{1}{\sqrt n} \sum_{i=1}^n R_{ij}}{\sqrt{ \frac{1}{n}\sum_{i=1}^n R_{ij}^2 - \left(\frac{1}{n} \sum_{i=1}^n R_{ij} \right)^2}} \equiv \frac{\frac{1}{\sqrt n} \sum_{i=1}^n R_{ij}}{\widehat \sigma_n},
\end{align}
which we expect is asymptotically normal under the null hypothesis. This yields the test
\begin{equation}
\phi_j^{\textnormal{tPCM}}(\mx, \cy) \equiv \mathbbm{1}\left(T^{\textnormal{tPCM}}_j(\mx, \cy) > z_{1-\alpha}\right).
\end{equation}
These steps lead to Algorithm~\ref{alg: tpcm}.

\begin{algorithm}[h]
	\SetAlgoLined 
	\KwIn{Data $\{(\rx{i}{\bullet}, \ey{i})\}_{i = 1, \dots, 2n}$, number of resamples $B_{\textnormal{tPCM}}$.}
	Split the data into $D_1 \cup D_2$, with $D_1$ and $D_2$ containing $n$ samples each.
	
	Estimate $\E[\ey \mid \rx]$ on $D_2$, call it $\widehat m(\rx)$.

	Estimate $\law(\rx)$ on $D_2$, call it $\lawhat(\rx)$.
	
	\For{$j \gets 1$ \KwTo $p$}{
		\eIf{$\ex{j}$ discrete}{Compute $\widehat{m}_j (\rx{i}{\mj}) \equiv \sum_{\exs{j} \in \mathcal{X}_j} \widehat{m} (\exk{i}{j} = \exs{j}, \rx{i}{\mj}) \lawhat(\ex{j} = \exs{j} \mid \rx{i}{\mj})$ for all $i \in D_1$.}{\For{$k \gets 1$ \KwTo $B_{\textnormal{tPCM}}$}{
			Sample $\exk{i}{j} \sim \lawhat(\ex{j} \mid \rx{i}{\mj})$ for all $i \in D_1$.
		}
		Compute $\widehat{m}_j (\rx{i}{\mj}) \equiv \frac{1}{B_{\textnormal{tPCM}}}\sum_{k=1}^{B_{\textnormal{tPCM}}} \widehat{m} (\exk{i}{j}, \rx{i}{\mj}) $ for all $i \in D_1$.}

		Define $R_{ij} \equiv (\ey{i}  - \widehat{m}_j (\rx{i}{\mj}))(\widehat m(\rx{i}{\bullet}) - \widehat{m}_j (\rx{i}{\mj}))$ for $i$ in $D_1$.

		Compute $T^{\textnormal{tPCM}}_j \equiv \frac{\frac{1}{\sqrt n} \sum_{i=1}^n R_{ij}}{\sqrt{ \frac{1}{n}\sum_{i=1}^n R_{ij}^2 - \left(\frac{1}{n} \sum_{i=1}^n R_{ij}\right)^2}}$.

		Set $p_j \equiv 1 - \Phi(T^{\textnormal{tPCM}}_j)$.
	}
	\Return $\{p_j\}_{j = 1, \dots, p}$.
	\caption{Tower PCM}
	\label{alg: tpcm}
\end{algorithm}

\subsection{Computational cost comparison} \label{sec:tpcm-computation}
In this subsection, we compare the computational cost of tPCM to that of PCM and HRT. To this end, we consider the following units of computation, which compose the methods considered (except model-X knockoffs, which involves less standard components):
\begin{enumerate}
	\item \texttt{ML(n$\times$1|n$\times$p)}: Training an ML model to predict a one-dimensional quantity from a $p$-dimensional quantity based on $n$ observations
	\item \texttt{ML(n$\times$p)}: Training an ML model to learn the joint distribution of a $p$-dimensional quantity based on $n$ observations.
	\item \texttt{sample(n$\times$1|n$\times$p)}: Sampling from the conditional distribution of a one-dimensional quantity given a $p$-dimensional quantity for each of $n$ observations, or in the case when the one-dimensional quantity is discrete, computing the conditional probabilities for each of $n$ observations.
	\item \texttt{predict(n$\times$1|n$\times$p)}: Predicting a one-dimensional quantity from a $p$-dimensional quantity for $n$ new data points using a fitted ML model.
\end{enumerate}
For simplicity, we ignore distinctions between $p$- and $(p-1)$-dimensional quantities, and $n$- and $2n$-dimensional quantities. The above quantities are loose proxies for computational cost, but there may be variability in each unit both within and across methods (e.g., fitting $\ey|\rx$ and $\ex{j}|\rx{\mj}$ are both captured by the symbol \texttt{ML(n$\times$1|n$\times$p)}). Table~\ref{tab: computational work} summarizes the units of computation required by each method, taking the special case of binary $\ex{j}$ for simplicity and excluding model-X knockoffs, whose computational cost is harder to quantify in general. For continuous $\ex{j}$, the computational costs stay the same except that for tPCM, \texttt{sample(n$\times$1|n$\times$p)} must be repeated $B_{\text{tPCM}} \cdot p$ times and \texttt{predict(n$\times$1|n$\times$p)} must be repeated $(1 + B_{\text{tPCM}}) \cdot p$ times. These modifications do not change the order of the total computational cost from the binary case. 
\begin{table}[h!]
	\centering
	\begin{tabular}{|l||c|c|c|c|}
		\hline
		& \texttt{ML(n$\times$1|n$\times$p)} & \texttt{ML(n$\times$p)} & \texttt{sample(n$\times$1|n$\times$p)} & \texttt{predict(n$\times$1|n$\times$p)} \\
		\hline
		tPCM & 1 & 1 & $p$ & $3p$ \\
		PCM & $3p + 1$ & 0 & 0 & $2p$ \\
		GCM & $2p$ & 0 & 0 & 0 \\
		HRT & 1 & 1 & $B_{\textnormal{HRT}} \cdot p$ & $B_{\textnormal{HRT}} \cdot p$ \\
		\hline
	\end{tabular}
	\caption{Computational work required by the methods considered, for binary $X_j$.}
	\label{tab: computational work}
\end{table}

Given Table~\ref{tab: computational work}, it is apparent that tPCM has a computational advantage over PCM and GCM in cases where (a) \texttt{ML(n$\times$1|n$\times$p)} is more expensive than \texttt{predict(n$\times$1|n$\times$p)}, (b) \texttt{ML(n$\times$1|n$\times$p)} is more expensive than \texttt{sample(n$\times$1|n$\times$p)}, and (c) \texttt{ML(n$\times$p)} is less expensive than running \texttt{ML(n$\times$1|n$\times$p)} $p$ times. Condition (a) is often true, while conditions (b) and (c) depend on the ML methods and distributions involved, but are often satisfied when $\law(\rx)$ has structure that can be exploited. We provide a concrete setting where tPCM is computationally advantageous in Example~\ref{ex:lasso-hmm}. However, we acknowledge that the advantage is not universal: In general settings where fitting $\law(\rx)$ and/or sampling from $\law(\ex{j} \mid \rx{\mj})$ is computationally intensive, PCM may outperform tPCM. On the other hand, the tPCM is generally less computationally expensive than the HRT, with the difference more pronounced to the extent that the $B_{\text{HRT}} \cdot p$ sampling and prediction steps are a significant portion of the HRT's total computation.

\begin{example} \label{ex:lasso-hmm}
Table~\ref{tab: computational work} provides only a rough accounting of the computational cost of each method, excluding knockoffs. Here, we provide a finer-grained analysis, including knockoffs, in the GWAS-inspired problem setting where $\ey \mid \rx$ is a sparse linear model and $\rx$ is an HMM with binary emissions and $K = O(1)$ hidden states \citep{SetC17}. We consider lasso regressions for all \texttt{ML(n$\times$1|n$\times$p)} steps for all methods via $O(1)$ iterations of coordinate descent \citep{friedman2010regularization} and assume for simplicity that all fitted models have $O(s)$ nonzero coefficients. To fit $\law(\rx)$, we employ $O(1)$ iterations of the Baum-Welch algorithm with forward–backward message passing, and to fit all conditionals, we use the proposal of \citet{Perduca2013}. With these choices, the computational costs of the methods compared are given in Table~\ref{tab:gwas-example-computations} (see Appendix~\ref{sec:computational-cost-appendix} for justifications). We find that tPCM is faster than PCM and GCM by a factor of $p/s$, which can be either a large constant or grow with $p$ depending on the growth of $s$. tPCM is also faster than HRT by a factor of $p$. Finally, knockoffs is the fastest method in this setting. 
\end{example}

\begin{table}[h!]
	\centering
	\begin{tabular}{l|c|c|c|c|c}
		& tPCM & PCM & GCM & HRT & knockoffs \\
		\hline
		Cost & $O(nps)$ & $O(np^2)$ & $O(np^2)$ & $O(np^2s)$ & $O(np)$ \\
	\end{tabular}
	\caption{Computational work required by the methods considered, for binary $X_j$.}
	\label{tab:gwas-example-computations}
\end{table}

\subsection{Type-I error control and equivalence to the oracle test} \label{sec:tpcm-type-i-error-control}

In this section, we establish the Type-I error control of the tPCM test. To this end, we show that the tPCM test is asymptotically equivalent to an oracle test. For the remainder of this section, we focus on the test of $H_{0j}$ for a single predictor $j$, and sometimes omit the index $j$ to lighten the notation. We denote a sequence of null distribution by $\law_n \in \nulllaws_{n,j}$.

To define the oracle test, we begin by defining the residuals 
\begin{equation}
\varepsilon_i \equiv \ey{i} - m(\rx{i}{\mj}) \quad \text{and} \quad \xi_i = \widehat{m}(\rx{i}{\bullet}) - \E_{\law_n} [\widehat{m}(\rx{i}{\bullet}) |\rx{i}{\mj},D_2],
\end{equation}
Note that $\xi_i$ is defined in terms of the estimated $\widehat m$ rather than the true $m$. The ``oracle'' portion consists of access to the true $\law(\rx)$ to compute the conditional expectation term. Letting 
\begin{equation}
\sigma_n^2 \equiv \mathrm{Var}_{\law_n}[\bm{\varepsilon}\bm{\xi} | D_2], 
\end{equation}
the oracle test is defined as
\begin{equation}
\phi_j^{\textnormal{oracle}}(\mx, \cy) \equiv \mathbbm{1}\left(T^{\textnormal{oracle}}_j(\mx, \cy) > z_{1-\alpha}\right), \quad \text{where} \quad T_j^{\textnormal{oracle}} \equiv \frac{1}{\sqrt n \sigma_n} \sum_{i=1}^n\varepsilon_i\xi_i.
\label{eq:oracle-test}
\end{equation}

Next, we define the asymptotic equivalence of two tests $\phi^{(1)}_n, \phi^{(2)}_n: (\mx, \cy) \mapsto [0,1]$ as the statement
$$
\lim_{n\to \infty} \P_{\law_n}[\phi_n^{(1)}(\mx, \cy) \neq \phi_n^{(2)}(\mx, \cy)] = 0.
$$
The following set of properties will ensure the equivalence of $\phi_j^{\text{tPCM}}$ and $\phi_{j}^{\text{oracle}}$. The first condition bounds the conditional variance of the error $\varepsilon_i$.
\paragraph{Bounded Conditional Variance:} 
\begin{equation} \label{eq:var-bounded}
	\begin{aligned}
		\exists c_1 > 0, \quad & \P_{\law_n}\left[ \max_{i \in [n]}\ \mathrm{Var}_{\law_n}(\varepsilon_i | \rx{i}{\mj},D_2) \leq c_1 \right] \to 1.
	\end{aligned}
\end{equation}
The next condition is written in terms of the conditional chi-square divergence
\begin{equation}
\chi^2(P, Q \mid \mathcal F) \equiv \E_{Q}\left[\left(\frac{dP}{dQ}-1\right)^2\mid \mathcal F\right],
\end{equation}
defined for measures $P$ and $Q$ and a $\sigma$-algebra $\mathcal F$. Using the conditional chi-square divergence to measure the error in the conditional distribution $\law_{\ex{j} \mid \rx{\mj}}$, we assume this conditional distribution is consistently estimated in the following sense,
\paragraph{Consistency  of $\lawhat_{\ex{i}{j}|\rx{i}{\mj}}$:}
\begin{gather}
	\label{eq:uniform-bound-div} \P_{\law_n}\left( \max_{i \in [n]}\chi^2\left(\lawhat_{\ex{i}{j}|\rx{i}{\mj}}, \law_{\ex{i}{j}|\rx{i}{\mj}} | D_2\right) < c_3\right) \to 1,\\ 
	\label{eq:consistency-of-chi-square-div}	E^2_{\lawhat,n} \equiv \frac{1}{n\sigma_n^2} \sum_{i=1}^n \chi^2\left(\lawhat_{\ex{i}{j}|\rx{i}{\mj}}, \law_{\ex{i}{j}|\rx{i}{\mj}} | D_2\right) \E_{\law_n}[ \xi_i^2 | \rx{i}{\mj}, D_2] \overset{p}{\to} 0.
\end{gather}
Note that these assumptions are on the entire fitted conditional distribution $\lawhat(\ex{j}|\rx{\mj})$ rather than on its functionals, making them stronger than those required to justify the PCM test \citep{Lundborg2022a}. We conjecture that these assumptions can be weakened, in the spirit of \citet{Katsevich2020a}, but leave this direction to future work. 

Similarly, we assume a consistent estimate of $m(\rx) = m_j(\rx{\mj})$ (this equality holding because we are under the null):
\paragraph{Consistency of $\widehat m$:}
\begin{equation}\label{eq:consitency-of-reg-func}
	(E'_{\widehat m,n})^2 \equiv \frac{1}{n\sigma_n^2}\sum_{i=1}^n\E_{\law_n}[( \widehat{m}(\rx{i}{\bullet}) - m_j(\rx{i}{\mj}))^2  \mid D_2,\rx{i}{\mj}]\E_{\law_n} [\xi_i^2|\rx{i}{\mj},D_2] \overset{p}{\to} 0.
\end{equation}
Also, we define the MSE for $\widehat m$ as follows:
\begin{equation*}
	E^2_{\widehat m,n} = \frac{1}{n\sigma_n^2}\sum_{i=1}^n\E_{\law_n}[( \widehat{m}(\rx{i}{\bullet}) - m_j(\rx{i}{\mj}))^2  \mid D_2,\rx{i}{\mj}], 
\end{equation*}
and assume a doubly robust type assumption which states 
\paragraph{Double Robustness condition:} 
\begin{align}\label{eq:doubly robust-rate}
	E_{\lawhat,n} \cdot E_{\widehat m,n} = o_p(n^{-1/2}).
\end{align}  
Finally, we assume the following Lyapunov-type condition,
\paragraph{Moment Condition:}
\begin{gather}\label{eq:CLT-condition}
	\frac{1}{\sigma_n^{2+\delta}}\E_{\law_n}\left[|\varepsilon\xi|^{2+\delta} \mid D_2\right] = o_P(n^{\delta/2}),
\end{gather}
The following theorem establishes the asymptotic validity of our proposed test under the aforementioned assumptions:
\begin{theorem}\label{thm:tower-pcm-type-I-error}
Let $\mathscr{R}_n$ be a regularity class such that the assumptions~\eqref{eq:var-bounded}, \eqref{eq:uniform-bound-div}-\eqref{eq:CLT-condition} are satisfied for any sequence $\law_n \in \nulllaws_n \cap \mathscr R_n$. Then, $\phi_j^{\textnormal{tPCM}}$ is asymptotically equivalent to $\phi_j^{\textnormal{oracle}}$. Additionally, tPCM is asymptotically uniformly size $\alpha$:
	$$
	\underset{n\to \infty}{\lim \sup}\sup_{\law_n \in \nulllaws_{n,j} \cap \mathscr{R}_n}\E_{\law_n}\left[\phi_j^{\textnormal{tPCM}}(\mx, \cy)\right] \to \alpha.
	$$
	
\end{theorem}

Two examples of settings where the assumptions of Theorem~\ref{thm:tower-pcm-type-I-error} are satisfied are given in Appendix~\ref{sec:examples-type-1-error}. All proofs are deferred to Appendix~\ref{sec:proofs}.

\section{Equivalence of tPCM with existing methods}\label{sec:asym-equiv-tower-PCM}

In this section, we will show that tPCM is asymptotically equivalent to vPCM (Section~\ref{sec:asymp-equivalence-vPCM-tPCM}) and HRT (Section~\ref{sec:asymp-equiv-HRT-PCM}).

\subsection{Asymptotic equivalence of vPCM and tPCM} \label{sec:asymp-equivalence-vPCM-tPCM}

To show the equivalence of tPCM and vPCM, we will show that the latter method is equivalent to the oracle test $\phi_j^{\textnormal{oracle}}$ defined in equation~\eqref{eq:oracle-test}, which we have shown is equivalent to tPCM (Theorem~\ref{thm:tower-pcm-type-I-error}). The conditions under which vPCM is equivalent to the oracle test echo those under which \citet{Lundborg2022a} showed that PCM controls type-I error. Define the in-sample MSE for the two regressions $\widetilde m_j$ and $\widehat{m}_{\widehat f_j}$ as follows:
\begin{gather*}
	\mathcal{E}_{\widetilde m} = \frac{1}{n} \sum_{i=1}^n (\widetilde m_j(\rx{i}{\mj}) - m_j(\rx{i}{\mj}))^2,\quad
	\mathcal{E}_{\widehat m_{\widehat f}} = \frac{1}{n\sigma_n^2} \sum_{i=1}^n (\widehat m_{\widehat f_j}(\rx{i}{\mj}) - m_{\widehat f_j}(\rx{i}{\mj}))^2.
\end{gather*}
We assume the following consistency conditions for the regression functions $\widetilde m_j$ and $\widehat m_{\widehat f_j}$:
\begin{gather}\label{eq:consistency-of-reg-of-Y-on-Z}
	\frac{1}{n\sigma_n^2}\sum_{i=1}^n (\widetilde m_j(\rx{i}{\mj}) - m_j(\rx{i}{\mj}))^2 \E[\xi_i^2\mid \rx{i}{\mj}] \overset{p}{\to} 0.\\
	\label{eq:consistency-of-reg-of-f-hat-on-Z}
	\frac{1}{n\sigma_n^2} \sum_{i=1}^n (\widehat m_{\widehat f_j}(\rx{i}{\mj}) - m_{\widehat f_j}(\rx{i}{\mj}))^2\E[\varepsilon^2_i \mid \rx{i}{\mj}] \overset{p}{\to} 0.
\end{gather} 	
We also assume a doubly robust condition on the product of MSEs:
\begin{equation}\label{eq:doubly robust-condition-vPCM}
	\mathcal{E}_{\widetilde m} \cdot \mathcal{E}_{\widehat m_{\widehat f}}  = o_p(n^{-1})
\end{equation}

\begin{theorem}\label{thm:equivalence-of-vPCM-oracle}
	Suppose $\law_n \in \nulllaws_n$ is a sequence of laws satisfying~\eqref{eq:CLT-condition}, \eqref{eq:consistency-of-reg-of-Y-on-Z}, \eqref{eq:consistency-of-reg-of-f-hat-on-Z}, and~\eqref{eq:doubly robust-condition-vPCM}. Then the test $\phi_j^{\textnormal{vPCM}}$ is asymptotically equivalent to the oracle test  $\phi_j^{\textnormal{oracle}}$.
\end{theorem}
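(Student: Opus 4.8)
The plan is to show that the vanilla PCM statistic and the oracle statistic differ by $o_p(1)$ under the null, and then to promote this to equivalence of the two tests using the asymptotic normality of $T_j^{\textnormal{oracle}}$. Work throughout under $\law_n \in \nulllaws_n$, so that $m(\prx) = m_j(\prx_{\minus j})$ and, writing $\varepsilon_i = \sry_i - m_j(\srx_{i,\minus j})$ and $\xi_i = \widehat m(\srx_i) - \E_\law[\widehat m(\srx_i) \mid \srx_{i,\minus j}, D_2]$ as in~\eqref{eq:oracle-test}, the first step is the pair of exact identities, valid for $i \in D_1$,
\begin{equation*}
\sry_i - \widetilde m_j(\srx_{i,\minus j}) = \varepsilon_i - \Delta_i^{(1)}, \qquad \widehat f_j(\srx_i) - \widehat m_{\widehat f_j}(\srx_{i,\minus j}) = \xi_i - \Delta_i^{(2)},
\end{equation*}
where $\Delta_i^{(1)} \equiv \widetilde m_j(\srx_{i,\minus j}) - m_j(\srx_{i,\minus j})$ and $\Delta_i^{(2)} \equiv \widehat m_{\widehat f_j}(\srx_{i,\minus j}) - m_{\widehat f_j}(\srx_{i,\minus j})$. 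The second identity uses that the $D_2$-regression term $\widecheck m_j(\srx_{i,\minus j})$ enters both $\widehat f_j(\srx_i)$ and its conditional mean $m_{\widehat f_j}(\srx_{i,\minus j})$ and hence cancels, leaving $\widehat f_j(\srx_i) - m_{\widehat f_j}(\srx_{i,\minus j}) = \widehat m(\srx_i) - \E_\law[\widehat m(\srx_i) \mid \srx_{i,\minus j}, D_2] = \xi_i$. Substituting into~\eqref{eq:l-i-j} gives the clean decomposition $L_{ij} = (\varepsilon_i - \Delta_i^{(1)})(\xi_i - \Delta_i^{(2)})$.

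Next I would control the numerator. Expanding,
\begin{equation*}
\frac{1}{\sqrt n}\sum_{i=1}^n L_{ij} = \frac{1}{\sqrt n}\sum_{i=1}^n \varepsilon_i \xi_i \;-\; \frac{1}{\sqrt n}\sum_{i=1}^n \varepsilon_i \Delta_i^{(2)} \;-\; \frac{1}{\sqrt n}\sum_{i=1}^n \Delta_i^{(1)} \xi_i \;+\; \frac{1}{\sqrt n}\sum_{i=1}^n \Delta_i^{(1)} \Delta_i^{(2)},
\end{equation*}
whose first term equals $\sigma_n T_j^{\textnormal{oracle}}$. For the last term, Cauchy--Schwarz together with the identities $\frac1n\sum_i (\Delta_i^{(1)})^2 = \mathcal E_{\widetilde m}$ and $\frac1n\sum_i (\Delta_i^{(2)})^2 = \sigma_n^2\,\mathcal E_{\widehat m_{\widehat f}}$ bounds it by $\sqrt n\,\sigma_n\,(\mathcal E_{\widetilde m}\,\mathcal E_{\widehat m_{\widehat f}})^{1/2} = \sigma_n \cdot o_p(1)$ by the doubly robust condition~\eqref{eq:doubly-robust-condition-vPCM}. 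For the two cross terms I would condition, exploiting that under $H_{0j}$ the conditional law of $\sry_i$ given $\srx_i$ depends only on $\srx_{i,\minus j}$. Concretely: condition on $D_2$ together with all predictor vectors $\srx_i$ in $D_1$; then $\Delta_i^{(2)}$ becomes measurable (it is built from $\widehat m$ evaluated at the $\srx_i$'s), while $\varepsilon_i$ stays conditionally mean-zero and independent across $i$ with conditional second moment $\E_\law[\varepsilon_i^2 \mid \srx_i] = \E_\law[\varepsilon_i^2 \mid \srx_{i,\minus j}]$, so the conditional second moment of $\frac1{\sqrt n\,\sigma_n}\sum_i \varepsilon_i \Delta_i^{(2)}$ equals the left-hand side of~\eqref{eq:consistency-of-reg-of-f-hat-on-Z}, which $\convp 0$, and the term is $o_p(1)$ by conditional Chebyshev. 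Symmetrically, conditioning on $D_2$, all $\srx_{i,\minus j}$ and all $\sry_i$ in $D_1$ makes $\Delta_i^{(1)}$ measurable and leaves $\xi_i$ conditionally mean-zero and independent with conditional second moment $\E_\law[\xi_i^2 \mid \srx_{i,\minus j}, D_2]$, so~\eqref{eq:consistency-of-reg-of-Y-on-Z} gives $\frac1{\sqrt n\,\sigma_n}\sum_i \Delta_i^{(1)} \xi_i = o_p(1)$. Altogether, $\frac1{\sqrt n\,\sigma_n}\sum_i L_{ij} = T_j^{\textnormal{oracle}} + o_p(1)$.

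For the denominator I would show $\bigl(\frac1n\sum_i L_{ij}^2 - (\frac1n\sum_i L_{ij})^2\bigr)/\sigma_n^2 \convp 1$. The squared-mean piece is $\frac1n\bigl(\frac1{\sqrt n}\sum_i L_{ij}\bigr)^2 = \frac1n\sigma_n^2\, O_p(1) = o_p(\sigma_n^2)$ by the numerator bound. For $\frac1n\sum_i L_{ij}^2$, the leading piece satisfies $\frac1{n\sigma_n^2}\sum_i (\varepsilon_i\xi_i)^2 \convp 1$: since $\E_\law[\varepsilon_i\xi_i \mid D_2] = 0$ under $H_{0j}$ we have $\sigma_n^2 = \E_\law[(\varepsilon_i\xi_i)^2 \mid D_2]$, so this is a weak law of large numbers for a triangular array that is i.i.d.\ given $D_2$ with conditional mean one, whose fluctuations are controlled by the $(2+\delta)$-th moment bound~\eqref{eq:CLT-condition} (a truncation plus conditional Chebyshev argument); the remaining, $\Delta$-dependent terms of $\frac1n\sum_i L_{ij}^2$ are $o_p(\sigma_n^2)$ by Cauchy--Schwarz against $\frac1{n\sigma_n^2}\sum_i(\varepsilon_i\xi_i)^2 = O_p(1)$ together with the conditional-moment estimates above and~\eqref{eq:doubly-robust-condition-vPCM}. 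Hence the empirical denominator divided by $\sigma_n$ is $1 + o_p(1)$, and combining with the numerator bound, $T_j^{\textnormal{vPCM}} = T_j^{\textnormal{oracle}} + o_p(1)$. Finally, $T_j^{\textnormal{oracle}} \convd N(0,1)$ under~\eqref{eq:CLT-condition} by the Lyapunov CLT (exactly as established in the proof of Theorem~\ref{thm:tower-pcm-type-I-error}), so the limit has a density and $\P[\,|T_j^{\textnormal{oracle}} - z_{1-\alpha}| \le \eta\,] \to \P[\,|N(0,1) - z_{1-\alpha}| \le \eta\,] \downarrow 0$ as $\eta \downarrow 0$; a standard anti-concentration argument then upgrades $T_j^{\textnormal{vPCM}} - T_j^{\textnormal{oracle}} = o_p(1)$ to $\P[\phi_j^{\textnormal{vPCM}} \neq \phi_j^{\textnormal{oracle}}] \to 0$.

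I expect the denominator step to be the main obstacle: the triangular-array weak law of large numbers for $\frac1{n\sigma_n^2}\sum_i(\varepsilon_i\xi_i)^2$ must be run with a \emph{random} normalization $\sigma_n^2$ that is only assumed to obey the Lyapunov bound~\eqref{eq:CLT-condition}, and the expansion of $\frac1n\sum_i L_{ij}^2$ produces several mixed error terms that must all be driven to $o_p(\sigma_n^2)$ using~\eqref{eq:consistency-of-reg-of-Y-on-Z}--\eqref{eq:doubly-robust-condition-vPCM} in concert. The conceptual crux is lighter but must be handled with care: because $\widetilde m_j$ and $\widehat m_{\widehat f_j}$ are fit on the same fold $D_1$ over which the test statistic is summed, $\Delta_i^{(1)}$ and $\Delta_i^{(2)}$ are not independent of $\varepsilon_i$ and $\xi_i$, so they cannot simply be treated as frozen; the resolution is the pair of conditioning arguments above, which work precisely because under the null $H_{0j}$ the $\srx_j$-coordinate and the conditional law of $\sry$ given $\srx_{\minus j}$ decouple sample by sample.
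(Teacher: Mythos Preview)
Your proposal is correct and follows essentially the same route as the paper: the identical decomposition $L_{ij}=(\varepsilon_i-\Delta_i^{(1)})(\xi_i-\Delta_i^{(2)})$ (the paper writes $u_i,v_i$ for your $\Delta_i^{(1)},\Delta_i^{(2)}$), the same four-term numerator expansion handled by Cauchy--Schwarz plus conditional second-moment bounds, the same nine-term expansion of $\tfrac1{n\sigma_n^2}\sum_i L_{ij}^2$ with the leading $S_1$ controlled by a WLLN under~\eqref{eq:CLT-condition}, and the same anti-concentration upgrade at the end (the paper's Lemma~\ref{lemma:asymptotic-equivalence-of-tests}). If anything, your conditioning arguments for the two cross terms are a bit more explicit than the paper's, since $\widehat m_{\widehat f_j}$ and $\widetilde m_j$ are fit on $D_1$ and hence are not $\sigma(\srx_{\minus j},D_2)$-measurable; your choice to condition on all of $\srx$ (respectively on $\srx_{\minus j}$ and $\sry$) and then use the null to identify $\E[\varepsilon_i^2\mid\srx_i]=\E[\varepsilon_i^2\mid\srx_{i,\minus j}]$ is the right way to make this precise.
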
	

Combining this result with that of Theorem~\ref{thm:tower-pcm-type-I-error}, we obtain the following corollary.

\begin{corollary}
	Let $\law_n \in \nulllaws_n$ be a sequence of laws satisfying~\eqref{eq:var-bounded}, \eqref{eq:uniform-bound-div}, \eqref{eq:consistency-of-chi-square-div}, \eqref{eq:consitency-of-reg-func}, \eqref{eq:doubly robust-rate}, \eqref{eq:CLT-condition}, \eqref{eq:consistency-of-reg-of-Y-on-Z}, \eqref{eq:consistency-of-reg-of-f-hat-on-Z}, and \eqref{eq:doubly robust-condition-vPCM}. For any sequence $\law_n'$ of alternative distributions contiguous to the sequence $\law_n$, we have that $\phi_n^{\textnormal{vPCM}}$ is equivalent to $\phi_n^{\textnormal{tPCM}}$ against $\law_n'$ i.e.
	$$
	\lim _{n \rightarrow \infty} \mathbb{P}_{\law_n'}\left[\phi_j^{\textnormal{vPCM}}(\mx, \cy)=\phi_j^{\textnormal{tPCM}}(\mx, \cy)\right]=1 .
	$$
	In particular, these two tests have the same limiting power:
	$$
	\lim _{n \rightarrow \infty} \left\{\mathbb{E}_{\law_n'}\left[\phi_j^{\textnormal{vPCM}}(\mx, \cy)\right] - \mathbb{E}_{\law_n'}\left[\phi_j^{\textnormal{tPCM}}(\mx, \cy)\right]\right\} = 0.
	$$
\end{corollary}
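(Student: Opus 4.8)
The plan is to combine the two asymptotic-equivalence results already established and then transport the conclusion from the null sequence $\law_n$ to the contiguous alternative sequence $\law_n'$ using the definition of contiguity. Write $A_n \equiv \{(\srx,\sry): \phi_j^{\textnormal{vPCM}}(\srx,\sry) \neq \phi_j^{\textnormal{tPCM}}(\srx,\sry)\}$; this is a fixed measurable set (a function of the data only), so it makes sense to evaluate its probability under either $\law_n$ or $\law_n'$, and the whole argument reduces to controlling $\P_{\law_n'}[A_n]$.

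First I would work under $\law_n$. Since the hypotheses of the corollary include all of those of Theorem~\ref{thm:tower-pcm-type-I-error} and all of those of Theorem~\ref{thm:equivalence-of-vPCM-oracle}, both theorems apply along $\law_n$, giving $\P_{\law_n}[\phi_j^{\textnormal{tPCM}} \neq \phi_j^{\textnormal{oracle}}] \to 0$ and $\P_{\law_n}[\phi_j^{\textnormal{vPCM}} \neq \phi_j^{\textnormal{oracle}}] \to 0$. A union bound then gives $\P_{\law_n}[A_n] \le \P_{\law_n}[\phi_j^{\textnormal{vPCM}} \neq \phi_j^{\textnormal{oracle}}] + \P_{\law_n}[\phi_j^{\textnormal{oracle}} \neq \phi_j^{\textnormal{tPCM}}] \to 0$; that is, vPCM and tPCM agree with probability tending to one under the null sequence.

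Next I would invoke contiguity: because $\law_n'$ is contiguous to $\law_n$, any sequence of events whose $\law_n$-probability tends to zero also has $\law_n'$-probability tending to zero. Applying this to $A_n$ yields $\P_{\law_n'}[A_n] \to 0$, which is exactly $\lim_{n\to\infty}\P_{\law_n'}[\phi_j^{\textnormal{vPCM}}(\srx,\sry) = \phi_j^{\textnormal{tPCM}}(\srx,\sry)] = 1$. The claim about limiting powers then follows from the elementary bound $\bigl|\E_{\law_n'}[\phi_j^{\textnormal{vPCM}}] - \E_{\law_n'}[\phi_j^{\textnormal{tPCM}}]\bigr| \le \E_{\law_n'}\bigl[|\phi_j^{\textnormal{vPCM}} - \phi_j^{\textnormal{tPCM}}|\bigr] \le \P_{\law_n'}[A_n] \to 0$, using that both tests take values in $[0,1]$, so that $|\phi_j^{\textnormal{vPCM}} - \phi_j^{\textnormal{tPCM}}| \le \indicator_{A_n}$ pointwise.

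The argument is short, and I do not anticipate any genuine obstacle: the substance is a triangle inequality for the disagreement probabilities under the null, followed by a single application of the contiguity hypothesis to pass to the alternative. The only point worth a word of care is that Theorem~\ref{thm:tower-pcm-type-I-error} is phrased for a fixed null law, whereas here it is used along the sequence $\law_n$; this is harmless, since every hypothesis of that theorem is itself an asymptotic statement about $\law_n$ (convergences in probability, $o_p$ bounds, and events of probability tending to one), so its proof applies verbatim to a triangular array.
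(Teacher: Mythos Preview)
Your proposal is correct and matches the paper's intended argument: the paper presents this as an immediate corollary of combining Theorem~\ref{thm:tower-pcm-type-I-error} and Theorem~\ref{thm:equivalence-of-vPCM-oracle} without giving a separate proof, and your triangle-inequality-plus-contiguity argument is precisely how one fills in that gap. The observation about the triangular-array formulation of Theorem~\ref{thm:tower-pcm-type-I-error} is a nice point of care but, as you note, not an obstacle.
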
 	

Despite equivalence of vPCM and tPCM, we highlight an important distinction between these two methods. tPCM exclusively employs out-of-sample regressions, where the regressions are conducted on a different dataset from which the test statistic is evaluated. In contrast, vPCM utilizes both in-sample and out-of-sample regressions. As was pointed out by \citet{Lundborg2022a}, relying on in-sample regressions can be advantageous in finite samples. Nevertheless, the effects of this distinction vanish asymptotically. 

\subsection{Asymptotic Equivalence of HRT and tPCM}\label{sec:asymp-equiv-HRT-PCM}

We now show that the HRT is asymptotically equivalent to tPCM. While tPCM relies on a central limit theorem for a test statistic with an explicit normalizing factor, HRT uses a resampling-based null distribution. Despite their conceptual differences, we prove that their test statistics and rejection thresholds converge to the same limit under mild regularity conditions. The key insight is that the HRT test statistic can be decomposed into a leading term that matches the tPCM statistic, plus remainder terms that vanish asymptotically. Furthermore, the HRT's resampling-based cutoff converges to the standard normal quantile used by tPCM. See Section~\ref{sec:hrt-pcm-details}. We now list the technical assumptions required to control the remainder terms and establish equivalence.

\vspace{1mm}
\noindent \textbf{Assumptions controlling higher-order terms:}
\begin{gather}
	\label{eq:variance-of-m-hat-given-Z}
	\frac{1}{\sigma_n^2} \E\left[ \V\left(\widehat \xi^2 \mid \rx{\mj}, D_2\right)\mid D_2\right] = o_p(1).\\
	\label{eq:tower-regression-rate-assumption}
	\frac{1}{\sqrt n\sigma_n}\sum_{i=1}^n (\widehat m_j(\rx{i}{\mj}) - \E\left[\widehat m(\rx{i}{\bullet}) \mid \rx{i}{\mj}, D_2\right])^2 \overset{p}{\to} 0.\\
	\label{eq:variance-of-xi-rate-assumption}
	\frac{1}{\sqrt n \sigma_n} \sum_{i=1}^n\left(\V_{\lawhat}[\sxi_i \mid \rx{i}{\mj}, D_2] - \V_{\law}[\sxi_i \mid \rx{i}{\mj}, D_2]\right)\overset{p}{\to} 0.
\end{gather}

\vspace{-2mm}
\noindent \textbf{Assumptions for HRT cutoff convergence:}
\begin{gather}
	\label{eq:regression-of-Y-on-Z-is-consistent-wrt-L-hat} \frac{1}{n \sigma^2_n} \sum_{i=1}^n(\widehat m_j(\rx{i}{\mj})-m_j(\rx{i}{\mj}))^2 \E( \widetilde\sxi^2_i \mid \rx{i}{\mj},D_2)  \overset{p}{\to} 0.\\
	\label{eq:variance-consistency-condition}
	\frac{1}{n\sigma_n^2} \sum_{i=1}^n \left(\V_{\lawhat}[\sxi_i \mid \rx{i}{\mj}, D_2] - \V_{\law}[\sxi_i \mid \rx{i}{\mj}, D_2] \right)\E(\seps_i^2 \mid \rx{i}{\mj}) \overset{p}{\to} 0.\\
	\label{eq:variance-of-xi-hat-goes-to-zero}\frac{1}{{\sigma_n^2}}\E\left[\V(\widetilde{\pxi}^2 \mid \rx{\mj}, D_2)\mid D_2\right] \overset{p}{\to} 0.	
\end{gather}
\vspace{-4mm}
\noindent \textbf{Moment condition:}
\begin{gather}
	\label{eq:moment-condition-on-product-of-residuals-hat} 
	\frac{1}{\sigma_n^{2+\delta}}\E( |\peps\widetilde{\pxi}|^{2+\delta} \mid D_2) = o_p(n^\delta).
\end{gather}

These assumptions ensure that the remainder terms in the HRT statistic and the randomness in its resampling-based cutoff vanish asymptotically.
Assumptions~\eqref{eq:variance-of-m-hat-given-Z}--\eqref{eq:variance-of-xi-rate-assumption} control the quality of the estimated resampling distribution. Specifically, \eqref{eq:tower-regression-rate-assumption} ensures consistent estimation of the tower regression, and \eqref{eq:variance-of-xi-rate-assumption} guarantees stable variance estimation under the learned distribution.
The cutoff-related conditions~\eqref{eq:regression-of-Y-on-Z-is-consistent-wrt-L-hat}--\eqref{eq:variance-of-xi-hat-goes-to-zero} ensure that the HRT quantile converges to the standard normal cutoff. Finally, the moment condition~\eqref{eq:moment-condition-on-product-of-residuals-hat} ensures that fluctuations from heavy-tailed residuals remain controlled, enabling a valid CLT.
Together, these conditions imply that the HRT and tPCM tests rely on asymptotically equivalent statistics and thresholds.

\vspace{1mm}
\noindent We now state the main result:

\begin{theorem}\label{thm:HRT-PCM-equivalence} Suppose $\law_n \in \nulllaws_n$ is a sequence of laws satisfying the assumptions of Theorem~\ref{thm:tower-pcm-type-I-error}, as well as conditions \eqref{eq:variance-of-m-hat-given-Z}--\eqref{eq:moment-condition-on-product-of-residuals-hat}. Then, the HRT test is equivalent to the tPCM  test against $\law_n$.
\end{theorem}

\noindent Two examples of settings where the assumptions of Theorem~\ref{thm:HRT-PCM-equivalence} are satisfied are given in Section~\ref{sec:examples-HRT-PCM-equivalence}. One consequence of this theorem is the Type-I error control of the HRT beyond the model-X assumption. 

\begin{corollary} \label{cor:hrt-type-I-error-control}
	For a sequence of null laws $\law_n \in \nulllaws_n$ satisfying the assumptions of Theorem~\ref{thm:HRT-PCM-equivalence}, the HRT is asymptotically size $\alpha$.
\end{corollary}

\noindent Another consequence of Theorem~\ref{thm:HRT-PCM-equivalence} is that HRT and tPCM are equivalent under contiguous alternatives, and therefore have equal asymptotic power against contiguous alternatives. 

\begin{corollary}
	If $\law_n'$ is a sequence of alternative distributions contiguous to a sequence \( \law_n \) in \( \nulllaws_n \) satisfying the assumptions of Theorem \ref{thm:HRT-PCM-equivalence}, then the HRT and tPCM tests are asymptotically equivalent against \( \law_n' \). Furthermore, they have equal asymptotic power against $\law_n'$:
	\begin{equation}
		\lim _{n \rightarrow \infty}\left\{\E_{\law_n'}[\phi_j^{\textnormal{HRT}}(\mx, \cy)] - \E_{\law_n'}[\phi_j^{\textnormal{tPCM}}(\mx, \cy)]\right\} = 0.
	\end{equation}
\end{corollary}

By constructing a null distribution through resampling, the HRT accommodates arbitrarily complex machine learning methods for constructing test statistics, whose asymptotic distributions may not be known. However, we find that after appropriate scaling and centering, the resampling-based null distribution essentially replicates the asymptotic normal distribution utilized by the PCM test. Therefore, when testing a single hypothesis in large samples, the additional computational burden of resampling is unnecessary, as the equivalent PCM test can be applied instead. When dealing with a large number of samples and multiple hypotheses, the tPCM test becomes the natural candidate, combining the best aspects of the existing methodologies. For a small number of samples, the HRT remains an attractive option, as it does not rely on asymptotic normality.

	\section{Finite-sample assessment}\label{sec:sim-study}
	In this section, we investigate the finite-sample performance of tPCM with a simulation-based assessment of Type-I error,
	power, and computation time. We deployed all methods to control the FDR at level $\alpha = 0.1$. We consider a nonlinear, interacted model specification for the distribution of $\ey \mid \rx$, and an HMM specification for the distribution of $\rx$. The goal of the simulation is to corroborate the findings of the previous sections: (1) tPCM is computationally efficient, (2) tPCM controls the Type-I error, and (3) tPCM is as powerful as HRT and PCM. To highlight tPCM's versatility, we complement this simulation with another (Appendix \ref{sec:gam-banded-sim-results}) in which $\ey \mid \rx$ follows a generalized additive model, $\rx$ is drawn from a multivariate normal with a banded precision matrix, and the type-I error metric is the FWER. Code to reproduce the simulations in this section and the real data analysis in the next is available at \url{https://github.com/Katsevich-Lab/symcrt2-manuscript}.
	
	\subsection{Data-generating model}

	We pick $s$ of the $p$ variables to be nonnull at random. Let $\mathcal{S}$ denote the set of nonnulls. The data-generating model for $\ey \mid \rx$ is as follows:
	\begin{equation*}
	 \law_n(\ey \mid \rx) = N\left(\theta \cos\left(\sum_{j \in \text{nonnulls}} \ex{j}  + \sum_{j \neq k \in \text{nonnulls}} 0.2 \ex{j} \ex{k}\right),1 \right).
	\end{equation*}
	Meanwhile, the data-generating model for $\rx$ follows an HMM with binary observations and 5 hidden states. The transition probabilities are defined as follows: the last hidden state is absorbing, and the rest have probability stay\_prob of staying in the same state, and $1 - \text{stay\_prob}$ of moving up one state. The emission probabilities are defined as follows: the first state emits 0 or 1 with equal probability, while the rest emit zero with probability 0.9. Only the stay\_prob parameter is varied. Therefore, the entire data-generating process is parameterized by the five parameters $(n, p, s, \text{stay\_prob}, \theta)$; see Table \ref{tab:rf sim parameters}. In this section, we let $n$ denote the \emph{total} sample size, i.e. the combined size of $D_1$ and $D_2$. We vary each of the five parameters across five values each, setting the remaining to the default values (in bold).

	\begin{table}[h!]
		\centering
		\begin{tabular}{ccccc}
			\hline
			$n$ & $p$ & $s$ & stay\_prob & $\theta$ \\
			\hline
			2000 & 30 & 12 & 0.35 & 0.7 \\
			2250 & 40 & 16 & 0.425 & 0.75 \\
			\textbf{2500} & \textbf{50} & \textbf{20} & \textbf{0.5} & \textbf{0.8} \\
			2750 & 60 & 24 & 0.575 & 0.85 \\
			3000 & 70 & 28 & 0.65 & 0.9 \\
			\hline
		\end{tabular}
		\caption{The values of the sample size $n$, covariate dimension $p$, sparsity $s$, stay probability $\rho$, and signal strength $\theta$ used for the
		simulation study. Each of the parameters $n$, $p$, $s$, $\rho$, $\theta$ was varied among the values displayed in the
		table while keeping the other four at their default values, indicated in bold. For example, $p = 50$, $s = 20$, $\text{stay\_prob}= 0.5$, $\theta = 0.8$ were kept fixed while varying $n \in \{2000, 2250, 2500, 2750, 3000\}$.}
		\label{tab:rf sim parameters}
	\end{table}

	\subsection{Methodologies compared}

	We applied the five methods tPCM, HRT, vPCM (henceforth ``PCM''), oracle GCM, and model-X knockoffs. The first four were paired with the Benjamini-Hochberg (BH) procedure at level $\alpha = 0.1$ to control the FDR. For all methods except oracle GCM, quantities such as $\E[\ey \mid \rx]$ and $\E[\ey \mid \rx{\mj}]$ were fit using random forests; for oracle GCM, the true quantities were used for maximum power. tPCM and HRT exploited knowledge of the HMM structure and so $\law(\rx)$ was fit using a method designed for HMMs. The knockoffs implementation used the default random forest variable importance statistic built into \verb|knockoff| and HMM sampler from \verb|SNPknock|. We defer the remaining details to Appendix~\ref{sec:simulation-study-details}, and justify the omission of two additional methods in Section~\ref{sec:omission-justification}.

	\subsection{Simulation results}

	Power, runtime and FDR results for the primary simulation are presented in Figure \ref{fig:power-rf}, \ref{fig:computation-rf} (left), and \ref{fig:fdr-rf}, respectively. For the additional simulation, see Figure~\ref{fig:computation-rf} (right) as well as Figures~\ref{fig:fwer-gam} and~\ref{fig:power-gam} in Appendix~\ref{sec:gam-banded-sim-results}. Note that knockoffs was excluded from the additional simulation because it is based on FWER control, which knockoffs is not equipped to control. Based on these two simulation studies, we make the following observations:

	\begin{itemize}

		\item \textbf{Type-I error:} All methods control the Type-I error rates, indicating that in these settings, $\law(\ey \mid \rx)$ and $\law(\rx)$ are learned sufficiently well.
		\item \textbf{Power:}  In both settings, PCM, tPCM, and HRT are roughly tied for the highest power, although the power of PCM is slightly lower in the primary simulation, likely due to its not fully exploiting the HMM structure in learning the nuisances (it is difficult to do so within PCM). Oracle GCM has significantly lower power because the projection of the true alternative onto the alternative direction it is powerful against is small (recall Section~\ref{sec:desirable-properties} and Figure~\ref{fig:1d-alternatives}). In the primary simulation, knockoffs also has noticeably lower power than the top three methods, likely due to its test statistic (the \verb|knockoff| package default for random forests) not being as sensitive as those used by tPCM, PCM, and HRT.
		\item \textbf{Computation:} Among the three most powerful methods, tPCM is the fastest across both simulations. The acceleration relative to the next closest method, PCM, ranges from about 10$\times$ (Figure~\ref{fig:computation-rf}, left) to about 130$\times$ (Figure~\ref{fig:computation-rf}, right). The acceleration relative to HRT is even more dramatic, ranging from about 30$\times$ to about 140$\times$. Considering tPCM and PCM, we find in the primary simulation that the scaling of logarithmic runtime with $p$ occurs at approximately the same rate across the two methods, suggesting that tPCM offers a constant factor speedup. In the additional simulation, the logarithmic runtime of tPCM scales less steeply with $p$ than that of PCM, suggesting computational savings as a power of $p$. 
 	\end{itemize}

	\begin{figure}[H]
		\centering
		\includegraphics[width = 0.95\textwidth]{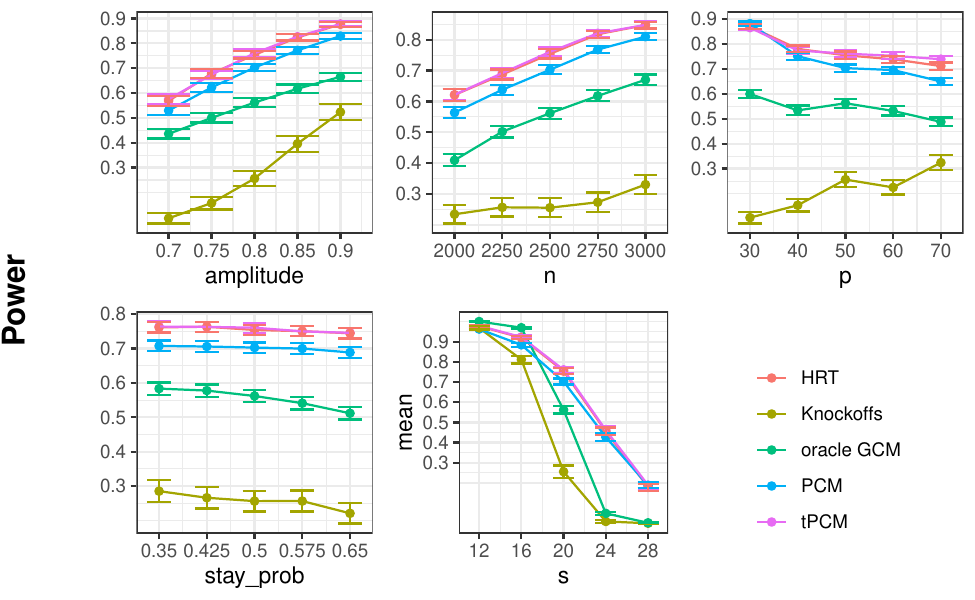}
		\caption{Power: in each plot, we vary one parameter. Each point is the average of 400 Monte Carlo replicates, and the error bars are the average $\pm 2 \times \widehat{\sigma}_p$, where $\widehat{\sigma}_p$ is the Monte Carlo standard deviation divided by $\sqrt{400}$.}
		\label{fig:power-rf}
	\end{figure}

	\begin{figure}[H]
		\centering
		\includegraphics{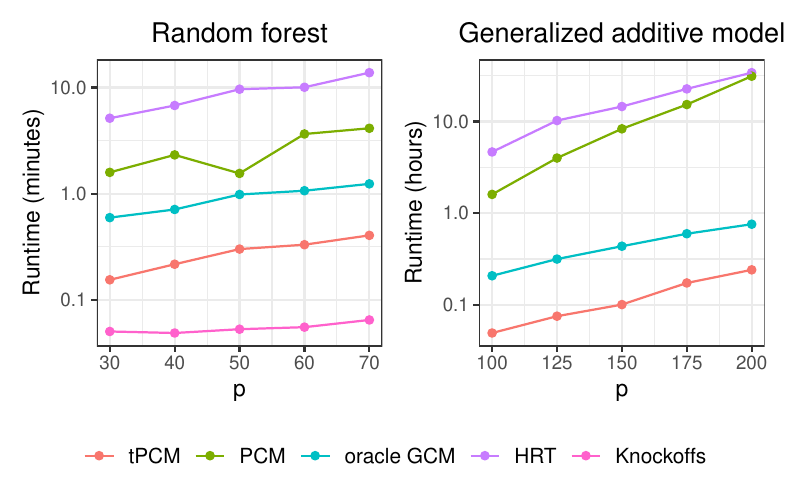}
		\caption{Runtime as a function of dimension in the primary simulation (left) and the additional simulation (right). Each point is the average of 400 Monte Carlo replicates.}
		\label{fig:computation-rf}
	\end{figure}

\section{Application to breast cancer dataset} \label{sec:data-analysis}
\subsection{Overview of the data}
As a final illustration of our method, we apply tPCM to a breast cancer dataset from \citet{Curtis2012}, which has been previously analyzed in the statistical literature by \citet{Liu2020} and \citet{Li2021c}. The data consist of $n = 1396$ positive cases of breast cancer categorized by stage (the outcome variable) and $p = 164$ genes, for which the expression level (mRNA) and copy number aberration
(CNA) are measured. We seek to discover genes that are associated with stage of breast cancer, conditional on the remaining genes. Statistically, we set the false discovery rate to be $\alpha = 0.1$. The data is preprocessed using the same steps as in \citet{Liu2020}; we refer the reader to Appendix E of \citet{Liu2020} for more details. The stage of cancer outcome is binary, and the gene expression predictors are continuous.

\subsection{Methods and their implementations}

As in the simulation study, we applied five methods to the data, which were HRT, tPCM, PCM, tower GCM (tGCM), and knockoffs. The methods are similar to those from the simulations, and we again fit $\E[\ey \mid \rx]$ and $\E[f_j(\rx) \mid \rx{\mj}]$ using a (classification) random forest. One major distinction, however, was that the predictors $\rx$ were not discrete as in the simulation, so we fit $\mathcal{L}(\rx)$ using the graphical lasso. This also had implications for the $\E[f_j(\rx) \mid \rx{\mj}]$ fits in PCM. We leave the details of the implementations for each method and their specific hyperparameters to Appendix~\ref{sec:data-analysis-details}, as well as an explanation of the tGCM procedure, which is similar to the oracle GCM procedure from the simulation.

\subsection{Results}
Since all five methods we considered in the simulation are inherently stochastic due to sample splitting, cross-fitting, or knockoff sampling, we report results over 25 replications. The results include number of rejections with a target FDR level of 0.1, number of rejections with a target FWER level of 0.1, and computation time (Figure \ref{fig:data-analysis-results}). Notably, in contrast to the simulation study, knockoffs produces the highest average number of rejections, though its performance is volatile, as its median number of rejections was 0. HRT made slightly more rejections than tPCM for both FDR and FWER. PCM made no rejections whatsoever, which is surprising given our theoretical result on the equivalence between tPCM and PCM.

This may be due to the discrepancy in estimating quantities like $\E[f_j(\rx) \mid \rx{\mj}]$, for which PCM uses the lasso, while tPCM and HRT fit a graphical lasso for the entire distribution of $\rx$. Finally, tGCM makes fewer rejections than HRT and tPCM. Recall that tGCM uses cross-fitting and thus does not discard any data when testing, while tPCM and HRT use just 70\% of the data for testing. That tPCM makes more rejections than tGCM despite the difference in effective sample size suggests the functional used by the former may be better suited for detecting the types of alternatives present in this particular dataset. In terms of computation, knockoffs was noticeably fastest, tPCM was slightly faster than tGCM and PCM, and HRT was the slowest.

\begin{figure}[H]
		\centering
		\includegraphics[width = \textwidth]{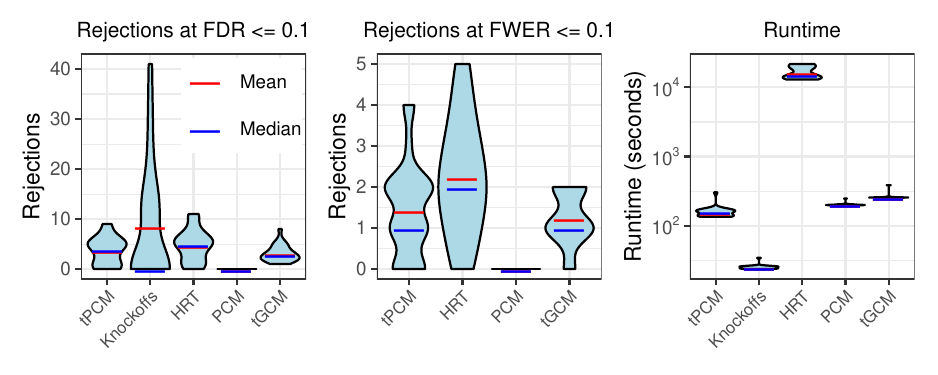}
		\caption{Results from the analysis of the breast cancer data, including rejections when controlling FDR and FWER (left and middle) and runtime (right). Variation is shown over 25 runs of each method. Knockoffs was omitted from the FWER comparison as it is not designed to control this error rate.}
		\label{fig:data-analysis-results}
	\end{figure}

\section{Discussion} \label{sec:discussion}

In this paper, we approached the variable selection problem from the dual perspectives of model-X and doubly robust methodologies, focusing on methods with power against broad classes of alternatives. We proved the equivalence of the model-X HRT and the doubly robust PCM, extending the bridge between model-X and doubly robust methodologies we established in \citet{Niu2022}. This equivalence showed the doubly robust nature of the HRT test, which had not been established before. Going beyond drawing connections between these two classes of methodologies, we borrowed ideas from both to propose the significantly faster and equally powerful tPCM test. 

The primary limitation of the tPCM test, as well as of the PCM test and HRT, is that all of these methodologies rely on sample splitting. We are not aware of any method that can achieve all four of the properties in Table~\ref{tab:comparison} without sample splitting. Unfortunately, cross-fitting cannot be used in conjunction with sample splitting to boost power in this context, since it leads to dependencies between test statistics from different folds. These dependencies can be captured and accounted for by employing the recently proposed rank-transformed subsampling method \citep{Guo2023}, though this method is computationally expensive. Sample splitting can reduce the power of these methods compared to model-X knockoffs (recall the data analysis in Section~\ref{sec:data-analysis}), which does not require sample splitting. If $p$-values for each variable are not required, knockoffs may be preferable to sample-splitting methods. We leave it to future work to explore whether there is a method that can achieve all four properties in Table~\ref{tab:comparison} without sample splitting.

\section*{Acknowledgments}

We acknowledge the Wharton research computing team for their help with our use of the Wharton high-performance computing cluster for the numerical simulations in this paper. This work was partially supported by NSF DMS-2113072 and NSF DMS-2310654.

\printbibliography


\appendix

	\section{On the estimands and power of GCM and PCM} \label{sec:estimands}

	Recalling Section~\ref{sec:existing-approaches}, GCM and PCM can be viewed as testing 
	\begin{equation}
		H_0: \psi_j = \E[(g(\rx) - \E[g(\rx) \mid \rx{\mj}])(\ey - \E[\ey \mid \rx{\mj}])] = 0,
	\end{equation}
	where $g(\rx) = \ex{j}$ for GCM and $g(\rx) = \E[\ey \mid \rx]$ for PCM. Plugging in these choices for $g$, we arrive at the expected conditional covariance (ECC) functional
	\begin{equation}
	\psi_j^{\text{ECC}}(\law) \equiv \E_{\law}[\textnormal{Cov}_{\law}[\ex{j}, \ey \mid \rx{\mj}]]
	\label{eq:expected-conditional-covariance}
	\end{equation}
	for GCM and the minimum mean squared error gap \citep{Zhang2020,Williamson2021a} 
	\begin{equation}
	\psi^{\text{mMSE}}_j(\law) \equiv \E_{\law}[(\ey - \E_{\law}[\ey \mid \rx{\mj}])^2] - \E_{\law}[(\ey - \E_{\law}[\ey \mid \rx])^2]
	\label{eq:mmse-gap}
	\end{equation}
	for PCM. Tests of $\psi_j^{\text{ECC}}(\law) = 0$ and $\psi_j^{\text{mMSE}}(\law) = 0$ are also tests of CI because both functionals vanish under CI (Figure~\ref{fig:nested-nulls}). However, the mMSE gap is nonzero for a much broader set of laws than is the ECC. In particular, any law for which $\E[\ey \mid \rx] \neq \E[\ey \mid \rx{\mj}]$ has $\psi_j^{\text{mMSE}}(\law) > 0$, while the ECC is only sensitive to departures of $\E[\ey \mid \rx]$ from $\E[\ey \mid \rx{\mj}]$ that are linear in $\ex{j}$. 
	
	In the language of nonparametrics, this manifests itself in that $\psi_j^{\text{ECC}}$ has a nonzero influence function at the null, 
	whereas $\psi_j^{\text{mMSE}}$ has a vanishing influence function under the null \citep{Williamson2021a}. This means that, locally around any null distribution $\law$, the variation in the functional $\psi_j^{\text{ECC}}$ in any direction (quantified by a pathwise derivative) is the projection of that direction onto the single direction given by the influence function. Thus, the optimal power of a test of $\psi_j^{\text{ECC}} = 0$ against any local alternative is a function of the projection of that alternative onto the influence function \citep[Lemma 25.45]{VDV1998}. This falls within our definition of a test having power against one-dimensional alternatives (Figure~\ref{fig:1d-alternatives}). By contrast, the vanishing influence function of $\psi_j^{\text{mMSE}}$ implies that to first order this functional is flat in every direction. Informally, this corresponds to a test that can, at least in principle, be sensitive to a broader set of departures, though at the cost of weaker (second-order) local signal strength.

	\section{Additional related work} \label{sec:related-work}

	Here, we expand on different strands of related work.  
	
	\paragraph{Model-X methods.} There have been several works focusing on Type-I error control for model-X methodologies without requiring the model-X assumption, in addition to those noted above on knockoffs with an in-sample estimate of $\law(\rx)$ \citep{Fan2018a, Fan2023, Fan2025}. This question has also been studied when $\law(\rx)$ is estimated out-of-sample \citep{Barber2018, Fan2020}. A conditional variant of model-X knockoffs that allows $\law(\rx)$ to follow a parametric model with unknown parameters was proposed by \citet{Huang2019}. In addition to model-X knockoffs, \citet{CetL16} also proposed the conditional randomization test (CRT) for conditional independence testing, of which the HRT is a special case. The Type-I error of the CRT when $\law(\rx)$ is estimated out-of-sample was studied by \citet{Berrett2019}. A special case of the CRT called the distilled CRT (dCRT; \cite{Liu2020}) was shown to be doubly robust by \citet{Niu2022}. Other variants of the CRT have also been proposed for their improved robustness properties \citep{Berrett2019,Li2022b,Barber2020,Zhu2023}. Other variants of the CRT have also been proposed for improved computational performance, including the HRT and several others \citep{Tansey2018,Zhong2021,Li2021c,Liu2020}. In the latter category, tests either are not suited for producing fine-grained $p$-values for each variable or require up to $O(p^2)$ resamples to get them. 
 	
	\paragraph{Doubly robust methods.}

	Another related strand of literature focuses on doubly robust testing and estimation. The GCM test \citep{Shah2018} uses a product-of-residuals statistic to test conditional independence against alternatives where the expected conditional covariance~\eqref{eq:expected-conditional-covariance} is nonzero. Minimax estimation of the expected conditional covariance has also been extensively studied; see for example \citet{Robins2008,Robins2009}. The weighted GCM test \citep{Scheidegger2022a} extends the GCM test for power against broader classes of alternatives. For sensitivity against even more general departures from the null, estimation and testing of functionals related to the mMSE gap~\eqref{eq:mmse-gap} have been considered \citep{Zhang2020,Williamson2021,Williamson2021a,Dai2022,Lundborg2022a,Hudson2023,Verdinelli2024}, including the PCM test. This functional's efficient influence function vanishes at the null, which allows power against broader alternatives but invalidates standard inferential techniques. Different methods have different approaches to mitigating this issue. However, all of these methods were designed to examine a single variable at a time, so naive application of these approaches to each of the predictor variables is computationally expensive when the number of predictors is large.

	\paragraph{Work at the intersection.}

	In a previous work \citep{Niu2022}, we established an initial bridge between the model-X and doubly robust literatures by proving the asymptotic equivalence between two conditional independence tests with power against partially linear alternatives: the dCRT \citep{Liu2020} and the GCM test \citep{Shah2018}. In this work, we strengthen this bridge by proving the asymptotic equivalence between the HRT and the PCM test, which have power against more general classes of alternatives.

	\paragraph{Other work on CI testing.} The literature on CI testing extends far beyond model-X and doubly robust methods; \citet{Pogodin2024} contains a more comprehensive review.

\section{Examples of theoretical results} \label{sec:examples-theory}

\subsection{Examples satisfying Theorem~\ref{thm:tower-pcm-type-I-error}} \label{sec:examples-type-1-error}

\paragraph{Linear Model:} For a fixed $p$ we have $(\ex{i}{j},\ey{i},\rx{i}{\mj}) \in \R \times \R \times \R^p$ for $i=1,\ldots,2n$ i.i.d samples arising out of the linear model:
\begin{align}\label{eq:linear-model}
	\begin{split}
		\ey{i} = \beta \ex{i}{j} + \rx{i}{\mj}^T \gamma + \epsilon_i \\
		\ex{i}{j} = \rx{i}{\mj}^T\eta + \delta_i, \rx{i}{\mj}\sim P_{\rx{\mj}} 
	\end{split}	
\end{align}
where $\epsilon_i,\delta_i \sim N(0,1)$ and $P_{\rx{\mj}}$ has bounded support, i.e. $\exists\, c_{\rx{\mj}} >0$ such that $\|\rx{\mj}\|_2 \leq c_{\rx{\mj}}$. We split our data into two halves and estimate all of the unknown parameters using the least squares estimates; this yields estimates $\widehat m(\rx)$ and $\lawhat_{\ex{j}|\rx{\mj}}$.
\begin{lemma}\label{lemma:linear-model-type-I-error}
	For the linear model described in \eqref{eq:linear-model}, under the null (i.e. $\beta = 0$), the assumptions \eqref{eq:var-bounded}, \eqref{eq:uniform-bound-div}-\eqref{eq:CLT-condition} hold true. Therefore, by Theorem~\ref{thm:tower-pcm-type-I-error} we conclude that $\phi_j^{\textnormal{tPCM}}$ is an asymptotically level $\alpha$ test.
\end{lemma}

Next we consider a non-parametric example which we borrow from \citet{Lundborg2022a}, namely spline estimators, which are used to fit a nonlinear model for $\ey$ on $\rx$. Our primary interest is to identify conditions under which our test is asymptotically valid in a nonparametric setting.

\paragraph{Spline Estimators and Basis Functions:}
We assume that $\rx \equiv (\ex{j},\rx{\mj}) \in [0,1] \times [0,1]^{p-1}$.
Let $\cS^{p-1}_{r,N}$ denote the space of $p-1$-tensor splines on $[0,1]^{p-1}$, where $N$ represents the number of equi-spaced interior knots in each dimension and $r$ is the order of the splines. We denote the $(p-1)$-tensor B-spline basis for $\cS^{p-1}_{r,N}$ as $\boldsymbol{\phi}^{\rx{\mj}}$, which consists of $K_{\rx{\mj}} := (N+r)^{p-1}$ basis functions. Similarly, writing $\cS^{1}_{r,N}$ for the corresponding spline space on $[0, 1]$ with $1$-tensor B-spline basis $\boldsymbol{\phi}^{\ex{j}}$, having $K_{\ex{j}} := (N + r)$ basis functions, we define the $p$-tensor product basis:
$\boldsymbol{\phi}(\rxs) := \boldsymbol{\phi}^{\ex{j}}(\exs{j}) \otimes \boldsymbol{\phi}^{\rx{\mj}}(\rxs{\mj})$ for $\cS^{p}_{r,N}$, where $u \otimes v := \text{vec}(uv^T)$, resulting in $K_{\rx}:= K_{\ex{j}} \times K_{\rx{\mj}}$ basis functions.

Let us denote the estimate of $\E(\ey|\rx)$ by $\hat m(\rx)$, which is obtained by an Ordinary Least Squares (OLS) regression of $\ey$ on the spline basis $\boldsymbol{\phi}(\rx)$ on the second half of the data ($D_2$). Hence $\hat m(\rxs) = \boldsymbol{\phi}(\rxs)^T \hat \beta_{\rx}$ where

\begin{equation}\label{eq:def-Sigma-hat}
 \hat \beta_{\rx} = \hat{\Sigma}_{\rx}^{-1} \left(\frac{1}{n}\sum_{i=n+1}^{2n} \ey{i} \boldsymbol{\phi}(\rx{i}{\bullet})\right) \text { and }	\hat{\Sigma}_{\rx} =\frac{1}{n}\sum_{i=n+1}^{2n} \boldsymbol{\phi}(\rx{i}{\bullet}) \boldsymbol{\phi}(\rx{i}{\bullet})^T.
\end{equation}
We assume we have some consistent estimate on the distribution of $\law_{\ex{j} \mid \rx{\mj}}$ which we denote by $\lawhat_{\ex{j} \mid \rx{\mj}}$ (obtained using $D_2$). 

In order to state our Type I error control result for spline regressions, it will be convenient to define the projection $\boldsymbol{\Pi}: \mathbb{R}^{K_{\rx}} \rightarrow \mathbb{R}^{K_{\rx}}$ by $\boldsymbol{\Pi}(\boldsymbol{u}) \equiv \boldsymbol{\Pi}\left(u_1, \ldots, u_{K_{X }}\right):=\boldsymbol{u}-\mathbf{1} \otimes \overline{\boldsymbol{u}}$, with $\overline{\boldsymbol{u}}=\left(\bar{u}_1, \ldots, \bar{u}_{K_{\rx{-\mj}}}\right)$ given by $\bar{u}_k:=K_{\ex{j}}^{-1} \sum_{\ell=1}^{K_{\ex{j}}} u_{(k-1) K_{\ex{j}}+\ell}$ for $k \in\left[K_{\rx{\mj}}\right]$. We denote the projection matrix corresponding to this to be $\Pi$, this projection removes the mean from each block of size $K_{X_j}$ effectively centering the vector $u$ within each group.
 
\subsubsection*{Assumptions} 
We make the following assumptions for our theoretical results:

\begin{enumerate}
	\item \textbf{Approximation Error:} Define \( m^+(\rxs) := \boldsymbol{\phi}(\rxs)^\top \beta_{\rx} \), where $\beta_{\rx} $ is the population-level best fit OLS coefficient vector. Then the approximant $m^+$ of $m$ in $\cS^p_{r,N}$ satisfies
	\begin{equation}\label{ass:bias-control}
		\|m^+ -m\|_\infty \leq K_{\rx}^{-s/p}.
	\end{equation}
	
	\item \textbf{Density Bounds of $\rx{\mj}$:} The density of $\rx{\mj}$, $p(\rxs{\mj})$, is absolutely continuous with respect to Lebesgue measure on $[0,1]^p$ and satisfies
	\begin{equation}\label{ass:strong-density-assumption}
		\sup_{\rxs{\mj} \in [0,1]^{p-1}} p(\rxs{\mj}) :=C < \infty \quad \text{ and } \quad \inf_{\rxs{\mj} \in [0,1]^{p-1}} p(\rxs{\mj}) :=c > 0.
	\end{equation}
	
	\item \textbf{Moment Conditions for Error Term:} There exist constants $c,C>0$ such that
	\begin{equation}\label{ass:bounded-moment-eps}
		\E(\varepsilon^2\mid \rx{\mj}) \geq c \quad \text{ and } \quad \E(|\varepsilon|^{2+\delta} \mid \rx{\mj}) \leq C.
	\end{equation}	
	
	\item \textbf{Restricted Eigenvalue Condition:} The matrix $\Lambda = \E[\mathrm{Cov}(\boldsymbol{\phi}(\rx)\mid \rx{\mj})]$ satisfies
	\begin{equation}\label{ass:restricted-eigen-value-lower-bound}
		\tilde \lambda_{\min}(\Lambda) := \min_{\rxs \in \mathbb{R}^{K_{\rx}}: \Pi \rxs = \rxs, \|\rxs\|_2 =1} \rxs^T \Lambda \rxs \geq \frac{c}{K_{\rx}}
	\end{equation}
	for some $c \in(0,1]$.
	\item \textbf{Non-degenerate statistic with high probability:}  It is assumed that 
	\begin{equation}\label{ass:non-degenerate}
		\P(\|\Pi \hat{\beta}_{\rx}\|_\infty = 0) = o(1).
	\end{equation}
	
	\item \textbf{Convergence Rate for Divergence:}
	\begin{equation}\label{ass:rate-cond-chi-square-div}
		\frac{1}{n} \sum \chi^2\left(\widehat\law_{\srx_j|\srx_{-j}}, \law_{\ex{j}|\rx{\mj}} | D_2\right) = o_p(n^{-\frac{2p}{2s +p}}) .
	\end{equation}

\end{enumerate}

\paragraph{Discussion of Assumptions:}
The assumptions presented are standard in nonparametric regression and share similarities with those found in \cite{Lundborg2022a}, specifically assumptions \ref{ass:bias-control}-\ref{ass:non-degenerate} are exactly the same as theirs. Assumption \ref{ass:bias-control} quantifies the approximation properties of the spline basis, which typically hold for functions exhibiting sufficient smoothness (e.g., Hölder smooth functions, as discussed in Lemma 38 of \cite{Lundborg2022a}). Assumptions \ref{ass:strong-density-assumption} and \ref{ass:bounded-moment-eps} impose regularity conditions on the data generating process, ensuring well-behaved covariates and error terms, which are fundamental for establishing asymptotic results.
Conditions \ref{ass:restricted-eigen-value-lower-bound} is a structural assumptions on the design matrix. These are often related to restricted strong convexity properties and are crucial for ensuring the stability and consistency of the OLS estimator, especially in high-dimensional or nonparametric settings where the number of basis functions $K_{X}$ can be large. Assumption \ref{ass:non-degenerate} ensures that the test statistic is non-degenerate.  Finally, Assumption \ref{ass:rate-cond-chi-square-div} specifies a required convergence rate for the estimation of conditional distributions, which is vital for controlling the Type-I error of our proposed test procedure.

\begin{theorem}\label{thm:spline-type-I-error}
	Let $K_{\rx} = n^{\frac{p}{2s+p}}$, where $s$ is the smoothness parameter from Assumption~\ref{ass:bias-control}. We also assume $s/p > \max(1/\delta, 1/2)$, where $\delta$ is from Assumption~\ref{ass:bounded-moment-eps}.
	Under Assumptions~\eqref{ass:bias-control}-\eqref{ass:rate-cond-chi-square-div}, and conditions~\eqref{eq:uniform-bound-div} and~\eqref{eq:CLT-condition}, tPCM controls type-I error.
\end{theorem}

The chosen rate for $K_{\rx}$ results from a standard bias-variance trade-off in nonparametric estimation.
Importantly, the rate condition for the chi-square divergence in Assumption~\ref{ass:rate-cond-chi-square-div} is strictly slower than the parametric rate $O_p(n^{-1})$ under the assumptions of Theorem~\ref{thm:spline-type-I-error}.

\subsection{Examples satisfying Theorem~\ref{thm:HRT-PCM-equivalence}} \label{sec:examples-HRT-PCM-equivalence}

We claim that the assumptions of Theorem~\ref{thm:HRT-PCM-equivalence} are satisfied in the linear model~\eqref{eq:linear-model}.
\begin{lemma}\label{lemma:linear-model-equivalence-to-HRT}
	For the linear model described in \eqref{eq:linear-model} with $\beta = 0$, the assumptions of Theorem \ref{thm:HRT-PCM-equivalence} are satisfied, which implies that $\phi_j^{\textnormal{tPCM}}$ is an asymptotically equivalent to $\phi_j^{\textnormal{HRT}}$.
\end{lemma}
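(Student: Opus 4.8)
The plan is to leverage Lemma~\ref{lemma:linear-model-type-I-error}: for the linear model~\eqref{eq:linear-model} under $\beta = 0$, that lemma already supplies assumptions~\eqref{eq:var-bounded}, \eqref{eq:uniform-bound-div}, \eqref{eq:consistency-of-chi-square-div}, \eqref{eq:consitency-of-reg-func}, \eqref{eq:doubly-robust-rate} and~\eqref{eq:CLT-condition} of Theorem~\ref{thm:tower-pcm-type-I-error}, so I only need to verify the seven additional hypotheses of Theorem~\ref{thm:HRT-PCM-equivalence}: \eqref{eq:variance-of-m-hat-given-Z}, \eqref{eq:tower-regression-rate-assumption}, \eqref{eq:variance-of-xi-rate-assumption}, \eqref{eq:regression-of-Y-on-Z-is-consistent-wrt-L-hat}, \eqref{eq:variance-consistency-condition}, \eqref{eq:variance-of-xi-hat-goes-to-zero}, and~\eqref{eq:moment-condition-on-product-of-residuals-hat}. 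For each of these the strategy is the same: substitute the closed forms that the Gaussian linear model provides for every quantity appearing in the condition, and reduce the claim to the standard $\sqrt n$-consistency of the least squares estimates.

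First I would assemble those closed forms. Writing the $D_2$-fits as $\widehat m(\prx) = \widehat\beta\,\prx_j + \prx_{\minus j}^T\widehat\gamma$ and $\lawhat_{\prx_j\mid\prx_{\minus j}} = N(\prx_{\minus j}^T\widehat\eta,\widehat\sigma_\delta^2)$, and using $\beta = 0$ so that $m(\prx) = m_j(\prx_{\minus j}) = \prx_{\minus j}^T\gamma$ and $\seps_i = \epsilon_i$, a direct computation gives
$$
\sxi_i = \widehat\beta\,\delta_i, \qquad \widehat\xi_i = \widehat\beta\bigl(\delta_i + \srx_{i,\minus j}^T(\eta-\widehat\eta)\bigr), \qquad \widetilde\xi_i = \widehat\beta\,\widetilde\delta_i,
$$
where $\widetilde\delta_i \mid (\srx_{i,\minus j}, D_2) \sim N(0,\widehat\sigma_\delta^2)$ is the resampling noise, together with $\widehat m_j(\srx_{i,\minus j}) = \widehat\beta\,\srx_{i,\minus j}^T\widehat\eta + \srx_{i,\minus j}^T\widehat\gamma$ (up to a Monte Carlo error of order $\widehat\beta\,B_{\textnormal{tPCM}}^{-1/2}$ if $B_{\textnormal{tPCM}}$ is kept finite). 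In particular $\sigma_n^2 = \V_\law[\seps\sxi \mid D_2] = \widehat\beta^2$, and the Gaussian moment identities give $\V_\law(\widehat\xi_i^2 \mid \srx_{i,\minus j}, D_2) = \widehat\beta^4\bigl(2 + 4(\srx_{i,\minus j}^T(\eta - \widehat\eta))^2\bigr)$ and $\V(\widetilde\xi_i^2 \mid \srx_{i,\minus j}, D_2) = 2\widehat\beta^4\widehat\sigma_\delta^4$. Since $p$ is fixed and $P_{\prx_{\minus j}}$ has bounded support, ordinary OLS theory yields $\widehat\beta = O_p(n^{-1/2})$ (this rate precisely because we are under the null), $\|\widehat\eta - \eta\| = O_p(n^{-1/2})$, $\|\widehat\gamma - \gamma\| = O_p(n^{-1/2})$, and $|\widehat\sigma_\delta^2 - 1| = O_p(n^{-1/2})$; the Frisch--Waugh identity $\widehat\gamma = \widecheck\gamma - \widehat\beta\,\widehat\eta$ additionally gives $\widehat\beta\,\widehat\eta + \widehat\gamma - \gamma = \widecheck\gamma - \gamma = O_p(n^{-1/2})$, where $\widecheck\gamma$ is the OLS fit of $\sry$ on $\srx_{\minus j}$ alone.

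With these substitutions each of the seven conditions collapses to a one-line estimate. For example, \eqref{eq:variance-of-m-hat-given-Z} becomes $\sigma_n^{-2}\,\widehat\beta^4\bigl(2 + O_p(n^{-1})\bigr) = 2\widehat\beta^2 + o_p(\widehat\beta^2) = O_p(n^{-1})$; \eqref{eq:tower-regression-rate-assumption} becomes $(\sqrt n\,\widehat\beta)^{-1}\sum_{i=1}^n \widehat\beta^2(\srx_{i,\minus j}^T(\widehat\eta - \eta))^2 = O_p(\widehat\beta/\sqrt n) = O_p(n^{-1})$; \eqref{eq:variance-of-xi-rate-assumption} and~\eqref{eq:variance-consistency-condition} are each governed by $\widehat\beta^2|\widehat\sigma_\delta^2 - 1|$ summed over $i$ and renormalized, giving $O_p(n^{-1/2})$; \eqref{eq:regression-of-Y-on-Z-is-consistent-wrt-L-hat} uses $(\widehat m_j - m_j)(\srx_{i,\minus j})^2 = (\srx_{i,\minus j}^T(\widecheck\gamma - \gamma))^2 = O_p(n^{-1})$ uniformly in $i$ together with $\E(\widetilde\xi_i^2 \mid \srx_{i,\minus j}, D_2) = O_p(\widehat\beta^2)$, yielding $O_p(n^{-1})$; \eqref{eq:variance-of-xi-hat-goes-to-zero} is $\sigma_n^{-2}\cdot 2\widehat\beta^4\widehat\sigma_\delta^4 = O_p(n^{-1})$; and~\eqref{eq:moment-condition-on-product-of-residuals-hat} with $\delta = 2$ is $\widehat\beta^{-4}\,\E(\epsilon_i^4\,\widehat\beta^4\,\widetilde\delta_i^4 \mid D_2) = 9\,\widehat\sigma_\delta^4 = O_p(1) = o_p(n^2)$. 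The recurring structural point is that the numerator always carries at least as many factors of $\widehat\beta$ as the $\sigma_n^2$ (or $\sigma_n^{2+\delta}$) in the denominator, so the vanishing of $\sigma_n$ cancels out and what is left is a strictly positive power of $n^{-1/2}$ (or simply $O_p(1)$, which is enough for the last condition).

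I expect the part requiring genuine care to be this degeneracy rather than any hard analysis: because $\beta = 0$ forces $\sigma_n^2 = \widehat\beta^2 \to 0$, one must carry the exact rate $\widehat\beta = O_p(n^{-1/2})$ through each ratio rather than merely $\widehat\beta = o_p(1)$, and be careful not to lose powers of $\widehat\beta$ in the numerators. A secondary, minor issue is the Monte Carlo error in $\widehat m_j$ coming from a finite number $B_{\textnormal{tPCM}}$ of resamples: it injects terms of order $\widehat\beta\,B_{\textnormal{tPCM}}^{-1/2}$ into $\widehat m_j$ and $\widehat\xi_i$, which are negligible provided $B_{\textnormal{tPCM}} \to \infty$ (equivalently, one may first verify the conditions for the idealized version with $\widehat m_j = \E_{\lawhat}[\widehat m \mid \prx_{\minus j}, D_2]$, as the reductions in Section~\ref{sec:asymp-equiv-HRT-PCM} are phrased). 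Once all seven conditions are checked, Theorem~\ref{thm:HRT-PCM-equivalence} applies and delivers the asymptotic equivalence of $\phi_j^{\textnormal{tPCM}}$ and $\phi_j^{\textnormal{HRT}}$, as claimed.
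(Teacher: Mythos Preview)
Your proposal is correct and follows essentially the same route as the paper: invoke Lemma~\ref{lemma:linear-model-type-I-error} for the tPCM hypotheses, then plug in the closed-form Gaussian expressions for $\sxi_i$, $\widehat\xi_i$, $\widetilde\xi_i$, $\sigma_n^2 = \widehat\beta^2$ and reduce each of the seven remaining conditions to the $\sqrt n$-consistency of the OLS estimates. The only minor discrepancy is that the paper's model~\eqref{eq:linear-model} has $\delta_i \sim N(0,1)$ with known unit variance, so in the paper $\lawhat_{\prx_j\mid\prx_{\minus j}} = N(\prx_{\minus j}^T\widehat\eta,1)$ and conditions~\eqref{eq:variance-of-xi-rate-assumption} and~\eqref{eq:variance-consistency-condition} are exactly zero rather than $O_p(n^{-1/2})$; your version with an estimated $\widehat\sigma_\delta^2$ is a harmless generalization, and your use of the Frisch--Waugh identity for~\eqref{eq:regression-of-Y-on-Z-is-consistent-wrt-L-hat} is a slightly cleaner variant of the paper's direct bound.
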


Next, we turn our attention to the splines example. For ease of analysis, we restrict ourselves to the Model-X setting, i.e., we assume that the distribution \( \law_{\rx} \) is known, so that we may plug in \( \lawhat_{\ex{j} \mid \rx{\mj}} = \law_{\ex{j} \mid \rx{\mj}} \).
\begin{lemma}\label{lemma:spline-equiv}
	For  the spline model described in section \ref{sec:examples-type-1-error}, under the Model-X setting and the assumptions of Theorem~\ref{thm:spline-type-I-error}, if the fitted spline coefficients satisfy
	\[
	\|\Pi(\widehat \beta_{\rx})\|_\infty^2 = o_p\left(K^{-1}_{\bm X}\right),
	\]
	where \( \widehat \beta_{\exa} \) denotes the fitted spline coefficients and \( \Pi \) is the spline basis matrix introduced in Section~\ref{sec:examples-type-1-error},
	then the HRT and tPCM tests are asymptotically equivalent.
\end{lemma}

\section{Proofs} \label{sec:proofs}
Since all of our theoretical results focus on the hypothesis level, where the $j$th hypothesis to be tested is defined in~\eqref{eq:conditional-independence}, and since $j$ is fixed for the given hypothesis test, we will simplify our notation for clarity. We denote $\exa = \ex{j}$ (the $j$th predictor) and $\rz = \rx{\mj}$ (all other predictors), and in this notation, we are interested in testing the hypothesis:
\begin{equation}
	H_0: \exa \independent \ey \mid \rz
\end{equation}
In addition, we drop the $j$ subscripts from all quantities. For functions, instead of $m_j(\rx{\mj})$, we use $m(\rz)$ to denote $\E [\ey \mid \rz]$, instead of $\widehat m_j(\rx{\mj})$, we use $\widehat m(\rz)$, and instead of $\widehat f_j(\rx)$, we use $\widehat f(\exa, \rz)$. We replace $L_{ij}$ and $R_{ij}$ with $L_i$ and $R_i$. Moreover, instead of indexing tests and test statistics by $j$, we index by $n$. We will be using this notation in all of the subsequent sections.

We also define concretely here certain notions of conditional convergence. The first definition is about conditional convergence in distribution.

\begin{definition}
	\label{defn:d-p-convg}
	 For each $n$, let $W_n$ be a random variable and let $\mathcal{F}_n$ be a $\sigma$-algebra. Then, we say $W_n$ converges in distribution to a random variable $W$ conditionally on $\mathcal{F}_n$ if
	$\mathbb{P}\left[W_n \leq t \mid \mathcal{F}_n\right] \xrightarrow{p} \mathbb{P}[W \leq t]$ for each $t \in \mathbb{R}$ at which $t \mapsto \mathbb{P}[W \leq t]$ is continuous.
	We denote this relation via $W_n \mid \mathcal{F}_n \xrightarrow{d, p} W$.
\end{definition}

The next definition is about conditional convergence in probability.

\begin{definition}
		\label{defn:p-p-convg}
	For each $n$, let $W_n$ be a random variable and let $\mathcal{F}_n$ be a $\sigma$-algebra. Then, we say $W_n$ converges in probability to a constant $c$ conditionally on $\mathcal{F}_n$ if $W_n$ converges in distribution to the delta mass at $c$ conditionally on $\mathcal{F}_n$ (recall Definition 1). We denote this convergence by $W_n \mid \mathcal{F}_n \xrightarrow{p, p} c$. In symbols,
	$$
	W_n \mid \mathcal{F}_n \xrightarrow{p, p} c \text { if } \quad W_n \mid \mathcal{F}_n \xrightarrow{d, p} \delta_c \text {. }
	$$
\end{definition}
\subsection{Proof of results in Section \ref{sec:doubly robust-tower-PCM}}
\subsubsection{Auxiliary Lemmas}
\begin{lemma}[Lemma S8 from \citep{Lundborg2022a}]\label{lemma:S8}
	Let $\left(X_{n, i}\right)_{n \in \mathbb{N}, i \in[n]}$ be a triangular array of real-valued random variables and let $\left(\mathcal{F}_n\right)_{n \in \mathbb{N}}$ be a filtration on $\mathcal{F}$. Assume that
	\begin{itemize}
		\item[(i)] $X_{n, 1}, \ldots, X_{n, n}$ are conditionally independent given $\mathcal{F}_n$, for each $n \in \mathbb{N}$;
		\item[(ii)] $\mathbb{E}_P\left(X_{n, i} \mid \mathcal{F}_n\right)=0$ for all $n \in \mathbb{N}, i \in[n]$;
		\item[(iii)] $\left|n^{-1} \sum_{i=1}^n \mathbb{E}_P\left(X_{n, i}^2 \mid \mathcal{F}_n\right)-1\right|=o_{\mathcal{P}}(1)$;
		\item[(iv)] there exists $\delta>0$ such that
		$$
		\frac{1}{n} \sum_{i=1}^n \mathbb{E}_P\left(\left|X_{n, i}\right|^{2+\delta} \mid \mathcal{F}_n\right)=o_{\mathcal{P}}\left(n^{\delta / 2}\right)
		$$
	\end{itemize}
	Then $S_n \equiv n^{-1 / 2} \sum_{m=1}^n X_{n, m}$ converges uniformly in distribution to $N(0,1)$, i.e.
	$$
	\lim _{n \rightarrow \infty} \sup _{P \in \mathcal{P}} \sup _{x \in \mathbb{R}}\left|\mathbb{P}_P\left(S_n \leq x\right)-\Phi(x)\right|=0
	$$
\end{lemma}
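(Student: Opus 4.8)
The plan is to prove the stronger \emph{conditional} statement: given $\mathcal F_n$, the partial sums obey a Berry--Esseen bound whose right-hand side is $o_{\mathcal P}(1)$, and then to integrate this (deterministic-given-$\mathcal F_n$) bound against $\mathbb P_P$ to obtain the claimed convergence, uniform in both $x$ and $P$. Write $Y_{n,i} \equiv n^{-1/2} X_{n,i}$, so $S_n = \sum_{i=1}^n Y_{n,i}$, and set $\sigma_n^2 \equiv \sum_{i=1}^n \mathbb E_P(Y_{n,i}^2\mid\mathcal F_n) = n^{-1}\sum_{i=1}^n \mathbb E_P(X_{n,i}^2\mid\mathcal F_n)$. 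Assumption (iii) is exactly the statement that $\sigma_n^2 = 1 + o_{\mathcal P}(1)$, and rescaling assumption (iv) shows that the conditional Lyapunov quantity $L_n \equiv \sum_{i=1}^n \mathbb E_P(|Y_{n,i}|^{2+\delta}\mid\mathcal F_n) = n^{-(2+\delta)/2}\sum_{i=1}^n \mathbb E_P(|X_{n,i}|^{2+\delta}\mid\mathcal F_n) = o_{\mathcal P}(1)$.

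First I would reduce to the regime $0 < \delta \le 1$: when $\delta > 1$, Hölder's inequality (applied pointwise, via $\mathbb E_P(|X_{n,i}|^3\mid\mathcal F_n) \le \mathbb E_P(X_{n,i}^2\mid\mathcal F_n)^{1-1/\delta}\mathbb E_P(|X_{n,i}|^{2+\delta}\mid\mathcal F_n)^{1/\delta}$, and then across $i$) together with (iii) converts assumption (iv) for $\delta$ into assumption (iv) for $\delta' = 1$, so it suffices to handle the fractional-moment case. In that case, conditionally on $\mathcal F_n$ the $Y_{n,i}$ are independent and mean zero by (i)--(ii), so on the event $A_n \equiv \{\sigma_n \ge 1/2\}$ the classical Berry--Esseen inequality with $(2+\delta)$-th moments (an absolute constant $C_\delta$, valid for $\delta \in (0,1]$), applied to the $\mathbb E_P(\cdot\mid\mathcal F_n)$-law, yields $\sup_x|\mathbb P_P(S_n/\sigma_n \le x\mid\mathcal F_n) - \Phi(x)| \le C_\delta\,\sigma_n^{-(2+\delta)} L_n \le C_\delta 2^{2+\delta} L_n$. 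Undoing the rescaling by $\sigma_n$ costs an additional $\sup_x|\Phi(x/\sigma_n) - \Phi(x)| \le \rho(\sigma_n)$, where $\rho$ is a deterministic modulus with $\rho(s)\to 0$ as $s\to 1$ (available because $\Phi$ is Lipschitz with a rapidly decaying density); hence, writing $D_n \equiv \sup_x|\mathbb P_P(S_n\le x\mid\mathcal F_n) - \Phi(x)|$, we get $D_n \le C_\delta 2^{2+\delta} L_n + \rho(\sigma_n)$ on $A_n$, while $D_n \le 1$ always.

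To conclude, for every $x$ we have $|\mathbb P_P(S_n\le x) - \Phi(x)| = |\mathbb E_P[\mathbb P_P(S_n\le x\mid\mathcal F_n) - \Phi(x)]| \le \mathbb E_P[D_n]$, hence $\sup_x|\mathbb P_P(S_n\le x) - \Phi(x)| \le \mathbb E_P[D_n]$. Combining the previous bound with $L_n = o_{\mathcal P}(1)$, $\rho(\sigma_n) = o_{\mathcal P}(1)$ (since $\sigma_n^2 = 1 + o_{\mathcal P}(1)$), and $\mathbb P_P(A_n^c) \to 0$ uniformly in $P$, the $[0,1]$-valued sequence $D_n$ converges to $0$ uniformly in $\mathcal P$-probability; since it is bounded, $\mathbb E_P[D_n] \le \epsilon + \sup_{P'\in\mathcal P}\mathbb P_{P'}(D_n > \epsilon)$ for every $\epsilon > 0$, so letting $n\to\infty$ and then $\epsilon\downarrow 0$ gives $\sup_{P\in\mathcal P}\mathbb E_P[D_n] \to 0$. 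Chaining the inequalities yields $\sup_{P\in\mathcal P}\sup_x|\mathbb P_P(S_n\le x) - \Phi(x)| \to 0$, which is the assertion.

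The conceptual content is modest, so I expect the main obstacle to be technical bookkeeping in two spots. First, securing a Berry--Esseen bound with an \emph{absolute} constant for \emph{fractional} $(2+\delta)$-th moments is what forces the reduction to $\delta \le 1$; one must check the constant depends on nothing but $\delta$. Second, every passage from the conditional bound to the unconditional, uniform-in-$P$ statement may only use that the error terms are uniformly bounded and $o_{\mathcal P}(1)$ — this is precisely what lets a bounded-convergence argument deliver uniformity over $\mathcal P$ — and the degenerate event $\{\sigma_n \text{ small}\}$ must be peeled off before invoking Berry--Esseen; both points are routine once flagged.
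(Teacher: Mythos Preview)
The paper does not supply its own proof of this lemma: it is simply stated as ``Lemma S8 from \citep{Lundborg2022a}'' and used as a black box, so there is no in-paper argument to compare against. Your proposal is correct and is precisely the natural route to this result: apply a conditional Berry--Esseen bound (with the standard fractional-moment version, hence the reduction to $\delta\le 1$), peel off the small-variance event, and then pass from the conditional Kolmogorov distance $D_n$ to the unconditional one by the bounded-convergence trick $\mathbb E_P[D_n]\le \epsilon + \sup_{P'}\mathbb P_{P'}(D_n>\epsilon)$, which is exactly what delivers uniformity over $\mathcal P$. The two technical cautions you flag (absolute constant in Berry--Esseen for $\delta\in(0,1]$; handling $\{\sigma_n<1/2\}$ before invoking the bound) are the right ones, and your H\"older reduction from $\delta>1$ to $\delta'=1$ using (iii) is clean and correct.
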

\begin{lemma}[ Lemma S9 from \citep{Lundborg2022a}] \label{lemma:S9}
	Let $\left(X_{n, i}\right)_{n \in \mathbb{N}, i \in[n]}$ be a triangular array of real-valued random variables and let $\left(\mathcal{F}_n\right)_{n \in \mathbb{N}}$ be a filtration on $\mathcal{F}$. Assume that
	\begin{itemize}
		\item[(i)]$X_{n, 1}, \ldots, X_{n, n}$ are conditionally independent given $\mathcal{F}_n$ for all $n \in \mathbb{N}$;
		\item[(ii)] there exists $\delta \in(0,1]$ such that
		$$
		\sum_{i=1}^n \mathbb{E}_P\left(\left|X_{n, i}\right|^{1+\delta} \mid \mathcal{F}_n\right)=o_{\mathcal{P}}\left(n^{1+\delta}\right) .
		$$  	
	\end{itemize}
	Then $S_n \equiv n^{-1} \sum_{i=1}^n X_{n, i}$ and $\mu_{P, n} \equiv n^{-1} \sum_{i=1}^n \mathbb{E}_P\left(X_{n, i} \mid \mathcal{F}_n\right)$ satisfy $\left|S_n-\mu_{P, n}\right|=o_{\mathcal{P}}(1)$; i.e., for any $\epsilon>0$
	$$
	\lim _{n \rightarrow \infty} \sup _{P \in \mathcal{P}} \mathbb{P}_P\left(\left|S_n-\mu_{P, n}\right|>\epsilon\right)=0 .
	$$
\end{lemma}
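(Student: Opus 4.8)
My plan is the classical truncation proof of a uniform conditional weak law of large numbers: truncate at level $n$, and split $S_n - \mu_{P,n}$ into three asymptotically negligible pieces. Write $W_n \equiv \sum_{i=1}^n \mathbb{E}_P(|X_{n,i}|^{1+\delta} \mid \mathcal{F}_n)$, so that hypothesis (ii) reads $W_n/n^{1+\delta} = o_{\mathcal P}(1)$. The device that delivers uniformity over $P \in \mathcal{P}$, which I would record first, is elementary: if $R_n \ge 0$ with $R_n = o_{\mathcal P}(1)$ and an event $A_n$ satisfies $\mathbb{P}_P(A_n \mid \mathcal{F}_n) \le c\,R_n$ for a universal constant $c$, then $\mathbb{P}_P(A_n) = \mathbb{E}_P[\mathbb{P}_P(A_n \mid \mathcal{F}_n)] \le c\eta + \mathbb{P}_P(R_n > \eta)$ for every $\eta > 0$, so $\sup_P \mathbb{P}_P(A_n) \to 0$ upon taking $n \to \infty$ and then $\eta \downarrow 0$. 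Thus each conditional Markov/Chebyshev bound below --- all holding pointwise in $P$ with universal constants --- automatically upgrades to the uniform statement the lemma demands.

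Set $X'_{n,i} \equiv X_{n,i}\indicator(|X_{n,i}| \le n)$. First, truncation is harmless: with $B_n \equiv \{X'_{n,i} \ne X_{n,i}\text{ for some }i\}$, the conditional union bound and conditional Markov give $\mathbb{P}_P(B_n \mid \mathcal{F}_n) \le n^{-(1+\delta)}W_n = o_{\mathcal P}(1)$, and on $B_n^c$ the sums $n^{-1}\sum_i X_{n,i}$ and $n^{-1}\sum_i X'_{n,i}$ coincide. Second, the bias from truncation is small: since $\mathbb{E}_P(X'_{n,i} \mid \mathcal{F}_n) - \mathbb{E}_P(X_{n,i} \mid \mathcal{F}_n) = -\mathbb{E}_P(X_{n,i}\indicator(|X_{n,i}| > n) \mid \mathcal{F}_n)$ and $|x| \le n^{-\delta}|x|^{1+\delta}$ on $\{|x| > n\}$,
\begin{equation*}
\Big|\frac{1}{n}\sum_{i=1}^n \mathbb{E}_P(X'_{n,i} \mid \mathcal{F}_n) - \mu_{P,n}\Big| \le \frac{1}{n}\sum_{i=1}^n \mathbb{E}_P\big(|X_{n,i}|\indicator(|X_{n,i}| > n) \mid \mathcal{F}_n\big) \le n^{-1-\delta}W_n = o_{\mathcal P}(1),
\end{equation*}
which is itself of the form $o_{\mathcal P}(1)$. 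Third, the centered truncated sum vanishes: by hypothesis (i) the summands $X'_{n,i} - \mathbb{E}_P(X'_{n,i} \mid \mathcal{F}_n)$ are conditionally independent and conditionally centered given $\mathcal{F}_n$, so cross terms vanish and, using $|x|^2 \le n^{1-\delta}|x|^{1+\delta}$ on $\{|x| \le n\}$ (valid since $\delta \le 1$),
\begin{equation*}
\mathbb{E}_P\!\left[\Big(\frac{1}{n}\sum_{i=1}^n \big(X'_{n,i} - \mathbb{E}_P(X'_{n,i} \mid \mathcal{F}_n)\big)\Big)^{2}\,\Big|\,\mathcal{F}_n\right] = \frac{1}{n^2}\sum_{i=1}^n \mathrm{Var}_P(X'_{n,i} \mid \mathcal{F}_n) \le \frac{1}{n^2}\sum_{i=1}^n \mathbb{E}_P\big((X'_{n,i})^2 \mid \mathcal{F}_n\big) \le \frac{W_n}{n^{1+\delta}} = o_{\mathcal P}(1),
\end{equation*}
so conditional Chebyshev together with the preliminary device gives $n^{-1}\sum_i (X'_{n,i} - \mathbb{E}_P(X'_{n,i} \mid \mathcal{F}_n)) = o_{\mathcal P}(1)$. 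Adding the three pieces yields $|S_n - \mu_{P,n}| = o_{\mathcal P}(1)$.

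The proof is in essence routine; the one point requiring real care is keeping every estimate \emph{uniform} over $P \in \mathcal{P}$, which is precisely what the preliminary observation handles --- it converts pointwise conditional moment bounds into bounds on $\sup_P \mathbb{P}_P(\cdot)$. Hypothesis (i) enters only in the variance computation of the third piece, and the exact exponent range $\delta \in (0,1]$ of hypothesis (ii) is used in two places: $\delta \le 1$ for the bound $|x|^2 \le n^{1-\delta}|x|^{1+\delta}$ on $\{|x| \le n\}$, and $\delta > 0$ so that the Markov-type tail bounds produce a negative power of $n$. If preferred, one could first replace $X_{n,i}$ by $X_{n,i} - \mathbb{E}_P(X_{n,i} \mid \mathcal{F}_n)$, noting via the $c_r$-inequality and Jensen that (ii) is preserved, to reduce to the centered case $\mu_{P,n} = 0$; but the decomposition above already handles the general case directly.
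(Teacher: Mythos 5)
Your proof is correct. Note, however, that the paper does not prove this statement at all: it is imported verbatim as Lemma S9 of \citet{Lundborg2022a}, so there is no in-paper argument to compare against. Your truncation-at-level-$n$ argument is the standard route to this uniform conditional $L^{1+\delta}$ weak law (and is essentially the argument in the cited source): the three-way split into the truncation event, the truncation bias, and the conditionally centered truncated sum is the right decomposition, the exponent bookkeeping ($|x|\le n^{-\delta}|x|^{1+\delta}$ on $\{|x|>n\}$, $|x|^2\le n^{1-\delta}|x|^{1+\delta}$ on $\{|x|\le n\}$, the latter needing $\delta\le 1$) is accurate, and your preliminary device correctly converts the $\mathcal F_n$-conditional Markov/Chebyshev bounds into the required uniformity over $P\in\mathcal P$ by splitting on $\{R_n\le\eta\}$ and sending $\eta\downarrow 0$ after $n\to\infty$.
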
	
\begin{lemma}[Lemma 2 of \citet{Niu2022}]\label{lemma:conditional-expectation-covergence-to-unconditional-probability-convergence}
	Let $W_n$ be a sequence of nonnegative random variables and let $\mathcal{F}_n$ be a sequence of $\sigma$-algebras. If $\mathbb{E}\left[W_n \mid \mathcal{F}_n\right] \stackrel{p}{\rightarrow} 0$, then $W_n \stackrel{p}{\rightarrow} 0$.
\end{lemma}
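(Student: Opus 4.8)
The plan is to derive the claim from the conditional Markov inequality, together with one extra truncation step that accommodates the fact that we are only given convergence \emph{in probability} --- not in $L^1$ --- of the conditional expectations. Write $V_n \equiv \E[W_n \mid \mathcal{F}_n]$, so that $V_n \geq 0$ almost surely and $V_n \convp 0$ by hypothesis. The key point to keep in mind is that one cannot simply apply the conditional Markov inequality $\P(W_n > \epsilon \mid \mathcal{F}_n) \leq V_n/\epsilon$ and take unconditional expectations, since $\E[V_n]$ need not tend to zero.

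First I would fix $\epsilon > 0$ together with an auxiliary threshold $\delta > 0$, and split according to whether $V_n$ is small: $\P[W_n > \epsilon] \leq \P[W_n > \epsilon,\, V_n \leq \delta] + \P[V_n > \delta]$. The second term tends to $0$ as $n \to \infty$ directly from $V_n \convp 0$. For the first term, I would use that $\indicator(V_n \leq \delta)$ is $\mathcal{F}_n$-measurable, so by the tower property followed by the conditional Markov inequality (which is where nonnegativity of $W_n$ is used), $\P[W_n > \epsilon,\, V_n \leq \delta] = \E[\indicator(V_n \leq \delta)\, \P(W_n > \epsilon \mid \mathcal{F}_n)] \leq \E[\indicator(V_n \leq \delta)\, V_n/\epsilon] \leq \delta/\epsilon$.

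Combining the two bounds gives $\limsup_{n \to \infty} \P[W_n > \epsilon] \leq \delta/\epsilon$, and since $\delta > 0$ was arbitrary the limsup equals $0$, i.e.\ $W_n \convp 0$, as desired. I do not anticipate any genuine obstacle here: the proof is short, and the only subtlety --- which the truncation step is designed to circumvent --- is precisely that one must localize on the event $\{V_n \leq \delta\}$ before invoking Markov rather than trying to control $\E[V_n]$ globally.
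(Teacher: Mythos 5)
Your proof is correct. Note that the paper itself does not prove this statement at all: it is imported verbatim as Lemma 2 of \citet{Niu2022}, so there is no in-paper argument to compare against. Your truncation-plus-conditional-Markov argument is a valid self-contained proof: the bound $\P[W_n > \epsilon,\, V_n \leq \delta] = \E[\indicator(V_n \leq \delta)\,\P(W_n > \epsilon \mid \mathcal{F}_n)] \leq \delta/\epsilon$ is exactly right, nonnegativity of $W_n$ is used where you say it is, and letting $\delta \downarrow 0$ after $n \to \infty$ closes the argument. You correctly identify the one genuine subtlety --- that $\E[V_n] \to 0$ is not available --- and your localization on $\{V_n \leq \delta\}$ handles it. An equivalent and marginally slicker route is to write $\P[W_n > \epsilon] \leq \E[\min(1, V_n/\epsilon)]$ and apply the bounded convergence theorem for convergence in probability, since $\min(1, V_n/\epsilon) \convp 0$ and is bounded by $1$; this is essentially how the cited source argues, but the two proofs are the same idea dressed differently, and either is acceptable.
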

\begin{lemma}[Asymptotic equivalence of tests]\label{lemma:asymptotic-equivalence-of-tests}
	Consider two hypothesis tests based on the same test statistic $T_n(\cxa, \cy, \mz)$ but different critical values:
	$$
	\phi_n^1(\cxa, \cy, \mz) \equiv \mathbbm{1}\left(T_n(\cxa, \cy, \mz)>C_n(\cxa, \cy, \mz)\right) ; \quad \phi_n^2(\cxa, \cy, \mz) \equiv \mathbbm{1}\left(T_n(\cxa, \cy, \mz)>z_{1-\alpha}\right) .
	$$	
	If the critical value of the first converges in probability to that of the second:
	$$
	C_n(\cxa, \cy, \mz) \stackrel{p}{\rightarrow} z_{1-\alpha}
	$$
	and the test statistic does not accumulate near the limiting critical value:
	\begin{equation}
	\lim _{\delta \rightarrow 0} \limsup _{n \rightarrow \infty} \mathbb{P}_{\mathcal{L}_n}\left[\left|T_n(\cxa, \cy, \mz)-z_{1-\alpha}\right| \leq \delta\right]=0,
	\label{eq:non-accumulation}
	\end{equation}
	then the two tests are asymptotically equivalent:
	$$
	\lim _{n \rightarrow \infty} \mathbb{P}_{\mathcal{L}_n}\left[\phi_n^1(\cxa, \cy, \mz)=\phi_n^2(\cxa, \cy, \mz)\right]=1 .
	$$
\end{lemma}

\begin{lemma} \label{lemma:estimation-error-to-chi-square-div}
	We have that
	$$
	(\widehat m(\rz{i}) - \E_{\law} (\widehat m(\exa{i},\rz{i}) |\rz{i},D_2) )^2 \leq \chi^2\left(\lawhat_{\exa{i}|\rz{i}}, \law_{\exa{i}|\rz{i}} | D_2\right) \E_\law[\xi_i^2|\rz{i},D_2]
	$$
	we can show that this implies
	\begin{equation*}
		(\widehat m(\rz{i}) -m(\rz{i}) )^2 \leq 2\left( 1 + \chi^2\left(\lawhat_{\exa{i}|\rz{i}}, \law_{\exa{i}|\rz{i}} | D_2\right) \right) \E_\law[ (\widehat m (\exa{i},\rz{i}) - m(\rz{i}))^2|\rz{i}, D_2]
	\end{equation*}
\end{lemma}
\begin{proof}[Proof of Lemma \ref{lemma:estimation-error-to-chi-square-div}]
	Using the variational representation of chi-squared divergence (see for example equation (7.91) in \citet{Polyanskiy2023})
	\begin{equation}\label{eq:variational-representation-of-chi-square-div}
		\chi^2(P,Q) = \sup_g \frac{(\E_P(g) - \E_Q(g))^2}{\mathrm{Var}_Q(g)} .
	\end{equation}  
	For our purposes we will condition throughout on $D_2$. Fix an $i\in[n]$ and additionally condition on $\rz{i}$, set $P = \widehat{\law}_{\exa{i} \mid \rz{i}}$ and $Q = \law_{\exa{i} \mid \rz{i}}$. Next we look at a particular $g \equiv \widehat m(\exa{i},\rz{i})$, which implies $\E_Q(g) = \E_{\law_{\exa{i} \mid \rz{i}}}[\widehat m(\exa{i}, \rz{i})\mid D_2] = \E_\law[\widehat m(\exa{i}, \rz{i})\mid \rz{i}, D_2]$ similarly $\E_P(g) = \widehat m(\rz{i})$. Observe that $\mathrm{Var}_Q(g) = \mathrm{Var}_{\law_{\exa{i} \mid \rz{i}}}(\widehat m(\exa{i}, \rz{i})\mid D_2) = \E_\law (\xi_i^2\mid \rz{i},D_2)$. We denote the conditional chi-squared divergence by $\chi^2\left(\lawhat_{\exa{i}|\rz{i}}, \law_{\exa{i}|\rz{i}} | D_2\right)$ which then implies by~\eqref{eq:variational-representation-of-chi-square-div} that
	$$
	(\widehat m(\rz{i}) - \E_{\law} (\widehat m(\exa{i},\rz{i}) |\rz{i},D_2) )^2 \leq \chi^2\left(\lawhat_{\exa{i}|\rz{i}}, \law_{\exa{i}|\rz{i}} | D_2\right) \E_\law(\xi_i^2|\rz{i},D_2),
	$$ 
	which verifies the first claim.
	
	We can bound $	(\widehat m(\rz{i}) -m(\rz{i}) )^2$ as follows:
	\begin{equation}\label{eq:decomposing-the-error-in-regression-of-Y-given-Z}
		(\widehat m(\rz{i}) -m(\rz{i}) )^2 \leq 2(\widehat m(\rz{i}) - \E_{\law} (\widehat m(\exa{i},\rz{i}) |\rz{i},D_2) )^2 + 2(\E_{\law} (\widehat m(\exa{i},\rz{i}) |\rz{i},D_2) - m(\rz{i}) )^2
	\end{equation}
	We have already upper bounded the first term. 
	
	Observe that using the fact that $(\E \exa)^2 \leq \E \exa^2$ 	we have that
	\begin{equation}\label{eq:bias-squared-is-less-than-mse}
		(\E_{\law} (\widehat m(\exa{i},\rz{i}) |\rz{i},D_2) - m(\rz{i}) )^2 \leq  \E_\law[ (\widehat m (\exa{i},\rz{i}) - m(\rz{i}))^2|\rz{i}, D_2].
	\end{equation} 
	Also using bias variance decomposition inequality we have 
	\begin{equation}\label{eq:variance-is-less-than-mse}
		\E_\law (\xi_i^2\mid \rz{i},D_2) = \mathrm{Var}_{\law}(\widehat m(\exa{i}, \rz{i})\mid \rz{i}, D_2)  \leq \E_\law[ (\widehat m (\exa{i},\rz{i}) - m(\rz{i}))^2|\rz{i}, D_2].
	\end{equation}
	Combining~\eqref{eq:bias-squared-is-less-than-mse} and~\eqref{eq:variance-is-less-than-mse} with~\eqref{eq:decomposing-the-error-in-regression-of-Y-given-Z} the result follows.
\end{proof}
\subsubsection{Proof of main results} \label{sec:proofs-of-main-results-sec-3}
The proof of the next result borrows some crucial ideas from \citet{Lundborg2022a} and builds on them.
\begin{proof}[Proof of Theorem \ref{thm:tower-pcm-type-I-error}]
	$T_n^{\textnormal{tPCM}}$ can be written as $T_N/T_D$ where $T_N = \frac{1}{\sqrt n \sigma_n} \sum_{i=1}^n R_i$ and $T_D = \widehat \sigma_n/\sigma_n$ where $\widehat \sigma_n^2 = \frac{1}{n}\sum_{i=1}^n R_i^2 - \left(\frac{1}{n} \sum_{i=1}^n R_i\right)^2$.
	We would show that $T_N \overset{d}{\to} N(0,1)$ and $T_D \overset{p}{\to} 1$. The first of these results is stated as Lemma~\ref{lem:numerator} below. The second is stated as Lemma~\ref{lem:sigma-hat-n-convergence} in Section~\ref{sec:asymp-equiv-HRT-PCM} and is proved below.
	
	The equivalence follows from the fact that $T_n - G_n = o_p(1)$ (as shown in Lemma \ref{lem:numerator}).  We next prove the uniform type-1 error control.
	
	We have already shown that for any sequence $\law_n \in\nulllaws_{n,j}\cap\mathscr{R}_n$, $\lim \sup_{n\to \infty} \E_{\law_n} [\phi_n^{\textnormal{tPCM}}] = \alpha$.
	Fix $\epsilon >0$, and for each $n$ let $\law_n^* \in \nulllaws_{n,j}\cap \mathscr{R}_n$ be such that
	$$
	\E_{\law^*_n} [\phi_n^{\textnormal{tPCM}}] \geq \sup_{\law_n \in\nulllaws_{n,j}\cap\mathscr{R}_n}\E_{\law_n} [\phi_n^{\textnormal{tPCM}}] -\epsilon
	$$ 
	Now we use the fact that $\law^*_n \in\nulllaws_{n,j}\cap\mathscr{R}_n$ to conclude that $\lim \sup_{n\to \infty} \E_{\law^*_n} [\phi_n^{\textnormal{tPCM}}] = \alpha$ which implies
	$$
	\lim \sup_{n\to \infty} \sup_{\law_n \in\nulllaws_{n,j}\cap\mathscr{R}_n}\E_{\law_n} [\phi_n^{\textnormal{tPCM}}]  \leq \alpha  + \epsilon.
	$$
	Since $\epsilon>0$ is arbitrary we have that $\lim \sup_{n\to \infty} \sup_{\law_n \in\nulllaws_{n,j}\cap\mathscr{R}_n}\E_{\law_n} [\phi_n^{\textnormal{tPCM}}]  \leq \alpha  $ i.e uniform type-I error control.
\end{proof}

\begin{lemma} \label{lem:numerator}
Under the assumptions of Theorem \ref{thm:tower-pcm-type-I-error}, we have that $T_n \overset{d}{\to} N(0,1)$.
\end{lemma}
\begin{proof}

	First we analyze $T_N$ for that we decompose $T_N$ into four terms as follows:
	\begin{align*}
		T_N &= \underbrace{\frac{1}{\sqrt n \sigma_n} \sum \varepsilon_i\xi_i}_{G_n}\
		- \underbrace{\frac{1}{\sqrt n \sigma_n} \sum \varepsilon_i(\widehat m(\rz{i}) - \E_{\law_n} [\widehat m(\exa{i},\rz{i}) |\rz{i},D_2])}_{A_n} 
		- \underbrace{\frac{1}{\sqrt n\sigma_n} \sum\xi_i(\widehat m(\rz{i}) - m(\rz{i}))}_{B_n} \\
		&+ \underbrace{\frac{1}{\sqrt n \sigma_n} \sum (\widehat m(\rz{i}) - m(\rz{i}))(\widehat m(\rz{i}) - \E_{\law_n} [\widehat m(\exa{i},\rz{i}) |\rz{i},D_2])}_{C_n}
	\end{align*}
	\paragraph{Term $G_n$}
	We use Lemma \ref{lemma:S8}, $\varepsilon_i\xi_i$ are conditionally independent given $\mathcal F_n \equiv \sigma(D_2)$. Also note that under the null conditional on $\mathcal{F}_n$, $\varepsilon_i\xi_i/\sigma_n$ are identically distributed random variables with mean zero and unit variance. Hence if we assume (assumption~\eqref{eq:CLT-condition}) that 
	$$
	\frac{1}{\sigma_n^{2+\delta}}\E_{\law_n}\left[|\varepsilon\xi|^{2+\delta} \mid D_2\right] = o_p(n^{\delta/2})
	$$
	we have that  $G_n \overset{d}{\to} N(0,1)$.
	\paragraph{Term $A_n$}
	By Lemma \ref{lemma:conditional-expectation-covergence-to-unconditional-probability-convergence} it is enough to show $\E [A_n^2\, |\, \mz,D_2] \overset{p}{\to} 0$. Using the fact  that conditionally on $ \mz,D_2$ the summands of $A_n$ are mean zero and independent we have that it is sufficient to show
	$$
	\E_{\law_n} [A_n^2 | \mz,D_2] \overset{p}{\to} 0 \iff \frac{1}{n \sigma^2_n} \sum_{i=1}^n \E_{\law_n} [\varepsilon^2_i| \rz{i},D_2](\widehat m(\rz{i}) - \E_{\law_n} [\widehat m(\exa{i},\rz{i}) |\rz{i},D_2])^2 \overset{p}{\to} 0,
	$$
	Using Lemma \ref{lemma:estimation-error-to-chi-square-div} we have that the above display is implied by
	$$
	\frac{1}{n\sigma_n^2} \sum_{i=1}^n \chi^2\left(\widehat\law_{\exa{i}|\rz{i}}, \law_{\exa{i}|\rz{i}} | D_2\right) \E_{\law_n}[ \xi_i^2 | \rz{i}, D_2] \E_{\law_n}[ \varepsilon_i^2 | \rz{i}]\overset{p}{\to} 0.
	$$
	Next we use assumption~\eqref{eq:var-bounded} to conclude that it is sufficient to have
	$$
	\frac{1}{n\sigma_n^2} \sum_{i=1}^n \chi^2\left(\widehat\law_{\exa{i}|\rz{i}}, \law_{\exa{i}|\rz{i}} | D_2\right) \E_{\law_n}[\xi_i^2 | \rz{i}, D_2] \overset{p}{\to} 0.
	$$
	which is our assumption~\eqref{eq:consistency-of-chi-square-div}.
	\paragraph{Term $B_n$}
	Again by Lemma \ref{lemma:conditional-expectation-covergence-to-unconditional-probability-convergence} it is enough to show $\E [B_n^2 | \mz,D_2] \overset{p}{\to} 0$. Using the fact  that under the null conditionally on $ \mz,D_2$ the summands of $B_n$ are mean zero and independent we have that it is sufficient to show
	\begin{equation}\label{eq:expectation-B_n-squared}
		\frac{1}{ n\sigma^2_n} \sum \E[ \xi^2_i |\rz{i}, D_2] (\widehat m(\rz{i}) - m(\rz{i}))^2 \overset{p}{\to} 0
	\end{equation}
	Using the Lemma \ref{lemma:estimation-error-to-chi-square-div} we have
	$$
	(\widehat m(\rz{i}) -m(\rz{i}) )^2 \leq 2\left( 1 + \chi^2\left(\lawhat_{\exa{i}|\rz{i}}, \law_{\exa{i}|\rz{i}} | D_2\right) \right) \E[ (\widehat m (\exa{i},\rz{i}) - m(\rz{i}))^2|\rz{i}, D_2]
	$$
	using assumption~\eqref{eq:uniform-bound-div} we have that~\eqref{eq:expectation-B_n-squared} is implied by 
	\begin{align}\label{eq:B_n-squared-term1}
		\frac{1}{n\sigma_n^2}\sum \E[ (\widehat m (\exa{i},\rz{i}) - m(\rz{i}))^2|\rz{i}, D_2] \E [\xi_i^2|\rz{i},D_2] \overset{p}{\to} 0.
	\end{align}
	which is our assumption~\eqref{eq:consitency-of-reg-func}.
	\paragraph{Term $C_n$} By Cauchy-Schwartz inequality we can upper bound $C_n$ by
	\begin{align*}
		C_n &\leq \frac{1}{\sqrt n \sigma_n} \left(\sum_{i=1}^n (\widehat m(\rz{i}) - m(\rz{i}))^2\right)^{1/2} \left(\sum_{i=1}^n(\widehat m(\rz{i}) - \E_{\law_n} [\widehat m(\exa{i},\rz{i}) |\rz{i},D_2])^2\right)^{1/2}\\
		&= \sqrt n \left(\frac{1}{n}\sum_{i=1}^n (\widehat m(\rz{i}) - m(\rz{i}))^2\right)^{1/2} \left(\frac{1}{ n \sigma^2_n}\sum_{i=1}^n(\widehat m(\rz{i}) - \E_{\law_n} [\widehat m(\exa{i},\rz{i}) |\rz{i},D_2])^2\right)^{1/2}.
	\end{align*}
	Hence it is enough to show that
	\begin{equation}\label{eq:upper-bd-C_n}
		  n \left(\frac{1}{n}\sum_{i=1}^n (\widehat m(\rz{i}) - m(\rz{i}))^2\right) \left(\frac{1}{ n \sigma^2_n}\sum_{i=1}^n(\widehat m(\rz{i}) - \E_{\law_n} [\widehat m(\exa{i},\rz{i}) |\rz{i},D_2])^2\right) = o_p(1)
	\end{equation}
	Using Lemma \ref{lemma:estimation-error-to-chi-square-div} we conclude that the above display is implied by
	\begin{align*}
		&\left(\frac{1}{n} \sum_{i=1}^n \left(1 +  \chi^2\left(\lawhat_{\exa{i}|\rz{i}}, \law_{\exa{i}|\rz{i}} | D_2\right)\right)  \E_{\law_n}[ (\widehat m(\exa{i},\rz{i}) - m(\rz{i}))^2|\rz{i},D_2]\right)	\\
		&\times\left(\frac{1}{n \sigma_n^2} \sum_{i=1}^n \chi^2\left(\lawhat_{\exa{i}|\rz{i}}, \law_{\exa{i}|\rz{i}} | D_2\right) \E_{\law_n} [\xi_i^2 |\rz{i},D_2]\right) = o_p(n^{-1})
	\end{align*}
	Under our assumption~\eqref{eq:uniform-bound-div} it is sufficient to have 
	$$
	\left(\frac{1}{n} \sum   \E_{\law_n}[ (\widehat m(\exa{i},\rz{i}) - m(\rz{i}))^2|\rz{i},D_2]\right)
	\times\left(\frac{1}{n \sigma_n^2} \sum_{i=1}^n \chi^2\left(\lawhat_{\exa{i}|\rz{i}}, \law_{\exa{i}|\rz{i}} | D_2\right) \E_{\law_n} [\xi_i^2 |\rz{i},D_2]\right) = o_p(n^{-1})
	$$
	which is our assumption~\eqref{eq:doubly robust-rate}.
	
	Combining the convergence properties of the four terms,  $T_N \convd N(0,1)$ by Slutsky's theorem. 
\end{proof}
	
\begin{proof}[Proof of Lemma~\ref{lem:sigma-hat-n-convergence}]
	
	Let us denote $u_i = m(\rz{i}) - \widehat m(\rz{i})$ and $v_i =  \E_{\law_n}( \widehat m(\exa{i},\rz{i}) |\rz{i},D_2) -  \widehat m(\rz{i})$. Then we have that $R_i = (\varepsilon_i + u_i)(\xi_i + v_i)$. We have shown that $\frac{1}{\sqrt n \sigma_n} \sum_{i=1}^n R_i \overset{d}{\to} N(0,1)$ this implies $\frac{1}{n \sigma_n} \sum_{i=1}^n R_i \overset{p}{\to} 0$. Hence it is enough to show that $\frac{1}{n \sigma^2_n} \sum_{i=1}^n R^2_i \overset{p}{\to} 1$,which would imply $T_D \overset{p}{\to} 1$. We decompose the term as
	\begin{gather*}
		\frac{1}{n \sigma^2_n} \sum_{i=1}^n R^2_i = \overbrace{\frac{1}{n\sigma_n^2} \sum_{i=1}^n \varepsilon_i^2\xi^2_i}^{S_1} +  \overbrace{\frac{1}{n\sigma_n^2} \sum_{i=1}^n v_i^2\varepsilon_i^2}^{S_2} + \overbrace{ \frac{1}{n\sigma_n^2} \sum_{i=1}^n u_i^2\xi^2_i}^{S_3} + \overbrace{ \frac{1}{n\sigma_n^2} \sum_{i=1}^n u_i^2v_i^2}^{S_4} \\
		+ 2\underbrace{\frac{1}{n\sigma_n^2} \sum_{i=1}^n v_i\varepsilon_i^2\xi_i}_{C_1} + 2\underbrace{ \frac{1}{n\sigma_n^2} \sum_{i=1}^n u_i \varepsilon_i\xi^2_i}_{C_2} + 4\underbrace{ \frac{1}{n\sigma_n^2} \sum_{i=1}^n \varepsilon_i\xi_iu_iv_i}_{C_3}	\\
		2\underbrace{\frac{1}{n\sigma_n^2} \sum_{i=1}^n u_iv^2_i\varepsilon_i}_{C_4} + 2\underbrace{ \frac{1}{n\sigma_n^2} \sum_{i=1}^n u^2_iv_i \xi_i}_{C_5} 
	\end{gather*}
	Let us look at one term at a time. We would show that all the terms except $S_1$ are $o_P(1)$ terms and $S_1 \overset{p}{\to}  1$. For showing $S_1 \overset{p}{\to} 1$ we invoke Lemma \ref{lemma:S9}.

	 Observe that $\frac{1}{\sigma^2_n} \varepsilon_i^2\xi_i^2$ is an i.i.d sequence conditional on $D_2$ which mean $1$. Hence if we assume $\sigma_n^{-{(1+\delta)}} \E\left(|\varepsilon\xi|^{1+\delta} \mid \mathcal{F}_n\right) = o_P(n^{\delta})$ (which is implied by the moment conditions needed for CLT a.k.a~\eqref{eq:CLT-condition}) then we have that $S_1$ converges to $1$ in probability.
	
	We have that $S_2,S_3 =o_P(1)$ because $\E(S_2| \mz,D_2) =\E(A_n^2| \mz,D_2)= o_p(1)$ and $\E(S_3| \mz,D_2) = \E(B_n^2| \mz,D_2)= o_p(1)$ as already shown. For $S_4$ observe that 
	$$
	S_4 \leq \frac{1}{n \sigma_n^2} \sum_{i=1}^n u_i^2 \sum_{i=1}^n v_i^2 = o_p(1)
	$$
	which is implied by~\eqref{eq:upper-bd-C_n}, which we have already proved using
	~\eqref{eq:uniform-bound-div} and~\eqref{eq:doubly robust-rate}.
	Next observe that 
	$$
	C_1 \leq \left(\frac{1}{n \sigma_n^2} \sum_{i=1}^n \varepsilon_i^2\xi_i^2 \right)^{1/2}\left(\frac{1}{n \sigma_n^2} \sum_{i=1}^n  v^2_i \varepsilon^2_i \right)^{1/2} = S_1^{1/2} S_2^{1/2} = o_p(1)
	$$
	$$
	\quad C_2 \leq S_1^{1/2} S_3^{1/2} \quad C_3 \leq S_3^{1/2}S_4^{1/2}
	$$
	$$
	C_4 \leq S_4^{1/2}S_2^{1/2} \quad C_5 \leq S_4^{1/2} S_3^{1/2}
	$$
	Since we have that $S_1 =O_p(1)$ and $S_i = o_p(1)$ for $i=1,2,3$ we have that $C_k = o_p(1)$ for $k = 1,\ldots,5$.
	
	Combining everything so far we have that $\phi^{\textnormal{tPCM}}_n$ is equivalent to the test: reject $H_0$ if
	$$
	\frac{1}{\sqrt n\sigma_n} \sum_{i=1}^n \varepsilon_i\xi_i \geq \frac{\widehat \sigma_n}{\sigma_n}z_{1-\alpha} -A_n -B_n - C_n
	$$
	Now note that the RHS converges in probability to $z_{1-\alpha}$ and the oracle test statistic converges to $N(0,1)$ (hence does not accumulate near $z_{1-\alpha}$), hence by Lemma \ref{lemma:asymptotic-equivalence-of-tests} we have that $\phi^{\textnormal{vPCM}}_n$ is equivalent to $\phi^{\textnormal{oracle}}_n$. 
\end{proof}

\begin{proof}[Proof of Lemma \ref{lemma:linear-model-type-I-error}]
	For our problem $\law(\exa|\rz) \sim N(\rz^T\eta,1)  $ and $ \lawhat(\exa|\rz) = N(\rz^T \hat \eta, 1)$. We also have that $m(\exa,\rz) = \beta \exa + \rz^T \gamma$ and $\widehat m(\exa,\rz) = \hat \beta \exa + \rz^T\hat \gamma$. Observe that $\xi_i = \widehat m(\exa{i},\rz{i}) - \E_\law (\widehat m(\exa{i},\rz{i}) | \rz{i},D_2)  = \hat\beta (\exa{i} - \E (\exa{i}|\rz{i})) = \hat \beta \delta_i $.
	
	Let us verify~\eqref{eq:var-bounded}. Observe that $\mathrm{Var}(\varepsilon_i | \rz{i}, D_2) = 1$ and hence the required condition holds.
	
	Next, we compute $\sigma^2_n$ as $\mathrm{Var}_\law[\xi_i | \rz{i}, D_2] = \hat{\beta}^2$, implying $\sigma_n^2 = \hat{\beta}^2$. We also evaluate the $\chi^2$ divergence between $\law_{\exa | \rz}$ and $\lawhat_{\exa | \rz}$ using the identity (the identity can be verified by directly evaluating the divergence):
	$$
	\chi^2(N(\mu,\sigma^2), N(\nu, \sigma^2)) = \exp\left(\frac{1}{\sigma^2}(\mu-\nu)^2\right)-1 
	$$
	which yields $\chi^2(\law_{\exa{i} | \rz{i}}, \lawhat_{\exa{i} | \rz{i}} \,|\, D_2) = \exp\left(\frac{1}{\sigma^2}[\rz{i} ^T(\hat\eta-\eta)]^2\right)-1$.
	
	We observe that $\max_{i \in [n]} |\rz{i}^T(\hat \eta - \eta) |\leq \|\rz{i}\|_2 \|\widehat\eta - \eta\|_2 \leq c_{\rz}\|\widehat\eta - \eta\|_2 \leq 1$ with high probability (since $\|\widehat\eta -\eta\|_2 \convp 0$), hence on a high probability set:
	\begin{equation}\label{eq:bound-on-chi-square-for-LM}
		\left(\exp\left(\frac{1}{\sigma^2}[\rz{i}^T(\hat\eta-\eta)]^2\right) - 1\right) \leq 2\frac{1}{\sigma^2}[\rz{i}^T(\hat\eta-\eta)]^2,
	\end{equation}
	(where we used the fact that $e^x -1 \leq 2x\,\forall\, 0\leq x\leq 1$), this allows us to   show~\eqref{eq:uniform-bound-div} which follows using the property that $\max_{i \in [n]} |\rz{i}^T(\hat \eta - \eta) |\leq 1$ with high probability.
	
	Let us look at the relevant error~\eqref{eq:consistency-of-chi-square-div} which simplifies to
	$$
	\frac{1}{n}\sum  \chi^2( \law_{\exa{i} | \rz{i}} , \lawhat_{\exa{i}|\rz{i}}) = \frac{1}{n} \sum  \left(\exp\left(\frac{1}{\sigma^2}[\rz{i}^T(\hat\eta-\eta)]^2\right) - 1\right).
	$$
	Using~\eqref{eq:bound-on-chi-square-for-LM} it implies that it is enough to show that	
	$$
	\frac{1}{n} \sum_{i=1}^n (\rz{i}^T\hat \eta - \rz{i}^T\eta)^2 \overset{p}{\to} 0
	$$
	which is clearly true because LHS is upper bounded by $c_{\rz}^2\|\widehat\eta - \eta\|_2^2 $ which goes to zero in probability at a rate $\frac{1}{n}$.
	Let us look at the estimation error~\eqref{eq:consitency-of-reg-func}
	$$
	\E_\law[(\widehat m(\exa{i},\rz{i})-m(\rz{i}))^2| \rz{i},D_2]  = \E_\law[(\exa{i}\hat \beta + \rz{i}(\hat \gamma - \gamma))^2 | \rz{i},D_2] .
	$$
	It is enough to show that
	$$
	\frac{1}{n} \sum_{i=1}^n \E_\law[(\exa{i}\hat \beta + \rz{i}(\hat \gamma - \gamma))^2 | \rz{i}] \overset{p}{\to} 0
	$$
	By further analysis, we obtain:
	\begin{align*}
		\E_\law[(\exa{i}\hat \beta + \rz{i}(\hat \gamma - \gamma))^2 | \rz{i}] &=  \E_\law[(\rz{i}^T\eta + \delta_i)\hat \beta + \rz{i}(\hat \gamma - \gamma))^2 | \rz{i}]\\
		&\leq 3 \left(\widehat\beta^2 \E [\delta_i^2|\rz{i}] + \widehat\beta^2 (\rz{i}^T\eta)^2 + (\rz{i}^T(\widehat\gamma - \gamma))^2 \right)	\\	
		&\leq 3\left(\widehat\beta^2 + \widehat\beta^2 c^2_{\rz} \|\eta\|_2^2 + c^2_{\rz} \|\widehat \gamma - \gamma\|_2^2\right) = O_p\left(\frac{1}{n}\right) 
	\end{align*}
	where have used the fact that $\sqrt n\widehat\beta = O_p(1)$ and $\sqrt n\|\widehat \gamma - \gamma\|_2 = O_p(1)$.
	The last criterion~\eqref{eq:doubly robust-rate} is product of the two rates going to zero at rate $1/n$ which  is satisfed trivially because both the rates go to zero at rate $1/n$.
\end{proof}
\subsubsection{Proof of Theorem \ref{thm:spline-type-I-error}}
\begin{proof}
	Specifically for the proof this theorem and the corresponding lemmas (pertaining to the spline example) we use $d$ to denote the dimension of the covariates instead of $p$.
	
	Let us denote the estimate of $\E(\ey|\exa,\rz)$ by $\hat m(\exa,\rz)$, which is obtained by an Ordinary Least Squares (OLS) regression of $\cy$ on the spline basis $\boldsymbol{\phi}(\cxa,\mz)$ on the dataset $D_2$. Hence,
	\[
	\hat m(x,z) = \boldsymbol{\phi}(x,z)^T \hat \beta_{XZ} \quad \text{ where } \quad \hat \beta_{XZ} = \hat{\Sigma}_{XZ}^{-1} \left(\frac{1}{n}\sum_{i=n+1}^{2n} \ey{i} \boldsymbol{\phi}(\exa{i},\rz{i})\right).
	\]
	where $\hat \Sigma_{XZ} = \frac{1}{n} \sum_{i=n+1}^{2n} \bm \phi(\exa{i},\rz{i})\bm \phi(\exa{i},\rz{i})^\top$.
	Let us verify the conditions for maintaining valid type-1 error.
	\paragraph{Verifying~\eqref{eq:CLT-condition}:} Note that $\E(|\varepsilon \xi|^{2+\delta}) = \E(\E(|\varepsilon \xi|^{2+\delta}\mid \rz,D_2)) = \E [\E(|\varepsilon|^{2+\delta} \mid Z)\E(|\xi|^{2+\delta} \mid \rz,D_2)] \leq C\E [\E(|\xi|^{2+\delta} \mid \rz,D_2)]$ where we used~\eqref{ass:bounded-moment-eps} for the last inequality. Finally using~\eqref{eq:upper-bound-xi} we have that $\E(|\varepsilon \xi|^{2+\delta}) \leq 2^{2+\delta}C \|\Pi \hat\beta_{XZ}\|_\infty^{2+\delta}\|$.
	
	Next we use~\eqref{ass:restricted-eigen-value-lower-bound} which implies~\eqref{eq:lower-bound-sigma} to obtain $\sigma_n^{2+\delta} \geq c^{1+\delta/2} K_{XZ}^{-1-\delta/2}\|\Pi\hat \beta_{XZ}\|_\infty^{2+\delta}$. Putting everything together we have
	$$
	\frac{1}{\sigma_n^{2+\delta}}\E\left(|\varepsilon\xi|^{2+\delta} \mid D_2\right) \leq \frac{2^{2+\delta}C \|\Pi \hat\beta_{XZ}\|_\infty^{2+\delta}}{c K_{XZ}^{-1-\delta/2}\|\Pi\hat \beta_{XZ}\|_2^{2+\delta}} \leq \frac{2^{2+\delta}}{c^{1+\delta/2}} K_{XZ}^{1+\delta/2} = o(n^{\delta/2})
	$$
	The last inequality follows from the fact that $K_{XZ} = n^{\frac{1}{2s/d+1}}$ and $s/d > 1/\delta$.

	\paragraph{Verifying~\eqref{eq:var-bounded}} By our assumption~\eqref{ass:bounded-moment-eps} we have that $\E(\varepsilon^2\mid \rz) \leq \E\E(\varepsilon^{2+\delta}\mid \rz)^{2/(2+\delta)} \leq C^{2/(2+\delta)}$.  Using~\eqref{eq:upper-bound-xi} we have that 
	$$
	\E (\xi^2 \mid \rz,D_2) \leq 2\|\Pi \hat \beta_{XZ}\|_\infty^2  
	$$
	Then by (36) in proof of Theorem 6 in \cite{Lundborg2022a} we have 
	$ \|\Pi \hat \beta_{XZ}\|_\infty =O_p(K_{XZ}n^{-1/2} +1) = O_p(1)$ the last equality follows from the fact that $s/d>1/2$.
	
	\paragraph{Verifying~\eqref{eq:consistency-of-chi-square-div}} Using~\eqref{eq:upper-bound-xi} and~\eqref{eq:lower-bound-sigma} we have
	\begin{align*}
		\frac{1}{n\sigma_n^2} \sum_{i=1}^n \chi^2\left(\lawhat_{\exa{i}|\rz{i}}, \law_{X|Z_i} | D_2\right) \E[ \xi_i^2 | Z_i, D_2]&\leq \frac{ 2\|\Pi \hat\beta_{XZ}\|_\infty^2}{n c K_{XZ}^{-1} \|\Pi \hat \beta\|_2^2} \sum_{i=1}^n \chi^2\left(\lawhat_{\exa{i}|\rz{i}}, \law_{\exa{i}|\rz{i}} | D_2\right)\\
		&\lesssim K_{XZ} \frac{1}{n}  \sum_{i=1}^n \chi^2\left(\lawhat_{\exa{i}|\rz{i}}, \law_{\exa{i}|\rz{i}} | D_2\right)\\
		&\lesssim n^{1/(2s/d+1)} o_p( n^{-2/(2s/d+1)}) =o_p(n^{-1/(2s/d+1)}) =o_p(1)
	\end{align*}
	where we used~\eqref{ass:rate-cond-chi-square-div} in the last line. 
	
	\paragraph{Verifying~\eqref{eq:consitency-of-reg-func}} Using assumptions~\eqref{ass:bias-control},~\eqref{ass:strong-density-assumption},~\eqref{ass:bounded-moment-eps} and~\eqref{ass:restricted-eigen-value-lower-bound} in conjunction with Lemma \ref{lemma:consistency-of-reg-func} directly gives you the result.
	
	\paragraph{Verifying~\eqref{eq:uniform-bound-div}} We just assume it to be true.
	
	\paragraph{Verifying~\eqref{eq:doubly robust-rate}} Using Lemma \ref{lemma:mse-splines-bound} and looking back at the verification of~\eqref{eq:consistency-of-chi-square-div} we have that we need he following condition to hold
	\[
	O_p\left(K_{XZ}^{-2s/d} + \frac{K_{XZ}}{n}\right)\cdot o_p(n^{-1/(2s/d+1)}) = o_p(n^{-1})
	\]
	Note that $O_p\left(K_{XZ}^{-2s/d} + \frac{K_{XZ}}{n}\right) = O_p(n^{-2s/(2s+d)})$ (by our choice of $K_{XZ}$) which implies that the LHS is given by $O_p(n^{-2s/(2s+d)})  o_p(n^{-d/(2s+d)}) = o_p(n^{-1})$.
\end{proof}

\begin{lemma}\label{lemma:mse-splines-bound}
	Assume that $K_{XZ} = O(n^{1-\omega})$ for some $\omega > 0$. Under Assumptions~\eqref{ass:bias-control},~\eqref{ass:strong-density-assumption}, and~\eqref{ass:bounded-moment-eps}, the expected mean squared error (MSE) satisfies
	\[
	\frac{1}{n} \sum_{i=1}^n \E\left((\hat{m}(\exa{i},\rz{i}) - m(\rz{i}))^2 \mid \rz{i}, D_2\right) =O_p\left(K_{XZ}^{-2s/d} + \frac{K_{XZ}}{n}\right).
	\]
\end{lemma}

\begin{proof}
	We aim to control the expected MSE, given by
	\[
	\frac{1}{n} \sum_{i=1}^n \E\left((\hat{m}(\exa{i},\rz{i}) - m(\rz{i}))^2 \mid D_2\right).
	\]
	Note that the conditioning on $\rz{i}, D_2$ in the lemma statement means we are considering the expectation conditional on the observed covariates for a specific sample. We proceed by analyzing the conditional expectation (on just $D_2$) and then discuss how it implies the lemma later.
	
	Using the inequality $(a+b)^2 \leq 2a^2 + 2b^2$, we decompose the squared error:
	\begin{align*}
		\E(\hat{m}(\exa{i},\rz{i}) - m(\rz{i}) \mid D_2)^2
		&\leq 2\E(\hat{m}(\exa{i},\rz{i}) - \bm \phi(\exa{i},\rz{i})^\top \beta_{XZ} \mid D_2)^2 \\
		&\quad + 2 \E(\bm \phi(\exa{i},\rz{i})^\top \beta_{XZ} - m(\rz{i}) \mid D_2)^2.
	\end{align*}
	Thus, the expected MSE can be bounded above by
	\begin{align*}
		& \frac{1}{n} \sum_{i=1}^n \E (\hat{m}(\exa{i},\rz{i}) - m(\rz{i}))^2 \\
		&\leq 2\frac{1}{n} \sum_{i=1}^n \E (\hat{m}(\exa{i},\rz{i}) - \bm \phi(\exa{i},\rz{i})^\top \beta_{XZ} \mid D_2)^2 \quad (\textbf{Term I}) \\
		&\quad + 2\frac{1}{n} \sum_{i=1}^n \E(\bm\phi(\exa{i},\rz{i})^\top \beta_{XZ} - m(\rz{i}))^2 \quad (\textbf{Term II}).
	\end{align*}
	
	\paragraph{Analysis of Term II:}
	Term II represents the \textit{squared bias} due to approximating $m(\rz{i})$ with the spline basis expansion $\bm \phi(\exa{i},\rz{i})^\top \beta_{XZ}$. Let $m^+(\exa{i}, \rz{i}) = \bm \phi(\exa{i},\rz{i})^\top \beta_{XZ}$ denote the best spline approximation of $m(\exa{i}, \rz{i})$.
	\begin{align*}
		\textbf{Term II} &= \frac{1}{n} \sum_{i=1}^n\E (\bm \phi(\exa{i},\rz{i})^\top \beta_{XZ} - m(\rz{i}))^2 \\
		&= \frac{1}{n} \sum_{i=1}^n \E(m^+(\exa{i},\rz{i}) - m(\rz{i}))^2.
	\end{align*}
	By Assumption~\eqref{ass:bias-control}, which presumably bounds the approximation error of the spline basis, we have
	\[
	\textbf{Term II} \leq \|m^+ -m\|_\infty^2 = O(K_{XZ}^{-2s/d}).
	\]
	
	\paragraph{Analysis of Term I:}
	Term I represents the \textit{variance component} of the estimator. We can rewrite it using the definition of $\hat \beta_{XZ}$:
	\begin{align*}
		\textbf{Term I} &= \frac{1}{n} \sum_{i=1}^n \E\left( \bm \phi(\exa{i},\rz{i})^\top (\hat{\beta}_{XZ} - \beta_{XZ}) \mid D_2 \right)^2 \\
		&= (\hat{\beta}_{XZ} - \beta_{XZ})^\top \E\left[\left( \frac{1}{n} \sum_{i=1}^n \bm \phi(\exa{i},\rz{i})\bm \phi(\exa{i},\rz{i})^\top \right)\right](\hat{\beta}_{XZ} - \beta_{XZ}) .
	\end{align*}
	Let $\hat \Sigma_{XZ} = \frac{1}{n} \sum_{i=n+1}^{2n} \bm \phi(\exa{i},\rz{i})\bm \phi(\exa{i},\rz{i})^\top$ and $\E (\hat \Sigma_{XZ}) = \Sigma_{XZ}$. From the normal equations for $\hat{\beta}_{XZ}$, we have:
	\begin{align*}
		\hat \Sigma_{XZ} (\hat{\beta}_{XZ} - \beta_{XZ})
		&= \frac{1}{n} \sum_{i=n+1}^{2n} \ey{i} \bm\phi(\exa{i},\rz{i}) - \hat \Sigma_{XZ} \beta_{XZ} \\
		&= \frac{1}{n} \sum_{i=n+1}^{2n} (\ey{i} - \bm \phi(\exa{i},\rz{i})^\top\beta_{XZ}) \bm\phi(\exa{i},\rz{i}).
	\end{align*}
	Let $h_i = m(\exa{i},\rz{i}) - m^+(\exa{i},\rz{i})$ be the approximation error, and $\varepsilon_i = \ey{i} - m(\exa{i},\rz{i})$ be the noise term. Then $\ey{i} - \bm \phi(\exa{i},\rz{i})^\top\beta_{XZ} = \ey{i} - m^+(\exa{i},\rz{i}) = (\ey{i} - m(\exa{i},\rz{i})) + (m(\exa{i},\rz{i}) - m^+(\exa{i},\rz{i})) = \varepsilon_i + h_i$.
	So,
	\[
	\hat \Sigma_{XZ} (\hat{\beta}_{XZ} - \beta_{XZ}) = \frac{1}{n} \sum_{i=n+1}^{2n} (h_i + \varepsilon_i) \bm \phi(\exa{i},\rz{i}).
	\]
	Substituting this back into the expression for Term I:
	\begin{align*}
		\textbf{Term I} &= \left( \frac{1}{n} \sum_{i=n+1}^{2n} (h_i + \varepsilon_i)\bm \phi(\exa{i},\rz{i}) \right)^\top \hat{\Sigma}_{XZ}^{-1}\Sigma_{XZ}\hat{\Sigma}_{XZ}^{-1}
		\left( \frac{1}{n} \sum_{i=n+1}^{2n} (h_i + \varepsilon_i)\bm \phi(\exa{i},\rz{i}) \right)\\
		&\leq \| \Sigma_{XZ}\|_{op} \| \hat \Sigma^{-1}_{XZ}\|_{op} \left\|\hat \Sigma_{XZ}^{-1/2}\left(\frac{1}{n} \sum_{i=n+1}^{2n} (h_i + \varepsilon_i)\bm \phi(\exa{i},\rz{i})\right)\right\|_2^2.
	\end{align*}

	Using assumption~\eqref{ass:strong-density-assumption} and the fact that $ K_{XZ} = O(n^{1-\omega})$ for $\omega >0$, along with Proposition S23 (d) from \cite{Lundborg2022a} we have that $ \| \Sigma_{XZ}\|_{op}   =O(K_{XZ}^{-1})$. Under the same conditions using Proposition S28 from \cite{Lundborg2022a} we have $\| \hat \Sigma^{-1}_{XZ}\|_{op} =O_p(K_{XZ})$. Finally using assumptions \eqref{ass:bias-control}, \eqref{ass:strong-density-assumption} and \eqref{ass:bounded-moment-eps}, along with following the proof of Proposition S29 (specifically equation (S49) and (S50) ) we have that 
   $$
   \left\|\hat \Sigma_{XZ}^{-1/2}\left(\frac{1}{n} \sum_{i=n+1}^{2n} (h_i + \varepsilon_i)\bm \phi(\exa{i},\rz{i})\right)\right\|^2_2 = O_p\left(\frac{K^{-(2s/d -1)}_{XZ}}{n} + \frac{K_{XZ}}{n}\right)
   $$
Putting everything together we have (since $s/d >1/2$):
	\[
	\textbf{Term I} = O_p\left(\frac{K_{XZ}}{n}\right).
	\]
	Combining the bounds for Term I and Term II, we get:
	\[
	\E(\text{MSE}\mid D_2) = O_p\left(K_{XZ}^{-2s/d} + \frac{K_{XZ}}{n}\right).
	\]
	The convergence in probability ($O_p$) for the overall MSE is implies by the fact that convergence in expectation implies in probability convergence.
\end{proof}

\begin{lemma}\label{lemma:consistency-of-reg-func}
	Assume~\eqref{ass:bounded-moment-eps}, ~\eqref{ass:restricted-eigen-value-lower-bound}, and all the conditions from Lemma \ref{lemma:mse-splines-bound}. Then, the following rate of convergence holds for the weighted expected mean squared error:
	\[
	\frac{1}{n \sigma_n^2} \sum_{i=1}^n \E\left[ (\hat{m}(\exa{i},\rz{i}) - m(\exa{i}, \rz{i}))^2 \mid \rz{i},D_2\right] \E(\xi_i^2\mid \rz{i},D_2) =O_p\left(K_{XZ}^{-2s/d+1} + \frac{K^2_{XZ}}{n}\right).
	\]
	Here, $\xi_i = \hat{m}(\exa{i},\rz{i}) - \E(\hat{m}(\exa{i},\rz{i}) \mid \rz{i}, \hat{m})$, and $\sigma_n^2 = \E(\varepsilon^2\xi^2\mid D_2)$.
\end{lemma}

\begin{proof}
	The estimated regression function $\hat{m}(x,z)$  by Proposition 36 of \cite{Lundborg2022a} can be written as
	\[
	\hat{m}(x,z) = \bm \phi(x,z)^\top \Pi \hat{\beta}_{XZ} + \bm \phi^Z(z)^\top \hat{\bar{\beta}}_{XZ}.
	\]
	Recall the definition of $\xi_i$ :
	\begin{align*}
		\xi_i &= \hat{m}(\exa{i},\rz{i}) - \E(\hat{m}(\exa{i},\rz{i}) \mid \rz{i}, \hat{m}) \\
		&= \bm \phi(\exa{i},\rz{i})^\top \Pi \hat{\beta}_{XZ} + \bm \phi^Z(\rz{i})^\top \hat{\bar{\beta}}_{XZ} - \E\left(\bm \phi(\exa{i},\rz{i})^\top \Pi \hat{\beta}_{XZ} + \bm \phi^Z(\rz{i})^\top \hat{\bar{\beta}}_{XZ} \mid \rz{i},\hat m\right) \\
		&= (\Pi \hat{\beta}_{XZ})^\top \left(\bm \phi(\exa{i},\rz{i}) - \E(\bm \phi(\exa{i},\rz{i}) \mid \rz{i})\right).
	\end{align*}
	This implies that
	\begin{align}\notag
		\xi_i^2 &= \left|(\Pi \hat{\beta}_{XZ})^\top \left(\bm \phi(\exa{i},\rz{i}) - \E(\bm \phi(\exa{i},\rz{i}) \mid \rz{i})\right)\right|^2 \\\notag
		&\leq \norm{\Pi \hat{\beta}_{XZ}}_\infty^2 \norm{\bm \phi(\exa{i},\rz{i}) - \E(\bm \phi(\exa{i},\rz{i}) \mid \rz{i})}_1^2 \\ \notag
		&\leq \norm{\Pi \hat{\beta}_{XZ}}_\infty^2 \left(\norm{\bm \phi(\exa{i},\rz{i})}_1 + \norm{\E(\bm\phi(\exa{i},\rz{i}) \mid \rz{i})}_1\right)^2 \\
		&= \norm{\Pi \hat{\beta}_{XZ}}_\infty^2 \left(\norm{\bm\phi(\exa{i},\rz{i})}_1 + \E(\norm{\bm\phi(\exa{i},\rz{i})}_1 \mid \rz{i})\right)^2. \label{eq:upper-bound-xi}
	\end{align}
	By Proposition 28 of \cite{Lundborg2022a}, the basis functions $\{\phi_k\}$ are non-negative and form a partition of unity. This means $\sum_k \phi_k(x,z) = 1$ for all $(x,z)$. Consequently, $\norm{\bm\phi(x,z)}_1 = \sum_k |\phi_k(x,z)| = \sum_k \phi_k(x,z) = 1$.
	Therefore, \eqref{eq:upper-bound-xi} simplifies to:
	\begin{equation}\label{eq:upper-bound-xi-simplified}
	\xi_i^2 \leq \norm{\Pi \hat{\beta}_{XZ}}_\infty^2 (1 + 1)^2 = 4\norm{\Pi \hat{\beta}_{XZ}}_\infty^2.
	\end{equation}
	
	Next, we analyze the term $\sigma_n^2 = \E (\varepsilon^2\xi^2\mid D_2)$.
	Using the fact that $\varepsilon_i$ is independent of $\xi_i$ conditional on $D_2$ (assuming $\xi_i$ depends on $\hat m$ which is determined by $D_2$), we have
	\[
	\sigma_n^2 = \E(\varepsilon^2 \mid D_2) \E(\xi^2 \mid D_2).
	\]
	From \eqref{ass:bounded-moment-eps}, we know $\E(\varepsilon_i^2 \mid \exa{i}, \rz{i}) \geq c > 0$. Therefore, $\E(\varepsilon^2 \mid D_2) \geq c$. This implies
	\begin{equation}\label{eq:lower-bound-sigma-partial}
		\sigma_n^2 \geq c \E(\xi^2 \mid D_2).
	\end{equation}
	Furthermore, we can write $\E(\xi^2 \mid D_2)$ as:
	\[
	\E(\xi^2 \mid D_2) = \E\left[ (\Pi \hat{\beta}_{XZ})^\top \left(\bm\phi(\exa,\rz) - \E(\bm \phi(\exa,\rz) \mid \rz)\right)\left(\bm\phi(\exa,\rz) - \E(\bm\phi(\exa,\rz) \mid \rz)\right)^\top (\Pi \hat{\beta}_{XZ}) \mid D_2 \right].
	\]
	Let $\Lambda = \E\left[\left(\bm\phi(\exa,\rz) - \E(\bm\phi(\exa,\rz) \mid \rz)\right)\left(\bm\phi(\exa,\rz) - \E(\bm\phi(\rz,\exa) \mid \rz)\right)^\top \mid \rz\right]$.
	Then, assuming $\Lambda$ is positive definite, we have
	\[
	\E(\xi^2 \mid D_2) = (\Pi \hat{\beta}_{XZ})^\top \E(\Lambda \mid D_2) (\Pi \hat{\beta}_{XZ}).
	\]
	By~\eqref{ass:restricted-eigen-value-lower-bound}, which states $\lambda_{\min}(\E(\Lambda \mid D_2)) \geq c K_{XZ}^{-1}$, we get:
	\begin{equation}\label{eq:lower-bound-sigma}
		\sigma_n^2 \geq c \E(\xi^2 \mid D_2) \geq c \lambda_{\min}(\E(\Lambda \mid D_2)) \norm{\Pi \hat{\beta}_{XZ}}_2^2 \geq c' K_{XZ}^{-1} \norm{\Pi \hat{\beta}_{XZ}}_2^2.
	\end{equation}
	Now, let's combine these bounds to evaluate the main expression:
	\begin{align*}
		& \frac{1}{n \sigma_n^2} \sum_{i=1}^n \E\left[(\hat{m}(\exa{i},\rz{i}) - m(\exa{i},\rz{i}))^2 \mid \rz{i},D_2\right] \mathbb{E}\left(\xi_i^2 \Big| \rz{i},D_2 \right) \\
		&\leq \frac{1}{n \cdot c' K_{XZ}^{-1} \norm{\Pi \hat{\beta}_{XZ}}_2^2} \sum_{i=1}^n \E\left[(\hat{m}(\exa{i},\rz{i}) - m(\exa{i},\rz{i}))^2 \mid \rz{i},D_2\right] \cdot 4\norm{\Pi \hat{\beta}_{XZ}}_\infty^2 \\
		&\lesssim \frac{K_{XZ}}{n} \frac{\norm{\Pi \hat{\beta}_{XZ}}_\infty^2}{\norm{\Pi \hat{\beta}_{XZ}}_2^2} \sum_{i=1}^n \E\left[(\hat{m}(\exa{i},\rz{i}) - m(\exa{i},\rz{i}))^2 \mid \rz{i},D_2\right].
	\end{align*}
	Since $\norm{\Pi \hat{\beta}_{XZ}}_\infty^2 / \norm{\Pi \hat{\beta}_{XZ}}_2^2$ is bounded by 1 (because $\|\cdot\|_\infty \leq \|\cdot\|_2$), the expression becomes:
	\begin{align*}
		&\lesssim K_{XZ} \cdot \frac{1}{n}\sum_{i=1}^n \E\left[(\hat{m}(\exa{i},\rz{i}) - m(\exa{i},\rz{i}))^2 \mid \rz{i},D_2\right] \\
		&= O_p\left(K_{XZ} \left(K_{XZ}^{-2s/d} + \frac{K_{XZ}}{n}\right)\right) \quad (\text{by Lemma \ref{lemma:mse-splines-bound}}) \\
		&= O_p\left(K_{XZ}^{-2s/d+1} + \frac{K^2_{XZ}}{n}\right).
	\end{align*}
	This completes the proof.
\end{proof}

\subsection{Proof of Results in Section \ref{sec:asymp-equivalence-vPCM-tPCM}}

\subsubsection{Proof of main results}

\begin{proof}[Proof of Theorem \ref{thm:equivalence-of-vPCM-oracle}]
	$T_n^{\textnormal{vPCM}}$ can be written as $T_N/T_D$ where $T_N = \frac{1}{\sqrt n \sigma_n} \sum_{i=1}^n L_i$ and $T_D = \widetilde \sigma_n/\sigma_n$ where $\widetilde \sigma_n^2 = \frac{1}{n}\sum_{i=1}^n L_i^2 - \left(\frac{1}{n} \sum_{i=1}^n L_i\right)^2$.
	We would show that $T_N \overset{d}{\to} N(0,1)$ and $T_D \overset{p}{\to} 1$.
	First we make a crucial observation that
	\begin{align*}
		\widehat f(\exa{i},\rz{i}) - \E_{\law}[\widehat{f} (\exa{i},\rz{i}) | \rz{i},D_2] &= \widehat m(\exa{i},\rz{i}) - \widecheck m(\rz{i}) - \E_{\law}\left[\widehat m(\exa{i},\rz{i}) - \widecheck m(\rz{i}) | \rz{i},D_2 \right]\\
		&= \widehat m(\exa{i},\rz{i}) - \E_{\law}\left[\widehat m(\exa{i},\rz{i})  | \rz{i},D_2 \right] = \xi_i
	\end{align*}
	First we analyze $T_N$ for that we decompose $T_N$ into four terms as follows:
	\begin{align*}
		T_N &= \underbrace{\frac{1}{\sqrt n \sigma_n} \sum \varepsilon_i\xi_i}_{G'_n}\
		+ \underbrace{\frac{1}{\sqrt n \sigma_n} \sum \varepsilon_i(m_{\widehat f}(\rz{i}) -\widehat m_{\widehat f}(\rz{i}))}_{P_n} 
		+\underbrace{\frac{1}{\sqrt n\sigma_n} \sum\xi_i( m(\rz{i}) - \widetilde m(\rz{i}))}_{Q_n} \\
		&+ \underbrace{\frac{1}{\sqrt n \sigma_n} \sum (m_{\widehat f}(\rz{i}) -\widehat m_{\widehat f}(\rz{i}))( m(\rz{i}) - \widetilde m(\rz{i}))}_{R_n}
	\end{align*}
	We first focus on the term $G_n'$. 
	We use Lemma S8 from \citep{Lundborg2022a}, $\varepsilon_i\xi_i$ are conditionally independent given $\mathcal F_n \equiv \sigma(D_2)$. Also note that under the null conditional on $\mathcal{F}_n$, $\varepsilon_i\xi_i/\sigma_n$ are identically distributed random variables with mean zero and unit variance. Hence if we assume (assumption~\eqref{eq:CLT-condition}) that 
	$$
	\frac{1}{\sigma_n^{2+\delta}}\E_\law\left[|\varepsilon\xi|^{2+\delta} \mid D_2\right] = o_P(n^{\delta/2})
	$$
	we have that  $G'_n \overset{d}{\to} N(0,1)$. Next we turn our attention to the term $P_n$.  Our assumption~\eqref{eq:consistency-of-reg-of-f-hat-on-Z} is equivalent to $\E_{\law}[P_n^2 | \mz,D_2] = o_p(1)$, now by using Lemma \ref{lemma:conditional-expectation-covergence-to-unconditional-probability-convergence} we have that $P_n \overset{p}{\to} 0$. Similarly for the term $Q_n$ our assumption~\eqref{eq:consistency-of-reg-of-Y-on-Z}  is equivalent to $\E_{\law}[Q_n^2 | \mz,D_2] = o_p(1)$ which again by using Lemma \ref{lemma:conditional-expectation-covergence-to-unconditional-probability-convergence} implies that $Q_n \overset{p}{\to} 0$. Finally we look at the fourth term $R_n$, by Cauchy-Schwartz inequality we can upper bound $R_n$ by
	\begin{align*}
		R_n &\leq \frac{1}{\sqrt n \sigma_n} \left(\sum_{i=1}^n (m_{\widehat f}(\rz{i}) -\widehat m_{\widehat f}(\rz{i}))^2\right)^{1/2} \left(\sum_{i=1}^n( m(\rz{i}) - \widetilde m(\rz{i}))^2\right)^{1/2}\\
		&= \sqrt n \left(\frac{1}{n}\sum_{i=1}^n (m_{\widehat f}(\rz{i}) -\widehat m_{\widehat f}(\rz{i}))^2\right)^{1/2} \left(\frac{1}{ n \sigma^2_n}\sum_{i=1}^n( m(\rz{i}) - \widetilde m(\rz{i}))^2\right)^{1/2}.
	\end{align*}
	The RHS goes to zero in probability by our assumption~\eqref{eq:doubly robust-condition-vPCM} which implies $R_n = o_p(1)$.
	
	Combining the convergence properties of the four terms,  $T_N \convd N(0,1)$ by Slutsky's theorem. Next we analyze $T_D$ and show it is $o_p(1)$.
	
	Let us denote $u_i = m(\rz{i}) - \widetilde m(\rz{i})$ and $v_i = m_{\widehat f}(\rz{i}) -  \widehat m_{\widehat f}(\rz{i})$. Then we have that $L_i = (\varepsilon_i + u_i)(\xi_i + v_i)$. We have shown that $\frac{1}{\sqrt n \sigma_n} \sum_{i=1}^n L_i \overset{d}{\to} N(0,1)$ this implies $\frac{1}{n \sigma_n} \sum_{i=1}^n L_i \overset{p}{\to} 0$. Hence it is enough to show that $\frac{1}{n \sigma^2_n} \sum_{i=1}^n L^2_i \overset{p}{\to} 1$,which would imply $T_D \overset{p}{\to} 1$. We decompose the term as
	\begin{gather*}
		\frac{1}{n \sigma^2_n} \sum_{i=1}^n R^2_i = \overbrace{\frac{1}{n\sigma_n^2} \sum_{i=1}^n \varepsilon_i^2\xi^2_i}^{S_1} +  \overbrace{\frac{1}{n\sigma_n^2} \sum_{i=1}^n v_i^2\varepsilon_i^2}^{S_2} + \overbrace{ \frac{1}{n\sigma_n^2} \sum_{i=1}^n u_i^2\xi^2_i}^{S_3} + \overbrace{ \frac{1}{n\sigma_n^2} \sum_{i=1}^n u_i^2v_i^2}^{S_4} \\
		+ 2\underbrace{\frac{1}{n\sigma_n^2} \sum_{i=1}^n v_i\varepsilon_i^2\xi_i}_{C_1} + 2\underbrace{ \frac{1}{n\sigma_n^2} \sum_{i=1}^n u_i \varepsilon_i\xi^2_i}_{C_2} + 4\underbrace{ \frac{1}{n\sigma_n^2} \sum_{i=1}^n \varepsilon_i\xi_iu_iv_i}_{C_3}	\\
		2\underbrace{\frac{1}{n\sigma_n^2} \sum_{i=1}^n u_iv^2_i\varepsilon_i}_{C_4} + 2\underbrace{ \frac{1}{n\sigma_n^2} \sum_{i=1}^n u^2_iv_i \xi_i}_{C_5} 
	\end{gather*}
	Let us look at one term at a time. We would show that all the terms except $S_1$ are $o_P(1)$ terms and $S_1 \overset{p}{\to}  1$. For showing $S_1 \overset{p}{\to} 1$ we invoke Lemma S9 from \citep{Lundborg2022a}. Observe that $\frac{1}{\sigma^2_n} \varepsilon_i^2\xi_i^2$ is an i.i.d sequence conditional on $D_2$ which mean $1$. Hence if we assume $\sigma_n^{-{(1+\delta)}} \E\left(|\varepsilon\xi|^{1+\delta} \mid \mathcal{F}_n\right) = o_P(n^{\delta})$ (which is implied by the moment conditions needed for CLT a.k.a~\eqref{eq:CLT-condition}) then we have that $S_1$ converges to $1$ in probability.
	
	We have that $S_2,S_3 =o_P(1)$ because $\E(S_2| \mz,D_2) =\E(P_n^2| \mz,D_2)= o_p(1)$ and $\E(S_3| \mz,D_2) = \E(Q_n^2| \mz,D_2)= o_p(1)$ as already shown. For $S_4$ observe that 
	$$
	S_4 \leq \frac{1}{n \sigma_n^2} \sum_{i=1}^n u_i^2 \sum_{i=1}^n v_i^2 = o_p(1)
	$$
	which is implied by~\eqref{eq:doubly robust-condition-vPCM}.
	Next observe that 
	$$
	C_1 \leq \left(\frac{1}{n \sigma_n^2} \sum_{i=1}^n \varepsilon_i^2\xi_i^2 \right)^{1/2}\left(\frac{1}{n \sigma_n^2} \sum_{i=1}^n  v^2_i \varepsilon^2_i \right)^{1/2} = S_1^{1/2} S_2^{1/2} = o_p(1)
	$$
	$$
	\quad C_2 \leq S_1^{1/2} S_3^{1/2} \quad C_3 \leq S_3^{1/2}S_4^{1/2}
	$$
	$$
	C_4 \leq S_4^{1/2}S_2^{1/2} \quad C_5 \leq S_4^{1/2} S_3^{1/2}
	$$
	Since we have that $S_1 =O_p(1)$ and $S_i = o_p(1)$ for $i=1,2,3$ we have that $C_k = o_p(1)$ for $k = 1,\ldots,5$.
	
	Combining everything so far we have that $\phi^{\textnormal{vPCM}}_n$ is equivalent to the test: reject $H_0$ if
	$$
	\frac{1}{\sqrt n\sigma_n} \sum_{i=1}^n \varepsilon_i\xi_i \geq \frac{\widetilde \sigma_n}{\sigma_n}z_{1-\alpha} -P_n -Q_n - R_n
	$$
	Now note that the RHS converges in probability to $z_{1-\alpha}$ and the oracle test statistic converges to $N(0,1)$ (hence does not accumulate near $z_{1-\alpha}$), hence by Lemma \ref{lemma:asymptotic-equivalence-of-tests} we have that $\phi^{\textnormal{vPCM}}_n$ is equivalent to $\phi^{\textnormal{oracle}}_n$. 
\end{proof}

\subsection{Proof of Results in Section	\ref{sec:asymp-equiv-HRT-PCM}}
\subsubsection{Derivations Supporting the Equivalence of HRT and tPCM}\label{sec:hrt-pcm-details}

This section contains the technical derivations and intermediate results omitted from Section~\ref{sec:asymp-equiv-HRT-PCM} of the main text. These details formally establish the connection between the HRT and tPCM statistics and verify that their cutoffs converge under suitable regularity conditions.

\paragraph{Decomposing the HRT Statistic.}
To establish the equivalence, we first observe that the HRT test statistic can be expressed as a transformation of the tPCM statistic plus remainder terms. We find that 
\begin{equation}
	\begin{split}
		T_j^{\text{HRT}} &\equiv \frac{1}{n}\sum_{i = 1}^n (\ey{i} - \widehat m(\rx{i}{\bullet}))^2 \\
		&= \frac{1}{n}\sum_{i = 1}^n ((\ey{i} - \widehat m_j(\rx{i}{\mj})) + (\widehat m_j(\rx{i}{\mj}) - \widehat m(\rx{i}{\bullet})))^2 \\
		&= -\frac{2\widehat \sigma_n}{\sqrt{n}} T_j^{\text{tPCM}} + \frac{1}{n}\sum_{i = 1}^n (\widehat m_j(\rx{i}{\mj}) - \widehat m(\rx{i}{\bullet}))^2 + \frac{1}{n}\sum_{i = 1}^n (\ey{i} - \widehat m_j(\rx{i}{\mj}))^2.
	\end{split}
\end{equation}
Letting
\begin{equation}
	\widehat \xi_i \equiv \widehat m(\rx{i}{\bullet}) - \widehat m_j(\rx{i}{\mj}) \quad \text{and} \quad \widetilde \xi_i \equiv \widehat m(\rxka{i}{\bullet}) - \widehat m_j(\rx{i}{\mj}),
\end{equation}
we have that
\begin{equation}
	\begin{split}
		T_j^{\text{HRT}} 
		&= -\frac{2\widehat \sigma_n}{\sqrt{n}} T_j^{\text{tPCM}} + \frac{1}{n}\sum_{i = 1}^n \widehat \sxi_i^2 + \frac{1}{n}\sum_{i = 1}^n (\ey{i} - \widehat m_j(\rx{i}{\mj}))^2\\
		&=-\frac{2\widehat \sigma_n}{\sqrt{n}} T_j^{\text{tPCM}} + \frac{1}{n}\sum_{i = 1}^n (\widehat \sxi_i^2 -\E[\widetilde \xi_i^2 \mid \rx{i}{\mj},  D_2] )+ \frac{1}{n}\sum_{i = 1}^n \E[\widetilde \xi_i^2 \mid \rx{i}{\mj},  D_2] \\&+ \frac{1}{n}\sum_{i = 1}^n (\ey{i} - \widehat m_j(\rx{i}{\mj}))^2.
	\end{split}
	\label{eq:hrt-decomp}
\end{equation}
Under the assumptions of Theorem~\ref{thm:HRT-PCM-equivalence}, we will show that the second term in the expression above vanishes at a rate \( o_p(n^{-1/2}) \), and is therefore a higher-order term. The last two terms do not depend on \( \cxa_j \), the variable being tested. Consequently, any additive or multiplicative transformations involving only \( \cy \) and \( \cx{\bullet}{\minus j} \) are absorbed into the null distribution generated by resampling, and do not affect the decision rule of the HRT. Since the transformation relating \( T_j^{\text{HRT}} \) to \( T_j^{\text{tPCM}} \) is monotonic and independent of \( \cxa_j \), both tests effectively compare the same core statistic against equivalent thresholds, leading to the same asymptotic rejection behavior. We make these statements precise below.

\paragraph{Equivalence of HRT and re-scaled test.}
We define a re-scaled HRT statistic:
\[
T_j^{\text{rHRT}} \equiv \frac{\widehat \sigma_n}{\sigma_n} T_j^{\text{tPCM}} - \frac{1}{2\sqrt{n} \sigma_n} \sum_{i=1}^n (\widehat \xi_i^2 - \E[\widetilde \xi_i^2 \mid \rx{i}{\mj}, D_2]).
\]
and a corresponding test based on the quantile of the re-scaled statistic:
\[
\phi_j^{\text{rHRT}}(\mx,\cy) \equiv \mathbbm{1}\left(T^{\text{rHRT}}_j(\mx,\cy) > C'_n(\cy,\mx{\bullet}{\minus j}) \right),
\]
with cutoff
\[
C'_n(\cy, \mx{\bullet}{\minus j}) \equiv \Q_{1-\alpha}\left[T^{\text{rHRT}}_j(\cxk{\bullet}{j}, \mx{\bullet}{\minus j},\cy) \mid \cy, \mx{\bullet}{\minus j}, D_2\right].
\]

\begin{lemma}[Equivalence of HRT and rHRT]
	\label{lemma:first-reduction-of-HRT}
	We have \( \phi_j^{\textnormal{HRT}}(\mx,\cy) = \phi_j^{\textnormal{rHRT}}(\mx,\cy) \).
\end{lemma}

\paragraph{Convergence of the remainder term.}
We will provide conditions under which the multiplicative factor $\frac{\widehat \sigma_n}{\sigma_n}$ tends to one, and the additive term $\frac{1}{2\sqrt n \sigma_n}\sum_{i=1}^n(\widehat \xi_i^2 - \E[\widetilde \xi_i^2 \mid \rx{i}{\mj},  D_2])$ tends to zero. This will imply that the test statistics $T_j^{\text{rHRT}}$ and $T_j^{\text{tPCM}}$ are asymptotically equivalent. The multiplicative factor $\frac{\widehat \sigma_n}{\sigma_n}$ tends to one under the assumptions of Theorem~\ref{thm:tower-pcm-type-I-error}:
\begin{lemma} \label{lem:sigma-hat-n-convergence}
	Under the assumptions of Theorem~\ref{thm:tower-pcm-type-I-error}, we have that $\frac{\widehat \sigma_n}{\sigma_n} \convp 1$.
\end{lemma}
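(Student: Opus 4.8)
The plan is to show $\widehat\sigma_n^2/\sigma_n^2 \convp 1$, which implies the claim by the continuous mapping theorem (and since $\sigma_n^2 > 0$). Recall $\widehat\sigma_n^2 = \frac{1}{n}\sum_i R_{ij}^2 - (\frac1n\sum_i R_{ij})^2$ where $R_{ij} = (\sry_i - \widehat m_j(\srx_{i,\minus j}))(\widehat m(\srx_i) - \widehat m_j(\srx_{i,\minus j}))$, while $\sigma_n^2 = \V_\law[\peps\pxi \mid D_2]$ with $\peps_i = \sry_i - m_j(\srx_{i,\minus j})$ and $\pxi_i = \widehat m(\srx_i) - \E_\law[\widehat m(\srx_i)\mid \srx_{i,\minus j}, D_2]$. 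The natural decomposition is $R_{ij} = (\peps_i + \Delta_i^{(1)})(\pxi_i + \Delta_i^{(2)})$ where $\Delta_i^{(1)} \equiv m_j(\srx_{i,\minus j}) - \widehat m_j(\srx_{i,\minus j})$ is the error in estimating $\E[\sry\mid\srx_{\minus j}]$ by resampling, and $\Delta_i^{(2)} \equiv \E_\law[\widehat m(\srx_i)\mid\srx_{i,\minus j},D_2] - \widehat m_j(\srx_{i,\minus j})$ is the error in approximating the oracle conditional expectation of $\widehat m$ by the model-X-resampled version. Expanding the product and then the square, $\frac1n\sum_i R_{ij}^2$ becomes $\frac1n\sum_i \peps_i^2\pxi_i^2$ plus cross terms involving $\Delta_i^{(1)}, \Delta_i^{(2)}$.

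First I would handle the ``oracle'' term $\frac{1}{n\sigma_n^2}\sum_i \peps_i^2\pxi_i^2$: conditionally on $D_2$ and $\srx_{\minus j}$ the summands are independent with conditional mean $\E[\peps_i^2\pxi_i^2\mid\srx_{i,\minus j},D_2]$, and by the tower property $\frac1n\sum_i \E[\peps_i^2\pxi_i^2\mid\srx_{i,\minus j},D_2]$ together with the (vanishing, under the null) mean-correction term $(\frac1n\sum_i\peps_i\pxi_i)^2$ is exactly $\sigma_n^2/$normalization; a second-moment/Chebyshev argument controlled by the Lyapunov condition~\eqref{eq:CLT-condition} (which gives a uniform integrability handle on $(\peps\pxi)^2$ relative to $\sigma_n^2$) shows the scaled sum concentrates at $1$. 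Then I would show every remaining term is $o_p(1)$ after dividing by $\sigma_n^2$: the key bounds are that $\frac{1}{n\sigma_n^2}\sum_i (\Delta_i^{(1)})^2\E[\pxi_i^2\mid\srx_{i,\minus j},D_2]\to 0$, which is precisely assumption~\eqref{eq:consistency-of-reg-of-Y-on-Z}-style content --- here delivered by~\eqref{eq:consitency-of-reg-func} since $\Delta_i^{(1)}$ is (an average over resamples of) $m_j - \widehat m$; and that $\frac{1}{n\sigma_n^2}\sum_i (\Delta_i^{(2)})^2\,(\text{bounded factor})\to 0$, which follows from the chi-square control~\eqref{eq:uniform-bound-div}--\eqref{eq:consistency-of-chi-square-div} because $\Delta_i^{(2)}$ is a difference of expectations under $\law$ versus $\lawhat$ of $\widehat m$, bounded in $L^2$ by the conditional chi-square divergence times $\E[\xi_i^2\mid\cdot]$ (Cauchy--Schwarz on $\int \widehat m\,(d\lawhat - d\law)$). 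The cross terms are then dispatched by Cauchy--Schwarz against these two bounds and the oracle term, using~\eqref{eq:var-bounded} to keep $\E[\peps_i^2\mid\srx_{i,\minus j},D_2]$ bounded. Finally, the subtracted term $(\frac1n\sum_i R_{ij})^2$ equals $\widehat\sigma_n^2/n \cdot (T_j^{\text{tPCM}})^2$-type quantity of order $\sigma_n^2 (T_j^{\text{tPCM}})^2/n = o_p(\sigma_n^2)$ since $T_j^{\text{tPCM}} = O_p(1)$ under the null (itself a consequence of the preceding steps plus the CLT), so it does not affect the limit.

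The main obstacle I anticipate is bookkeeping the normalization by $\sigma_n^2$, which is random and may itself be shrinking: every error term must be shown small \emph{relative to} $\sigma_n^2$, not absolutely, and this is exactly why the assumptions~\eqref{eq:consistency-of-chi-square-div}, \eqref{eq:consitency-of-reg-func} are stated with the $\frac{1}{n\sigma_n^2}$ prefactor and the $\E[\xi_i^2\mid\cdot]$ weights. Getting the weighted Cauchy--Schwarz steps to land with exactly those weights --- rather than, say, a cruder bound that would need a lower bound on $\sigma_n^2$ --- is the delicate part; the subtlety is that $\sigma_n^2$ involves $\E[\peps_i^2\pxi_i^2\mid\cdot]$ whereas the error assumptions carry $\E[\pxi_i^2\mid\cdot]$ or $\E[\peps_i^2\mid\cdot]$, so one uses~\eqref{eq:var-bounded} to pass between them (a bounded conditional variance of $\peps$ lets $\E[\peps^2\pxi^2\mid\cdot] \asymp \E[\pxi^2\mid\cdot]$ up to constants when $\pxi \perp \peps$ given $\srx_{\minus j}, D_2$, which holds under the null). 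Everything else is routine triangle-inequality and Chebyshev manipulation.
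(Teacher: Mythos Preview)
Your proposal is correct and follows essentially the same route as the paper: the same decomposition $R_i=(\varepsilon_i+u_i)(\xi_i+v_i)$ (your $\Delta_i^{(1)},\Delta_i^{(2)}$ are the paper's $u_i,v_i$), the oracle term $\frac{1}{n\sigma_n^2}\sum\varepsilon_i^2\xi_i^2\to 1$ via the Lyapunov condition~\eqref{eq:CLT-condition}, the quadratic error terms via~\eqref{eq:consistency-of-chi-square-div},~\eqref{eq:consitency-of-reg-func},~\eqref{eq:uniform-bound-div} and Lemma~\ref{lemma:estimation-error-to-chi-square-div}, and the cross terms by Cauchy--Schwarz. One small slip: your handling of the subtracted square invokes $T_j^{\textnormal{tPCM}}=O_p(1)$, which is circular since $T_j^{\textnormal{tPCM}}=T_N/(\widehat\sigma_n/\sigma_n)$ depends on the very ratio you are proving converges; the paper instead uses that $T_N=\frac{1}{\sqrt n\sigma_n}\sum R_i\convd N(0,1)$ (established separately in Lemma~\ref{lem:numerator}) to get $\bigl(\frac{1}{n\sigma_n}\sum R_i\bigr)^2=T_N^2/n=o_p(1)$ directly, which is the clean fix.
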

Note that Lemma~\ref{lem:sigma-hat-n-convergence} was proved in Section~\ref{sec:proofs-of-main-results-sec-3}.
Under Assumptions~\eqref{eq:variance-of-m-hat-given-Z}--\eqref{eq:variance-of-xi-rate-assumption}, we show that the error term in \( T_j^{\text{rHRT}} \) vanishes.

\begin{lemma}\label{lemma:extra-term-in-HRT-vanishes}
	Under the assumptions listed above, we have
	\[
	\frac{1}{\sqrt n \sigma_n}\sum_{i=1}^n(\widehat \xi_i^2 - \E[\widetilde \xi_i^2 \mid \rx{i}{\mj},  D_2]) \convp 0.
	\]
\end{lemma}

\paragraph{Convergence of the resampling-based cutoff.}
We now establish conditions under which the quantile cutoff \( C_n'(\cy, \cx{\bullet}{\minus j}) \) converges to the standard normal quantile.

\begin{lemma}
	\label{thm:conditional-convergence-of-HRT-to-normality}
	Under assumptions~\eqref{eq:CLT-condition}, \eqref{eq:regression-of-Y-on-Z-is-consistent-wrt-L-hat}--\eqref{eq:moment-condition-on-product-of-residuals-hat}, we have
	\[
	C_n'(\cy, \cx{\bullet}{\minus j}) \overset{p}{\to} z_{1-\alpha}.
	\]
\end{lemma}

\paragraph{Conclusion.}
Putting together Lemmas~\ref{lemma:first-reduction-of-HRT},~\ref{lemma:extra-term-in-HRT-vanishes}, and~\ref{thm:conditional-convergence-of-HRT-to-normality}, we obtain the asymptotic equivalence of the HRT and tPCM tests, as formally stated in Theorem~\ref{thm:HRT-PCM-equivalence} of the main text.

\subsubsection{Auxiliary Theorems and Lemmas}
In this section we state a number of auxiliary lemmas and theorems which aid us in proving the main results. Many of them are borrowed from \citet{Niu2022} such as Lemma \ref{lemma:conditional-convergence-to-quantile-convergence}, \ref{lemma:conditional-jensen} and \ref{lemma:pp-to-p-convergence}, and Theorem \ref{thm:conditional-clt}, \ref{thm:conditional-wlln} and \ref{thm:unconditional-wlln}.
\begin{lemma}[Conditional convergence implies quantile convergence] \label{lemma:conditional-convergence-to-quantile-convergence}
	Let $W_n$ be a sequence of random variables and $\alpha \in(0,1)$. If $W_n \mid \mathcal{F}_n \xrightarrow{d, p} W$ for some random variable $W$ whose CDF is continuous and strictly increasing at $\mathbb{Q}_\alpha[W]$, then
	$$
	\mathbb{Q}_\alpha\left[W_n \mid \mathcal{F}_n\right] \stackrel{p}{\rightarrow} \mathbb{Q}_\alpha[W] .
	$$
\end{lemma}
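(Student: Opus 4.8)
The plan is to argue directly from the definitions, reducing everything to the convergence of random CDFs at continuity points. Write $F$ for the CDF of $W$, set $q \equiv \Q_\alpha[W]$, and for each $n$ let $F_n(t) \equiv \P[W_n \le t \mid \mathcal F_n]$ and $q_n \equiv \Q_\alpha[W_n \mid \mathcal F_n] = \inf\{t : F_n(t) \ge \alpha\}$, which is $\mathcal F_n$-measurable since the infimum may be taken over rationals by right-continuity of $F_n$. The one elementary fact I would isolate first is the quantile identity $\{q_n \le t\} = \{F_n(t) \ge \alpha\}$ (equivalently $\{q_n > t\} = \{F_n(t) < \alpha\}$), which holds because $F_n$ is a random nondecreasing right-continuous function, so that $\{s : F_n(s) \ge \alpha\} = [q_n, \infty)$.

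Next I would reduce to well-behaved $\epsilon$. Since $q_n \xrightarrow{p} q$ is equivalent to $\P[|q_n - q| > \epsilon] \to 0$ for every $\epsilon$ in a set accumulating at $0$, and since $F$ has at most countably many discontinuities, it suffices to treat those $\epsilon > 0$ for which both $q - \epsilon$ and $q + \epsilon$ are continuity points of $F$. Fixing such an $\epsilon$, I would record that continuity of $F$ at $q$ forces $F(q) = \alpha$ (right-continuity gives $F(q) \ge \alpha$, while $F(t) < \alpha$ for $t < q$ gives $F(q-) \le \alpha$, and these agree by continuity), so that the hypothesis that $F$ is strictly increasing at $q$ yields the strict inequalities $F(q - \epsilon) < \alpha < F(q + \epsilon)$.

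Then I would translate the two tail events through the quantile identity: $\P[q_n \le q - \epsilon] = \P[F_n(q-\epsilon) \ge \alpha]$ and $\P[q_n > q + \epsilon] = \P[F_n(q+\epsilon) < \alpha]$. Because $q - \epsilon$ and $q + \epsilon$ are continuity points of $F$, the hypothesis $W_n \mid \mathcal F_n \xrightarrow{d,p} W$ gives $F_n(q-\epsilon) \xrightarrow{p} F(q-\epsilon) < \alpha$ and $F_n(q+\epsilon) \xrightarrow{p} F(q+\epsilon) > \alpha$, so both of the probabilities above tend to $0$. Since $\{|q_n - q| > \epsilon\} \subseteq \{q_n \le q - \epsilon\} \cup \{q_n > q + \epsilon\}$, a union bound completes the proof.

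I do not anticipate a genuine obstacle; the proof is short. The only points demanding care are the bookkeeping around the quantile convention (getting the identity $\{q_n \le t\} = \{F_n(t) \ge \alpha\}$ and the directions of the inequalities exactly right) and the reduction to $\epsilon$ for which $q \pm \epsilon$ are continuity points of $F$ — the latter is not optional, because the conditional convergence hypothesis only controls $F_n(t)$ at continuity points of the limiting CDF, and continuity of $F$ at $q$ alone says nothing about continuity at nearby points.
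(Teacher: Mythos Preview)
Your proof is correct. The paper does not actually supply its own proof of this lemma; it is stated as an auxiliary result borrowed from \citet{Niu2022}, so there is nothing to compare against in the present paper. Your direct argument from the definitions---the quantile identity $\{q_n \le t\} = \{F_n(t) \ge \alpha\}$, the reduction to $\epsilon$ for which $q \pm \epsilon$ are continuity points of $F$, and the application of the $d,p$-convergence hypothesis at those two points---is the standard route and handles the bookkeeping carefully (in particular, your remark that continuity of $F$ at $q$ alone does not guarantee continuity nearby, so the reduction step is genuinely needed, is well taken).
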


\begin{lemma}[Conditional Jensen inequality]\label{lemma:conditional-jensen} Let $W$ be $a$ random variable and let $\phi$ be a convex function, such that $W$ and $\phi(W)$ are integrable. For any $\sigma$-algebra $\mathcal{F}$, we have the inequality
	$$
	\phi(\mathbb{E}[W \mid \mathcal{F}]) \leq \mathbb{E}[\phi(W) \mid \mathcal{F}] \text { almost surely. }
	$$
\end{lemma}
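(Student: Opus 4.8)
The plan is to deduce the conditional Jensen inequality from the linearity and monotonicity of conditional expectation, by exploiting the fact that a finite convex function on the line is the pointwise supremum of a \emph{countable} family of affine minorants. First I would observe that since $\phi(W)$ is integrable, $W$ takes values almost surely in the (relative) interior $I$ of the domain of $\phi$, on which $\phi$ is continuous and admits at each point a supporting line. For each rational $q \in I$ pick such a supporting line $\ell_q(x) = a_q x + b_q$, so that $\ell_q \le \phi$ on $I$ with equality at $q$. By continuity of $\phi$ together with density of the rationals, $\phi(x) = \sup_{q \in \Q \cap I} \ell_q(x)$ for every $x \in I$.

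Next, fix $q$. Pointwise we have $\phi(W) \ge a_q W + b_q$, and since $W$ (hence $a_q W + b_q$) is integrable, monotonicity and linearity of conditional expectation give $\E[\phi(W)\mid\mathcal F] \ge a_q\,\E[W\mid\mathcal F] + b_q$ almost surely; let $N_q$ denote the corresponding exceptional null set. Because $\Q \cap I$ is countable, $N \equiv \bigcup_{q} N_q$ is null, and on the complement of $N$ we may take the supremum over all $q$ at once:
$$
\E[\phi(W)\mid\mathcal F] \;\ge\; \sup_{q \in \Q \cap I}\bigl(a_q\,\E[W\mid\mathcal F] + b_q\bigr) \;=\; \phi\bigl(\E[W\mid\mathcal F]\bigr),
$$
where the final equality applies the representation of $\phi$ from the first step at the point $\E[W\mid\mathcal F](\omega)$; since $I$ is an interval and $W \in I$ almost surely, $\E[W\mid\mathcal F]$ lies in $\overline{I}$ almost surely, and one handles any boundary case by a short limiting argument (in the applications here $\phi$ is finite on all of $\R$, e.g.\ $\phi(x) = x^2$, so $I = \R$ and this point is vacuous).

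The one genuine subtlety — and the step I would be most careful about — is the passage from ``for each $q$ the inequality holds a.s.'' to ``the inequality holds a.s.\ for all $q$ simultaneously'': this is precisely why the countable representation of $\phi$ as a supremum of affine functions is needed, since an uncountable union of null sets need not be null. Everything else (integrability bookkeeping so the conditional expectations are defined, and confirming $\E[W\mid\mathcal F]\in\overline I$) is routine.
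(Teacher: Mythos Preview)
Your proof is correct and follows the standard textbook approach to conditional Jensen via a countable family of affine minorants. The paper itself does not supply a proof of this lemma: it is listed among the auxiliary results ``borrowed from \citet{Niu2022},'' so there is no in-paper argument to compare against. Your identification of the key subtlety --- that one needs a \emph{countable} family of supporting lines so that the union of exceptional null sets remains null --- is exactly the point that distinguishes the conditional version from the unconditional one, and your handling of it is clean. The only cosmetic quibble is the opening clause ``since $\phi(W)$ is integrable, $W$ takes values almost surely in the (relative) interior $I$'': integrability of $\phi(W)$ gives $W \in \operatorname{dom}\phi$ a.s., not necessarily the interior, but you already flag the boundary case and correctly note it is vacuous for the applications in the paper (where $\phi$ is finite on all of $\R$).
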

\begin{lemma}\label{lemma:pp-to-p-convergence}
	Let $W_n$ be a sequence of random variables and $\mathcal{F}_n$ a sequence of $\sigma$-algebras. If $W_n \mid \mathcal{F}_n \xrightarrow{p, p} 0$, then $W_n \xrightarrow{p} 0$.
\end{lemma}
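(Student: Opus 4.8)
The plan is to unwind the two nested definitions and then reduce to a conditional-to-unconditional transfer already available in the paper. By Definitions~\ref{defn:d-p-convg} and~\ref{defn:p-p-convg}, the hypothesis $W_n \mid \mathcal{F}_n \xrightarrow{p,p} 0$ says precisely that $\mathbb{P}[W_n \le t \mid \mathcal{F}_n] \overset{p}{\to} \mathbbm{1}[t \ge 0]$ at every $t$ at which the CDF $t \mapsto \mathbbm{1}[t \ge 0]$ of $\delta_0$ is continuous, i.e.\ at every $t \ne 0$. Fixing $\epsilon > 0$, both $\epsilon$ and $-\epsilon$ are such continuity points, so $\mathbb{P}[W_n \le \epsilon \mid \mathcal{F}_n] \overset{p}{\to} 1$ and $\mathbb{P}[W_n \le -\epsilon \mid \mathcal{F}_n] \overset{p}{\to} 0$.

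Next I would set $V_n \equiv \mathbbm{1}[|W_n| > \epsilon] \ge 0$, so that $\mathbb{E}[V_n \mid \mathcal{F}_n] = \mathbb{P}[|W_n| > \epsilon \mid \mathcal{F}_n]$, and apply a union bound to obtain
\begin{equation*}
\mathbb{E}[V_n \mid \mathcal{F}_n] \le \big(1 - \mathbb{P}[W_n \le \epsilon \mid \mathcal{F}_n]\big) + \mathbb{P}[W_n \le -\epsilon \mid \mathcal{F}_n].
\end{equation*}
By the previous paragraph the right-hand side is a sum of two nonnegative quantities, each tending to $0$ in probability, so $\mathbb{E}[V_n \mid \mathcal{F}_n] \overset{p}{\to} 0$. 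Applying Lemma~\ref{lemma:conditional-expectation-covergence-to-unconditional-probability-convergence} to the nonnegative sequence $V_n$ then gives $V_n \overset{p}{\to} 0$, and since $V_n$ is $\{0,1\}$-valued this is the same as $\mathbb{P}[|W_n| > \epsilon] = \mathbb{P}[V_n = 1] \to 0$. As $\epsilon > 0$ was arbitrary, $W_n \overset{p}{\to} 0$.

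I do not expect any genuine obstacle here: the only points needing care are verifying that $\pm\epsilon$ avoid the sole discontinuity of the limiting CDF (so that the hypothesis truly applies at those thresholds) and the routine bookkeeping in the tail bound above. If one preferred to avoid Lemma~\ref{lemma:conditional-expectation-covergence-to-unconditional-probability-convergence}, one could instead take expectations directly in the displayed inequality and use that a $[0,1]$-valued sequence converging in probability also converges in $L^1$ (bounded convergence along subsequences), which again forces $\mathbb{P}[|W_n| > \epsilon] \to 0$.
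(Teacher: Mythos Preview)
The paper does not supply its own proof of this lemma; it is listed among the auxiliary results ``borrowed from \citet{Niu2022}'' and simply stated. Your argument is correct and is the natural one: unwinding Definitions~\ref{defn:d-p-convg} and~\ref{defn:p-p-convg} gives $\mathbb{P}[W_n\le t\mid\mathcal{F}_n]\overset{p}{\to}\mathbbm{1}[t\ge 0]$ at all $t\neq 0$, the union bound at $t=\pm\epsilon$ yields $\mathbb{P}[|W_n|>\epsilon\mid\mathcal{F}_n]\overset{p}{\to}0$, and Lemma~\ref{lemma:conditional-expectation-covergence-to-unconditional-probability-convergence} (or the bounded-convergence alternative you mention) converts this to $\mathbb{P}[|W_n|>\epsilon]\to 0$.
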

\begin{theorem}[Conditional Slutsky's theorem] \label{thm:conditional-slutskys-thoerem}
	Let $W_n$ be a sequence of random variables. Suppose $a_n$ and $b_n$ are sequences of random variables such that $a_n \stackrel{p}{\rightarrow} 1$ and $b_n \stackrel{p}{\rightarrow} 0$. If $W_n \mid \mathcal{F}_n \xrightarrow{d, p} W$ for some random variable $W$ with continuous $C D F$, then
	$$
	a_n W_n+b_n \mid \mathcal{F}_n \stackrel{d, p}{\longrightarrow} W
	$$
\end{theorem}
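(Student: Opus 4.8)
The plan is to work directly from Definition~\ref{defn:d-p-convg}. Write $F(t) \equiv \P[W \le t]$; since $F$ is continuous, every $t \in \R$ is a continuity point, so it suffices to show $\P[a_n W_n + b_n \le t \mid \mathcal{F}_n] \convp F(t)$ for each fixed $t$. The idea is to ``peel off'' the perturbations $a_n$ and $b_n$ on a high-probability event, reducing the problem to the conditional convergence of $W_n$, which holds by hypothesis. A subsequence argument is tempting but awkward, since $a_n$ and $b_n$ carry no measurability relationship to $\mathcal{F}_n$ and so cannot be frozen inside the conditional probability; the event-inclusion approach below avoids this.

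First I would fix $\delta \in (0,1)$ and $M > 0$ and set $G_n \equiv \{|a_n - 1| \le \delta\} \cap \{|b_n| \le \delta\} \cap \{|W_n| \le M\}$. Using $a_n W_n + b_n = W_n + (a_n - 1) W_n + b_n$ and the bound $|(a_n - 1)W_n + b_n| \le \delta(M+1)$ valid on $G_n$, one obtains
$$
\{W_n \le t - \delta(M+1)\} \cap G_n \ \subseteq\ \{a_n W_n + b_n \le t\} \ \subseteq\ \{W_n \le t + \delta(M+1)\} \cup G_n^c,
$$
together with $G_n^c \subseteq \{|a_n - 1| > \delta\} \cup \{|b_n| > \delta\} \cup \{|W_n| > M\}$. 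Applying $\P[\,\cdot \mid \mathcal{F}_n]$ and using $\P[A \cap B \mid \mathcal{F}_n] \ge \P[A \mid \mathcal{F}_n] - \P[B^c \mid \mathcal{F}_n]$ then sandwiches $\P[a_n W_n + b_n \le t \mid \mathcal{F}_n]$ between $\P[W_n \le t - \delta(M+1) \mid \mathcal{F}_n] - r_n$ and $\P[W_n \le t + \delta(M+1) \mid \mathcal{F}_n] + r_n$, where $r_n \equiv \P[|W_n| > M \mid \mathcal{F}_n] + \P[|a_n - 1| > \delta \mid \mathcal{F}_n] + \P[|b_n| > \delta \mid \mathcal{F}_n]$.

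Next I would control each piece. By hypothesis, $\P[W_n \le t \pm \delta(M+1) \mid \mathcal{F}_n] \convp F(t \pm \delta(M+1))$, and $\P[|W_n| > M \mid \mathcal{F}_n] \le \big(1 - \P[W_n \le M \mid \mathcal{F}_n]\big) + \P[W_n \le -M \mid \mathcal{F}_n] \convp 1 - F(M) + F(-M)$. For the remaining terms, $\P[|a_n - 1| > \delta \mid \mathcal{F}_n]$ is nonnegative with expectation $\P[|a_n - 1| > \delta] \to 0$ (by $a_n \convp 1$), hence converges to $0$ in $L^1$ and so in probability; likewise $\P[|b_n| > \delta \mid \mathcal{F}_n] \convp 0$. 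Since finite sums of terms converging in probability converge in probability to the sum, the upper and lower sandwich bounds tend in probability to $F(t + \delta(M+1)) + 1 - F(M) + F(-M)$ and $F(t - \delta(M+1)) - \big(1 - F(M) + F(-M)\big)$ respectively. Finally I would let $M \to \infty$ and $\delta \to 0$ jointly (say along $\delta = M^{-2}$): by continuity of $F$ at $t$ together with $F(M) \to 1$ and $F(-M) \to 0$, both bounds collapse to $F(t)$, and a routine $\limsup$ argument then upgrades this to $\P[a_n W_n + b_n \le t \mid \mathcal{F}_n] \convp F(t)$ for every $t$, which is exactly the claimed convergence $a_n W_n + b_n \mid \mathcal{F}_n \convdp W$.

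I expect the only real subtlety to be the one highlighted above: $a_n$ and $b_n$ converge only \emph{unconditionally} and are not $\mathcal{F}_n$-measurable, so the standard ``Slutsky by freezing the convergent factor'' reasoning is unavailable. The truncation at $M$ is what makes the peeling quantitative (turning $|(a_n-1)W_n|$ into the controllable $\delta M$), and the extra cost $\P[|W_n| > M \mid \mathcal{F}_n]$ it incurs is absorbed by the tightness of the limit law of $W$ as $M \to \infty$. The rest is bookkeeping with convergence in probability.
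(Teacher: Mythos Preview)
The paper does not actually prove this theorem: it is listed among the auxiliary results in the appendix and stated without proof (the surrounding lemmas are attributed to \citet{Niu2022}, and this one is presumably of the same provenance or regarded as standard). So there is no in-paper argument to compare against.

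Your argument is correct. The only step that is genuinely delicate in the conditional setting---and which you handle properly---is that $a_n$ and $b_n$ need not be $\mathcal F_n$-measurable, so one cannot simply freeze them inside the conditional probability. Your remedy, bounding $\P[|a_n-1|>\delta\mid\mathcal F_n]$ and $\P[|b_n|>\delta\mid\mathcal F_n]$ via their unconditional expectations (which tend to zero by Markov), is exactly the right move and is the standard way to push Slutsky through to the conditional-in-probability framework. The truncation at level $M$ to control $(a_n-1)W_n$ and the use of $\P[W_n\le \pm M\mid\mathcal F_n]\convp F(\pm M)$ to absorb the truncation cost are likewise standard and sound. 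The final diagonal step (sending $M\to\infty$, $\delta\to 0$) works because, for any fixed $\epsilon>0$, you can first choose $M,\delta$ so the limiting sandwich endpoints lie strictly inside $(F(t)-\epsilon,F(t)+\epsilon)$, and then convergence in probability of the bounds to those endpoints forces $\P\big[|\P[a_nW_n+b_n\le t\mid\mathcal F_n]-F(t)|>\epsilon\big]\to 0$.
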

\begin{theorem}[Conditional central limit theorem]\label{thm:conditional-clt}
	Let $W_{\text {in }}$ be a triangular array of random variables, such that for each $n, W_{\text {in }}$ are independent conditionally on $\mathcal{F}_n$. Define
	$$
	S_n^2 \equiv \sum_{i=1}^n \operatorname{Var}\left[W_{i n} \mid \mathcal{F}_n\right],
	$$
	and assume $\operatorname{Var}\left[W_{i n} \mid \mathcal{F}_n\right]<\infty$ almost surely for all $i=1, \ldots, n$ and for all $n \in \mathbb{N}$. If for some $\delta>0$ we have
	$$
	\frac{1}{S_n^{2+\delta}} \sum_{i=1}^n \mathbb{E}\left[\left|W_{i n}-\mathbb{E}\left[W_{i n} \mid \mathcal{F}_n\right]\right|^{2+\delta} \mid \mathcal{F}_n\right] \stackrel{p}{\rightarrow} 0,
	$$
	then
	$$
	\frac{1}{S_n} \sum_{i=1}^n\left(W_{i n}-\mathbb{E}\left[W_{i n} \mid \mathcal{F}_n\right]\right) \mid \mathcal{F}_n \stackrel{d, p}{\longrightarrow} N(0,1)
	$$
\end{theorem}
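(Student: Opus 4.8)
The plan is to run the classical Lindeberg--L\'evy/Lyapunov argument ``pathwise'' in $\mathcal{F}_n$ at the level of conditional characteristic functions, and then pass from convergence in probability of these conditional characteristic functions to the notion $\xrightarrow{d,p}$ of Definition~\ref{defn:d-p-convg} by a subsequence argument. Write $Z_{in} \equiv (W_{in}-\E[W_{in}\mid\mathcal{F}_n])/S_n$ and $Z_n \equiv \sum_{i=1}^n Z_{in}$, so that conditionally on $\mathcal{F}_n$ the $Z_{in}$ are independent with $\E[Z_{in}\mid\mathcal{F}_n]=0$ and $\sum_{i=1}^n\E[Z_{in}^2\mid\mathcal{F}_n]=1$; the goal is to show $\P[Z_n\le t\mid\mathcal{F}_n]\convp\Phi(t)$ for each $t$. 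Let $\phi_n(u)\equiv\E[e^{iuZ_n}\mid\mathcal{F}_n]$, which by conditional independence equals $\prod_{i=1}^n\psi_{in}(u)$ with $\psi_{in}(u)\equiv\E[e^{iuZ_{in}}\mid\mathcal{F}_n]$.

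First I would establish $\phi_n(u)\convp e^{-u^2/2}$ for each fixed $u\in\R$. Using $\E[Z_{in}\mid\mathcal{F}_n]=0$ and the standard estimate $|e^{ix}-1-ix+x^2/2|\le\min\{|x|^2,|x|^3/6\}\le 6^{-\delta}|x|^{2+\delta}$ (valid for $\delta\in(0,1]$), one obtains $|\psi_{in}(u)-(1-\tfrac{u^2}{2}\E[Z_{in}^2\mid\mathcal{F}_n])|\le 6^{-\delta}|u|^{2+\delta}\E[|Z_{in}|^{2+\delta}\mid\mathcal{F}_n]$, while $|1-y-e^{-y}|\le y^2/2$ for $y\ge0$ gives $|(1-\tfrac{u^2}{2}\E[Z_{in}^2\mid\mathcal{F}_n])-e^{-\frac{u^2}{2}\E[Z_{in}^2\mid\mathcal{F}_n]}|\le\tfrac{u^4}{8}(\E[Z_{in}^2\mid\mathcal{F}_n])^2$. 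Since all factors involved have modulus at most one, the telescoping bound $|\prod z_k-\prod w_k|\le\sum|z_k-w_k|$ together with $\sum_i\E[Z_{in}^2\mid\mathcal{F}_n]=1$ yields $|\phi_n(u)-e^{-u^2/2}|\le 6^{-\delta}|u|^{2+\delta}\sum_{i=1}^n\E[|Z_{in}|^{2+\delta}\mid\mathcal{F}_n]+\tfrac{u^4}{8}\max_{i\in[n]}\E[Z_{in}^2\mid\mathcal{F}_n]$. The first term is $|u|^{2+\delta}$ times the Lyapunov ratio in the hypothesis and hence $\convp0$; for the second, $\max_{i\in[n]}\E[Z_{in}^2\mid\mathcal{F}_n]\le(\sum_i\E[|Z_{in}|^{2+\delta}\mid\mathcal{F}_n])^{2/(2+\delta)}\convp0$, so $\phi_n(u)\convp e^{-u^2/2}$.

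Next I would upgrade this to $Z_n\mid\mathcal{F}_n\xrightarrow{d,p}N(0,1)$ via the subsequence criterion for convergence in probability. Fix $t$; given an arbitrary subsequence, a diagonal extraction over the rationals produces a further subsequence $(n_\ell)$ and an almost sure event on which $\phi_{n_\ell}(q)\to e^{-q^2/2}$ for all $q\in\mathbb{Q}$. The observation that makes this enough is that each $\phi_n$ is $1$-Lipschitz, uniformly in $n$ and in the sample point, since $|\phi_n(u)-\phi_n(v)|\le|u-v|\,\E[|Z_n|\mid\mathcal{F}_n]\le|u-v|(\E[Z_n^2\mid\mathcal{F}_n])^{1/2}=|u-v|$; hence on that event $\phi_{n_\ell}(u)\to e^{-u^2/2}$ for \emph{all} $u\in\R$. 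For each such sample point the conditional laws are genuine probability measures whose characteristic functions converge pointwise to the continuous limit $e^{-u^2/2}$, so L\'evy's continuity theorem (applied pathwise) gives weak convergence of the conditional laws to $N(0,1)$, whence $\P[Z_n\le t\mid\mathcal{F}_n]\to\Phi(t)$ along $(n_\ell)$ because $\Phi$ is continuous at $t$. Since the subsequence was arbitrary, $\P[Z_n\le t\mid\mathcal{F}_n]\convp\Phi(t)$, which is the claim.

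The step I expect to be the crux is this last one: transferring ``convergence in probability of conditional characteristic functions'' into the $\xrightarrow{d,p}$ notion, which is not a routine continuity-theorem invocation because the conditional laws are themselves random objects. The subsequence-plus-pathwise-L\'evy route handles this, but it relies essentially on the uniform equicontinuity of the $\phi_n$, which is available here only because the normalization by $S_n$ makes the conditional variance of $Z_n$ exactly $1$. The remaining ingredients (the $2+\delta$ Taylor inequality, the product bound, and deducing $\max_{i\in[n]}\E[Z_{in}^2\mid\mathcal{F}_n]\convp0$ from the Lyapunov condition) are mechanical and mirror the unconditional Lyapunov CLT.
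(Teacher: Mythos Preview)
The paper does not supply its own proof of this theorem; it is listed among the auxiliary results ``borrowed from \citet{Niu2022}'' and is used as a black box. So there is no in-paper argument to compare against.

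Your proof is correct and self-contained. The first half---the conditional Lyapunov bound on characteristic functions, the telescoping product inequality between factors of modulus at most one, and the control of $\max_i\E[Z_{in}^2\mid\mathcal F_n]$ via the Lyapunov ratio---is the standard unconditional argument transplanted verbatim to the conditional setting. The second half is the part that is genuinely specific to the $\xrightarrow{d,p}$ framework, and you handle it correctly: given convergence in probability of $\phi_n(u)$ at each rational $u$, a diagonal extraction yields a subsequence along which this convergence is almost sure simultaneously for all rationals, and then your observation that $\phi_n$ is $1$-Lipschitz uniformly in $n$ and $\omega$ (because $\E[Z_n^2\mid\mathcal F_n]=1$) upgrades this to all reals, at which point L\'evy's continuity theorem applies pathwise. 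The subsequence criterion then closes the loop. Your caveat that the interpolation bound $\min\{|x|^2,|x|^3/6\}\le 6^{-\delta}|x|^{2+\delta}$ requires $\delta\in(0,1]$ is harmless, since a Lyapunov condition with exponent $2+\delta$ implies the one with exponent $2+\min(\delta,1)$ by the conditional Lyapunov moment inequality.
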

\begin{theorem}[Conditional law of large numbers]\label{thm:conditional-wlln} Let $W_{\text {in }}$ be a triangular array of random variables, such that $W_{i n}$ are independent conditionally on $\mathcal{F}_n$ for each $n$. If for some $\delta>0$ we have
	$$
	\frac{1}{n^{1+\delta}} \sum_{i=1}^n \mathbb{E}\left[\left|W_{i n}\right|^{1+\delta} \mid \mathcal{F}_n\right] \xrightarrow{p} 0,
	$$
	then
	$$
	\frac{1}{n} \sum_{i=1}^n\left(W_{i n}-\mathbb{E}\left[W_{i n} \mid \mathcal{F}_n\right]\right) \mid \mathcal{F}_n \xrightarrow{p, p} 0 .
	$$
	The condition is satisfied when
	$$
	\sup _{1 \leq i \leq n} \mathbb{E}\left[\left|W_{i n}\right|^{1+\delta} \mid \mathcal{F}_n\right]=o_p\left(n^\delta\right) .
	$$
\end{theorem}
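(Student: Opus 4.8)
The plan is a conditional version of the classical truncation proof of the weak law of large numbers. Write $S_n \equiv \frac{1}{n}\sum_{i=1}^n\big(W_{in} - \E[W_{in}\mid\mathcal F_n]\big)$. By Definitions~\ref{defn:d-p-convg} and~\ref{defn:p-p-convg} it suffices to show $\P[|S_n| > \epsilon \mid \mathcal F_n] \convp 0$ for every $\epsilon > 0$: indeed, for $t>0$ this forces $\P[S_n \le t\mid\mathcal F_n]\convp 1$ and for $t<0$ it forces $\P[S_n\le t\mid\mathcal F_n]\convp 0$, which is precisely $S_n\mid\mathcal F_n\convdp\delta_0$. The only tools I would use are the conditional Markov and Chebyshev inequalities together with the conditional Jensen inequality (Lemma~\ref{lemma:conditional-jensen}), so no new machinery is required. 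I would carry out the argument for $\delta\in(0,1]$, which is the regime relevant to the applications in this paper and the one in which the truncation bound below is valid.

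The first step is to truncate at level $n$: set $Y_{in}\equiv W_{in}\indicator(|W_{in}|\le n)$ and $Z_{in}\equiv W_{in}\indicator(|W_{in}|> n)$, so that $S_n = A_n + B_n$ with $A_n\equiv\frac{1}{n}\sum_i(Y_{in}-\E[Y_{in}\mid\mathcal F_n])$ and $B_n\equiv\frac{1}{n}\sum_i(Z_{in}-\E[Z_{in}\mid\mathcal F_n])$. For the tail part $B_n$, on $\{|W_{in}|>n\}$ one has $|W_{in}|\le n^{-\delta}|W_{in}|^{1+\delta}$, so by the conditional triangle inequality $\E[|B_n|\mid\mathcal F_n]\le\frac{2}{n}\sum_i\E[|Z_{in}|\mid\mathcal F_n]\le\frac{2}{n^{1+\delta}}\sum_i\E[|W_{in}|^{1+\delta}\mid\mathcal F_n]\convp 0$ by hypothesis, and conditional Markov gives $\P[|B_n|>\epsilon\mid\mathcal F_n]\convp 0$.

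The second step handles the truncated centered part $A_n$. Its summands are conditionally independent given $\mathcal F_n$ and conditionally mean zero, so the cross terms vanish and $\E[A_n^2\mid\mathcal F_n]=\frac{1}{n^2}\sum_i\V(Y_{in}\mid\mathcal F_n)\le\frac{1}{n^2}\sum_i\E[Y_{in}^2\mid\mathcal F_n]$. The key elementary bound is that on $\{|W_{in}|\le n\}$ one has $W_{in}^2 = |W_{in}|^{1+\delta}|W_{in}|^{1-\delta}\le n^{1-\delta}|W_{in}|^{1+\delta}$ (this is where $\delta\le 1$ is used), so $\E[A_n^2\mid\mathcal F_n]\le\frac{n^{1-\delta}}{n^2}\sum_i\E[|W_{in}|^{1+\delta}\mid\mathcal F_n]=\frac{1}{n^{1+\delta}}\sum_i\E[|W_{in}|^{1+\delta}\mid\mathcal F_n]\convp 0$, again by hypothesis, and conditional Chebyshev gives $\P[|A_n|>\epsilon\mid\mathcal F_n]\convp 0$. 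Combining via $\P[|S_n|>\epsilon\mid\mathcal F_n]\le\P[|A_n|>\epsilon/2\mid\mathcal F_n]+\P[|B_n|>\epsilon/2\mid\mathcal F_n]$ completes the main claim. The stated sufficient condition is immediate: $\frac{1}{n^{1+\delta}}\sum_{i=1}^n\E[|W_{in}|^{1+\delta}\mid\mathcal F_n]\le n^{-\delta}\sup_{i\le n}\E[|W_{in}|^{1+\delta}\mid\mathcal F_n]=n^{-\delta}\cdot o_p(n^\delta)=o_p(1)$.

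I do not expect a genuine obstacle: once the truncation level is fixed the estimates are routine, and the choice of level $n$ is precisely what makes the second-moment bound on $\E[Y_{in}^2\mid\mathcal F_n]$ collapse onto the hypothesis (at the cost of requiring $1+\delta\le 2$). The only points that need a little care are treating all the conditional inequalities as almost-sure statements about regular conditional expectations, and checking conditional mean-zero-ness and the vanishing of cross terms in Step~2. A shorter alternative that avoids the case split would replace the truncation by the conditional von~Bahr--Esseen inequality $\E\big[\,|\sum_i X_i|^{1+\delta}\mid\mathcal F_n\,\big]\le 2\sum_i\E[|X_i|^{1+\delta}\mid\mathcal F_n]$ applied to $X_i = W_{in}-\E[W_{in}\mid\mathcal F_n]$, followed by the conditional $c_r$-inequality and conditional Markov at exponent $1+\delta$; this route still needs $1+\delta\le 2$.
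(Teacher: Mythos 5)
Your proof is correct, with one caveat discussed below. Note first that the paper does not actually prove this result: it is listed among the auxiliary statements ``borrowed from \citet{Niu2022},'' and its unconditional, uniform-in-$P$ analogue appears as Lemma~\ref{lemma:S9}. Your truncation-at-level-$n$ argument --- splitting $W_{in}$ into $W_{in}\indicator(|W_{in}|\le n)$ and $W_{in}\indicator(|W_{in}|>n)$, controlling the tail part in conditional $L^1$ via $|W_{in}|\le n^{-\delta}|W_{in}|^{1+\delta}$ on $\{|W_{in}|>n\}$, and the truncated centered part in conditional $L^2$ via $W_{in}^2\le n^{1-\delta}|W_{in}|^{1+\delta}$ on $\{|W_{in}|\le n\}$ --- is the standard proof of such degenerate-convergence results and is essentially the same route taken in the cited sources, carried out with conditional Markov/Chebyshev so that the conclusion lands in the $\convpp$ sense of Definition~\ref{defn:p-p-convg}. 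Your reduction of $S_n\mid\mathcal F_n\convpp 0$ to $\P[|S_n|>\epsilon\mid\mathcal F_n]\convp 0$ for every $\epsilon>0$, the vanishing of the cross terms by conditional independence of the truncated variables, and the derivation of the ``sufficient condition'' are all handled correctly.

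The caveat: your restriction to $\delta\in(0,1]$ is not a convenience but a necessity, because the theorem as stated (``for some $\delta>0$'') is false for $\delta>1$. Take $\mathcal F_n$ trivial, $\delta=3$, and $W_{in}=\epsilon_i\, n^{3/5}$ with $\epsilon_1,\dots,\epsilon_n$ i.i.d.\ Rademacher. Then $n^{-4}\sum_{i=1}^n\E[|W_{in}|^{4}] = n^{-3/5}\to 0$, so the hypothesis holds, yet $n^{-1}\sum_{i=1}^n W_{in}=n^{-2/5}\sum_{i=1}^n\epsilon_i$ has standard deviation $n^{1/10}\to\infty$, so the conclusion fails. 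The statement should therefore carry the restriction $\delta\in(0,1]$, exactly as Lemma~\ref{lemma:S9} does (and as your proof, or the alternative von Bahr--Esseen route you sketch, requires); the paper's applications, e.g.\ in Lemma~\ref{lemma:variance-of-the-resampled-statistic-converges}, use exponents of the form $1+\delta/2$ and are consistent with this regime. So you have not only supplied the missing proof but also identified an imprecision in the statement as written.
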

\begin{theorem}[Unconditional weak law of large numbers]\label{thm:unconditional-wlln}
	Let $W_{i n}$ be a triangular array of random variables, such that $W_{\text {in }}$ are independent for each $n$. If for some $\delta>0$ we have
	$$
	\frac{1}{n^{1+\delta}} \sum_{i=1}^n \mathbb{E}\left[\left|W_{i n}\right|^{1+\delta}\right] \rightarrow 0,
	$$
	then
	$$
	\frac{1}{n} \sum_{i=1}^n\left(W_{i n}-\mathbb{E}\left[W_{i n}\right]\right) \stackrel{p}{\rightarrow} 0 .
	$$
	The condition is satisfied when
	$$
	\sup _{1 \leq i \leq n} \mathbb{E}\left[\left|W_{i n}\right|^{1+\delta}\right]=o\left(n^\delta\right) .
	$$
\end{theorem}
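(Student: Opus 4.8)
The statement is a triangular-array weak law, and the cleanest route is to deduce it from the conditional weak law already at our disposal. Apply Theorem~\ref{thm:conditional-wlln} with the trivial filtration $\mathcal F_n \equiv \{\emptyset,\Omega\}$: conditional independence given $\mathcal F_n$ is ordinary independence, $\E(\,\cdot\mid\mathcal F_n)$ is ordinary expectation, and the hypothesis $\frac{1}{n^{1+\delta}}\sum_{i=1}^n \E(|W_{in}|^{1+\delta}\mid\mathcal F_n)\convp 0$ becomes exactly the assumed deterministic statement $\frac{1}{n^{1+\delta}}\sum_{i=1}^n \E|W_{in}|^{1+\delta}\to 0$. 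The conclusion of Theorem~\ref{thm:conditional-wlln} then reads $\frac{1}{n}\sum_{i=1}^n(W_{in}-\E W_{in})\mid\mathcal F_n \convpp 0$, and Lemma~\ref{lemma:pp-to-p-convergence} upgrades this to $\frac{1}{n}\sum_{i=1}^n(W_{in}-\E W_{in})\convp 0$, which is the claim. The final assertion follows in one line, since $\sup_{1\le i\le n}\E|W_{in}|^{1+\delta}=o(n^\delta)$ gives $\frac{1}{n^{1+\delta}}\sum_{i=1}^n\E|W_{in}|^{1+\delta}\le n^{-\delta}\sup_{1\le i\le n}\E|W_{in}|^{1+\delta}=o(1)$, so the main hypothesis holds.

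\textbf{Self-contained alternative.} If a direct argument is wanted, I would run the classical truncation proof. Set $\bar W_{in}\equiv W_{in}\mathbbm{1}(|W_{in}|\le n)$ and decompose $\frac1n\sum_{i=1}^n(W_{in}-\E W_{in})=\frac1n\sum_i(W_{in}-\bar W_{in})+\frac1n\sum_i(\bar W_{in}-\E\bar W_{in})+\frac1n\sum_i(\E\bar W_{in}-\E W_{in})$. The first term vanishes on an event of probability at least $1-\sum_i\P(|W_{in}|>n)\ge 1-n^{-(1+\delta)}\sum_i\E|W_{in}|^{1+\delta}\to 1$ by Markov's inequality at order $1+\delta$; the third term is bounded in absolute value by $n^{-1}\sum_i\E(|W_{in}|\mathbbm{1}(|W_{in}|>n))\le n^{-(1+\delta)}\sum_i\E|W_{in}|^{1+\delta}\to 0$; and the second term has variance $n^{-2}\sum_i\V(\bar W_{in})\le n^{-2}\sum_i\E(W_{in}^2\mathbbm{1}(|W_{in}|\le n))\le n^{-2}\,n^{1-\delta}\sum_i\E|W_{in}|^{1+\delta}=n^{-(1+\delta)}\sum_i\E|W_{in}|^{1+\delta}\to 0$, so Chebyshev's inequality sends it to $0$ in probability. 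Combining the three pieces gives the result.

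\textbf{Main obstacle.} There is no deep difficulty; the only step requiring care is the bound $W_{in}^2\mathbbm{1}(|W_{in}|\le n)\le n^{1-\delta}|W_{in}|^{1+\delta}$ used to control the variance term, which needs $\delta\le 1$ --- precisely the regime in which the companion Lemmas~\ref{lemma:S8} and~\ref{lemma:S9} are stated and the only regime in which this result is applied. Beyond that, the entire proof is bookkeeping: each of the three error terms in the truncation decomposition (equivalently, each ingredient of the conditional-WLLN route) is dominated by the single quantity $n^{-(1+\delta)}\sum_{i=1}^n\E|W_{in}|^{1+\delta}$, so the hypothesis does all of the work and Slutsky-type combination finishes it.
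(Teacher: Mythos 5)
Your proposal is correct, but note that the paper itself offers no proof of this statement: Theorem~\ref{thm:unconditional-wlln} is simply imported from \citet{Niu2022} (the surrounding text says the auxiliary results are ``borrowed'' from that reference), so there is nothing internal to compare against. Both of your routes are sound. The reduction of the unconditional statement to Theorem~\ref{thm:conditional-wlln} via the trivial $\sigma$-algebra is the cleanest derivation available inside this paper's toolkit (and with $\mathcal F_n$ trivial you do not even need Lemma~\ref{lemma:pp-to-p-convergence}, since $\convpp$ relative to the trivial $\sigma$-algebra is literally $\convp$); its only cost is that it inherits whatever hypotheses the conditional version silently carries. Your self-contained truncation argument is the standard degenerate-convergence proof, and all three bounds check out: Markov at order $1+\delta$ for the truncation event, the tail bound $|W|\mathbbm 1(|W|>n)\le n^{-\delta}|W|^{1+\delta}$ for the bias term, and Chebyshev with $W^2\mathbbm 1(|W|\le n)\le n^{1-\delta}|W|^{1+\delta}$ for the fluctuation term. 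Your flagged restriction $\delta\le 1$ is not a cosmetic caveat but a genuine one: for $\delta>1$ the statement as written is false (take $W_{in}$ i.i.d.\ symmetric $\pm n^{0.6}$, so that $n^{-4}\sum_i\E|W_{in}|^4=n^{-0.6}\to 0$ yet $n^{-1}\sum_i W_{in}$ is of order $n^{0.1}$ times a standard normal), which is consistent with the companion Lemma~\ref{lemma:S9} being stated only for $\delta\in(0,1]$. So your proof is complete in the regime where the result is true and actually used, and it supplies a justification the paper leaves to an external citation.
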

\begin{lemma}\label{lemma:variance-of-the-resampled-statistic-converges}
	Under assumption~\eqref{eq:CLT-condition}
	\begin{equation}\label{eq:variance-A}
		\frac{1}{n\sigma_n^2} \sum_{i=1}^n \E_\law(\varepsilon^2_i \mid \rz{i}) \E_\law(\xi_i^2 \mid \rz{i}, D_2) \overset{p}{\to} 1 
	\end{equation}
	and under assumption~\eqref{eq:moment-condition-on-product-of-residuals-hat}
	\begin{equation}\label{eq:variance-B}
		\frac{1}{n\sigma_n^2} \sum_{i=1}^n \left(\varepsilon_i^2 - \E_\law(\varepsilon_i^2 \mid \rz{i}) \right)\E(\widetilde\xi_i^2 \mid \rz{i},D_2) \overset{p}{\to} 0
	\end{equation}
	Under the previous two assumptions~\eqref{eq:CLT-condition},~\eqref{eq:moment-condition-on-product-of-residuals-hat} and additionally~\eqref{eq:variance-consistency-condition} we have that
	\begin{equation}\label{eq:resampling-variance-is-consistent}
		\frac{1}{n\sigma_n^2} \sum_{i=1}^n \varepsilon_i^2 \E( \widetilde\xi^2_i \mid \rz{i}) \overset{p}{\to} 1 
	\end{equation}
\end{lemma}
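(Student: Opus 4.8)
The plan is to prove the three displays in the stated order and then assemble \eqref{eq:resampling-variance-is-consistent} from the first two together with \eqref{eq:variance-consistency-condition}. Throughout I condition on $\mathcal F_n \equiv \sigma(D_2)$, so that the relevant summands become conditionally i.i.d.\ across $i \in [n]$, and I use two structural facts. First, under the null $\law \in \nulllaws_n$ we have $\seps_i \independent \sxi_i \mid (\srz_i, \mathcal F_n)$, and — even off the null — $\seps_i \independent \widetilde\sxi_i \mid (\srz_i, \mathcal F_n)$, because the resample $\widetilde\srx_{i,j}$ is drawn from $\lawhat(\cdot \mid \srz_i)$ without looking at $\sry$. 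Second, under the null $\E_\law[\peps\pxi \mid \mathcal F_n] = 0$, whence $\sigma_n^2 = \E_\law[\peps^2\pxi^2 \mid \mathcal F_n]$. The workhorse is the conditional weak law of large numbers (Theorem~\ref{thm:conditional-wlln}, or equivalently Lemma~\ref{lemma:S9}), combined with Lemma~\ref{lemma:pp-to-p-convergence} to pass from conditional to unconditional convergence, and with the conditional Jensen inequality (Lemma~\ref{lemma:conditional-jensen}) to verify the moment conditions it requires.

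For \eqref{eq:variance-A}, set $W_{in} \equiv \sigma_n^{-2}\E_\law(\seps_i^2 \mid \srz_i)\,\E_\law(\sxi_i^2 \mid \srz_i, D_2)$. Conditionally on $\mathcal F_n$ these are i.i.d., and by the conditional independence of $\seps_i$ and $\sxi_i$ plus the tower property, $\E[W_{in} \mid \mathcal F_n] = \sigma_n^{-2}\E_\law[\seps_i^2\sxi_i^2 \mid \mathcal F_n] = 1$, so the claimed limit is exactly the conditional mean. It remains to check the moment condition of Theorem~\ref{thm:conditional-wlln}: two applications of conditional Jensen give $\E[W_{in}^{1+\delta} \mid \mathcal F_n] \le \sigma_n^{-2(1+\delta)}\E_\law[|\peps\pxi|^{2(1+\delta)} \mid \mathcal F_n]$, and choosing $\delta$ so that $2(1+\delta)$ matches the exponent in \eqref{eq:CLT-condition} makes this $o_p(n^\delta)$ by \eqref{eq:CLT-condition}. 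Theorem~\ref{thm:conditional-wlln} and Lemma~\ref{lemma:pp-to-p-convergence} then yield \eqref{eq:variance-A}.

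The argument for \eqref{eq:variance-B} is parallel, now with $V_{in} \equiv \sigma_n^{-2}\big(\seps_i^2 - \E_\law(\seps_i^2 \mid \srz_i)\big)\,\E(\widetilde\sxi_i^2 \mid \srz_i, D_2)$; since $D_2$ is an independent sample, $\E[\seps_i^2 \mid \srz_i, D_2] = \E[\seps_i^2 \mid \srz_i]$, so $\E[V_{in} \mid \mathcal F_n] = 0$ and the conditional WLLN will give $\frac1n\sum_i V_{in} \convp 0$ once the moment bound holds. For the latter, conditioning first on $(\srz_i, \mathcal F_n)$, a $c_r$-type bound and Jensen control $|\seps_i^2 - \E(\seps_i^2 \mid \srz_i)|^{1+\delta}$ by a constant multiple of $\E(|\seps_i|^{2(1+\delta)} \mid \srz_i)$ and $(\E(\widetilde\sxi_i^2 \mid \srz_i, D_2))^{1+\delta}$ by $\E(|\widetilde\sxi_i|^{2(1+\delta)} \mid \srz_i, D_2)$; the conditional independence of $\seps_i$ and $\widetilde\sxi_i$ collapses the product to $\sigma_n^{-2(1+\delta)}\E[|\peps\widetilde\pxi|^{2(1+\delta)} \mid \mathcal F_n]$, which is controlled by \eqref{eq:moment-condition-on-product-of-residuals-hat}. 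Finally, to obtain \eqref{eq:resampling-variance-is-consistent} I would write $\seps_i^2 = \E_\law(\seps_i^2 \mid \srz_i) + \big(\seps_i^2 - \E_\law(\seps_i^2 \mid \srz_i)\big)$, so the target sum splits as the sum in \eqref{eq:variance-B} (which vanishes) plus $\frac{1}{n\sigma_n^2}\sum_i \E_\law(\seps_i^2 \mid \srz_i)\,\E(\widetilde\sxi_i^2 \mid \srz_i, D_2)$. Since $\E(\widetilde\sxi_i^2 \mid \srz_i, D_2)$ is the resampling variance $\V_{\lawhat}[\sxi_i \mid \srz_i, D_2]$ while $\E_\law(\sxi_i^2 \mid \srz_i, D_2) = \V_\law[\sxi_i \mid \srz_i, D_2]$, adding and subtracting turns this last sum into the left-hand side of \eqref{eq:variance-A} (which tends to $1$) plus the left-hand side of \eqref{eq:variance-consistency-condition} (which tends to $0$), and \eqref{eq:resampling-variance-is-consistent} follows.

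I expect the main obstacle to be the conditional-moment bookkeeping in \eqref{eq:variance-B}: one must verify that the $(1+\delta)$-th conditional moment of the product $V_{in}$ genuinely reduces — via the conditional independence of $\seps$ and $\widetilde\sxi$ (which, crucially, does not need the null) and a careful matching of powers, $2(1+\delta) = 2+\delta$ — to the conditional moment of $\peps\widetilde\pxi$ controlled by \eqref{eq:moment-condition-on-product-of-residuals-hat}, while the centered factor $\seps_i^2 - \E(\seps_i^2\mid\srz_i)$ is handled by conditioning on $\srz_i$ before applying Jensen. The other points requiring care are keeping straight which conditional independences hold under the null versus unconditionally, and correctly identifying $\E(\widetilde\sxi_i^2 \mid \srz_i, D_2)$ with $\V_{\lawhat}[\sxi_i \mid \srz_i, D_2]$ so that \eqref{eq:variance-consistency-condition} applies verbatim in the final assembly.
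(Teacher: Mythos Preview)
Your proposal is correct and follows essentially the paper's approach: apply the conditional WLLN (Theorem~\ref{thm:conditional-wlln}) with moment bounds obtained from conditional Jensen to establish~\eqref{eq:variance-A} and~\eqref{eq:variance-B}, then assemble~\eqref{eq:resampling-variance-is-consistent} via the same add-and-subtract decomposition invoking~\eqref{eq:variance-consistency-condition}. The only cosmetic differences are that for~\eqref{eq:variance-B} the paper takes $\mathcal F_n=\sigma(\srz,D_2)$ and applies the WLLN to the uncentered summand $\sigma_n^{-2}\seps_i^2\,\E(\widetilde\sxi_i^2\mid\srz_i,D_2)$ rather than your centered $V_{in}$ with $\mathcal F_n=\sigma(D_2)$, and that your shorthand ``$2(1+\delta)=2+\delta$'' should be read as choosing the WLLN exponent $\delta'=\delta/2$.
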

\begin{proof}
	We first show~\eqref{eq:variance-A}, let us define $W_{in} = \left(\E(\varepsilon_i^2 \mid \rz{i}) \E(\xi_i^2 \mid \rz{i},D_2) \right)/\sigma_n^2$ and $\cF_n = \sigma(D_2)$. Observe that $\E(W_{in} \mid D_2) = 1$ since we are under the null. We will use Theorem \ref{thm:conditional-wlln} for which we need to bound the moments appropriately as follows:
	\begin{align*}
		\frac{1}{n^{1 + \delta/2}} \sum_{i=1}^n \E[|W_{in}|^{1+\delta/2} \mid \cF_n] &= \frac{1}{n^{1+\delta/2}\sigma_n^{2+\delta}}\sum_{i=1}^n  \E[| \left(\E(\varepsilon_i^2 \mid \rz{i}) \E(\xi_i^2 \mid \rz{i},D_2) \right)|^{1+\delta/2} \mid D_2]\\
		&= \frac{1}{n^{1+\delta/2}\sigma_n^{2+\delta}}\sum_{i=1}^n  \E[| \left(\E(\varepsilon_i^2 \xi_i^2 \mid \rz{i},D_2) \right)|^{1+\delta/2} \mid D_2]\\
		&\leq \frac{1}{n^{1+\delta/2}\sigma_n^{2+\delta}}\sum_{i=1}^n  \E[ \E\left(|\varepsilon_i \xi_i|^{2+\delta} \mid \rz{i},D_2) \right) \mid D_2]\\
		&= \frac{1}{n^\delta \sigma_n^{2+\delta}}\E[|\varepsilon\xi|^{2+\delta}\mid D_2] = o_p(1)
	\end{align*}
	The third line in the above display follows from Lemma \ref{lemma:conditional-jensen} and the last line follows from assumption~\eqref{eq:CLT-condition}. Hence we have that
	$$
	\frac{1}{n\sigma_n^2} \sum_{i=1}^n \E_\law(\varepsilon^2_i \mid \rz{i}) \E_\law(\xi_i^2 \mid \rz{i}, D_2) \mid \cF_n \overset{p,p}{\to} 1 
	$$
	from which~\eqref{eq:variance-A} follows by applying Lemma \ref{lemma:pp-to-p-convergence}.
	
	Next we prove~\eqref{eq:variance-B}, let us define $W_{in} = \varepsilon_i^2 \E(\widetilde \xi_i^2 \mid \rz{i},D_2)/\sigma_n^2$ and $\cF_n = \sigma(\mz,D_2)$. Observe that $\E(W_{in} \mid \cF_n) = \E(\varepsilon_i^2 \mid \rz{i})\E(\widetilde \xi_i^2 \mid \rz{i},D_2) /\sigma_n^2$. We use Theorem \ref{thm:conditional-wlln}, for which we need to check some moment conditions:
	\begin{align*}
		\frac{1}{n^{1 + \delta/2}} \sum_{i=1}^n \E[|W_{in}|^{1+\delta/2} \mid \cF_n] &= \frac{1}{n^{1+\delta/2}\sigma_n^{2+\delta}}\sum_{i=1}^n  \E\left[\left| \left(\varepsilon_i^2  \E(\widetilde\xi_i^2 \mid \rz{i},D_2) \right)\right|^{1+\delta/2} \mid \mz, D_2\right]\\
		&\leq \frac{1}{n^{1+\delta/2}\sigma_n^{2+\delta}}\sum_{i=1}^n  \E\left[ |\varepsilon_i|^{2 + \delta} \left(  \E(|\widetilde\xi_i|^{2+\delta} \mid \rz{i},D_2) \right) \mid \mz, D_2\right]\\
		&\leq \frac{1}{n^{1+\delta/2}\sigma_n^{2+\delta}}\sum_{i=1}^n  \E\left[ |\varepsilon_i|^{2 + \delta}  \mid \rz{i}, D_2\right] \E(|\widetilde\xi_i|^{2+\delta} \mid \rz{i},D_2)
	\end{align*}
	which goes to zero using our assumption~\eqref{eq:moment-condition-on-product-of-residuals-hat}. Hence we have by Theorem \ref{thm:conditional-wlln}
	$$
	\frac{1}{n\sigma_n^2} \sum_{i=1}^n \left(\varepsilon_i^2 - \E_\law(\varepsilon_i^2 \mid \rz{i}) \right)\E(\widetilde\xi_i^2 \mid \rz{i},D_2) \overset{p,p}{\to} 0.
	$$
	which implies~\eqref{eq:variance-B} by Lemma \ref{lemma:pp-to-p-convergence}.
	
	Next we will show~\eqref{eq:resampling-variance-is-consistent}:
	\begin{align*}
		\frac{1}{n\sigma_n^2} \sum_{i=1}^n \varepsilon_i^2 \E( \widetilde\xi^2_i \mid \rz{i}) &= 	\frac{1}{n\sigma_n^2} \sum_{i=1}^n \left(\varepsilon_i^2 - \E_\law(\varepsilon_i^2 \mid \rz{i}) \right)\E(\widetilde\xi_i^2 \mid \rz{i},D_2) + 	\frac{1}{n\sigma_n^2} \sum_{i=1}^n \E_\law(\varepsilon^2_i \mid \rz{i}) \E_\law(\xi_i^2 \mid \rz{i}, D_2)\\
		&+\frac{1}{n\sigma_n^2} \sum_{i=1}^n \E(\varepsilon_i^2 \mid \rz{i})\left[\E(\widetilde \xi_i^2 \mid \rz{i}) - \E( \xi_i^2 \mid \rz{i}) \right] \overset{p}{\to} 1
	\end{align*}
	where we have used~\eqref{eq:variance-A},~\eqref{eq:variance-B} and~\eqref{eq:variance-consistency-condition}. 
\end{proof}
\subsubsection{Proof of the main results}

\begin{proof}[Proof of Lemma \ref{lemma:first-reduction-of-HRT}]
	Observe that 
	$$
	\frac{1}{n}\sum_{i=1}^n (\ey{i} - \widehat m(\exa{i},\rz{i}))^2 \leq C(\cy,\mz)
	$$
	is equivalent to
	$$
	\frac{1}{n} \sum_{i=1}^n (\ey{i} - \widehat m(\exa{i},\rz{i}))^2  - \sum_{i=1}^n (\ey{i} - \E_{\lawhat}[\widehat m(\exa{i},\rz{i}) \mid \rz{i},D_2])^2 \leq \widetilde C(\cy,\mz)
	$$
	where $\widetilde C(\cy,\mz)$ is the obtained by suitably updating $C(\cy,\mz)$. Now the above display can be shown equivalent to
	\begin{align*}
		& \frac{2}{n} \sum_{i=1}^n \left(\E_{\lawhat}[\widehat m(\exa{i},\rz{i}) \mid \rz{i},D_2] - \widehat m(\exa{i},\rz{i})\right)\left(\ey{i} - \frac{\widehat m(\exa{i},\rz{i}) + \E_{\lawhat }[\widehat m(\exa{i},\rz{i}) \mid \rz{i},D_2]}{2}\right) \leq \widetilde C(\cy,\mz)\\
		&\iff \frac{-2}{n} \sum_{i=1}^n(\widehat m(\exa{i},\rz{i}) - \widehat m(\rz{i}))(\ey{i} -  \widehat m(\rz{i}))  + \frac{1}{ n}\sum_{i=1}^n (\widehat m(\exa{i},\rz{i}) - \widehat m(\rz{i}))^2 \leq \widetilde C(\cy,\mz)\\
		&\iff \frac{-2}{n} \sum_{i=1}^n (\widehat m(\exa{i},\rz{i}) - \widehat m(\rz{i}))(\ey{i} -  \widehat m(\rz{i}))  + \frac{1}{ n}\sum_{i=1}^n ((\widehat m(\exa{i},\rz{i}) - \widehat m(\rz{i}))^2 - \E (\widetilde \xi_i^2 \mid \rz{i}, D_2)) \leq \widetilde{\widetilde C}(\cy,\mz)
	\end{align*} 
	We have adjusted by $\frac{1}{n}\sum_{i=1}^n  \E (\xi_i^2 \mid  \rz{i}, D_2)$ on the last line and got the modified $\widetilde{\widetilde C}(\cy,\mz)$.
	Re-scaling by $-\frac{\sqrt n}{2\sigma_n}$ we have proved the result. 
	
\end{proof}

\begin{proof}[Proof of Lemma \ref{lemma:extra-term-in-HRT-vanishes}]
We have
\begin{align*}
&\frac{1}{\sqrt{n}\sigma_n}\sum_{i = 1}^n (\widehat \xi_i^2 - \E(\widetilde \xi_i^2 \mid \rz{i}, D_2)) \\
&\quad= \frac{1}{\sqrt{n}\sigma_n}\sum_{i = 1}^n \left(\widehat \xi_i^2 - \E(\widehat \xi_i^2 \mid \rz{i}, D_2)\right) \\
&\quad \quad + \frac{1}{\sqrt{n}\sigma_n}\sum_{i = 1}^n \left(\E(\widehat \xi_i^2 \mid \rz{i}, D_2) - \E(\widetilde \xi_i^2 \mid \rz{i}, D_2)\right)\\
&\quad=\frac{1}{\sqrt{n}\sigma_n}\sum_{i = 1}^n \left(\widehat \xi_i^2 - \E(\widehat \xi_i^2 \mid \rz{i}, D_2)\right) \\
&\quad \quad + \frac{1}{\sqrt{n}\sigma_n}\sum_{i = 1}^n (\E[\widehat m(\exa{i}, \rz{i}) \mid \rz{i},D_2] - \widehat m(\rz{i}))^2 \\
&\quad \quad + \frac{1}{\sqrt{n}\sigma_n}\sum_{i = 1}^n (\E[\xi_i^2 \mid \rz{i},D_2]-\E[\widetilde \xi_i^2 \mid \rz{i},D_2]) \\
&\quad \equiv I_n + II_n + III_n.
\end{align*}
We have $I_n \convp 0$ because $\E[I_n^2 \mid \mz,D_2]$ by assumption~\eqref{eq:variance-of-m-hat-given-Z}. Furthermore, we have $II_n \convp 0$ and $III_n \convp 0$ by assumptions~\eqref{eq:tower-regression-rate-assumption} and~\eqref{eq:variance-of-xi-rate-assumption}, respectively. 
\end{proof}

\begin{proof}[Proof of Lemma~\ref{thm:conditional-convergence-of-HRT-to-normality}]
	Observe that $T^{\textnormal{rHRT}}$ can be decomposed as
	\begin{equation}\label{eq:rHRT-decomposition}
	\begin{split}
		&T^{\textnormal{rHRT}}(\widetilde \cxa,\cy,\mz) \\
		&\quad= \frac{1}{\sqrt n \sigma_n} \sum_{i=1}^n \varepsilon_i\widetilde \xi_i + \frac{1}{\sqrt n \sigma_n} \sum_{i=1}^n (m(\rz{i}) - \widehat m(\rz{i}))\widetilde\xi_i  - \frac{1}{2\sqrt n \sigma_n}\sum_{i=1}^n (\widetilde\xi^2_i - \E(\widetilde \xi_i^2 \mid \rz{i}, D_2)) \\
		&\quad= I_n + II_n + III_n,
	\end{split}
	\end{equation}
	where $m(\rz{i}) = \E(\ey{i} \mid \rz{i})$. First, we claim that $II_n,III_n \overset{p}{\to} 0$. Let us first look at $II_n$. We calculate
	$$
	\E[II_n^2 \mid \mz, D_2] = \frac{1}{n \sigma^2_n} \sum_{i=1}^n(m(\rz{i}) - \widehat m(\rz{i}))^2 \E(\widetilde \xi_i^2 \mid \rz{i},D_2)  \overset{p}{\to} 0.
	$$
	The convergence to zero follows from our assumption~\eqref{eq:regression-of-Y-on-Z-is-consistent-wrt-L-hat}, so we conclude that $II_n \overset{p}{\to} 0$ by Lemma \ref{lemma:conditional-expectation-covergence-to-unconditional-probability-convergence}. Next we look at $III_n$ and evaluate $\E( III_n^2 \mid D_2)$,which is given by
	$$
	\E (III_n^2 \mid D_2) = \frac{\E\left[\mathrm{Var}(\widetilde{\xi}^2 \mid \bm \mz, D_2)\mid D_2\right]}{4\sigma_n^2} \overset{p}{\to} 0,
	$$ 
	which goes to zero by our assumption~\eqref{eq:variance-of-xi-hat-goes-to-zero}, from which we conclude $III_n \overset{p}{\to} 0$ by Lemma \ref{lemma:conditional-expectation-covergence-to-unconditional-probability-convergence}. Now, we turn our attention to $I_n$. Let us denote by $W_{in} \equiv \varepsilon_i\widetilde\xi_i$, $\mathcal{F}_n = \sigma(\cy,\mz,D_2)$ and invoke the conditional CLT \ref{thm:conditional-clt} to obtain that
	\begin{equation}\label{eq:conditional-clt-for-product-of-residual}
		\frac{1}{\widehat S_n} \sum_{i=1}^n W_{in} \overset{d,p}{\to} N(0,1),
	\end{equation}
	where $\widehat S^2_n = \sum_{i=1}^n\mathrm{Var}(W_{in} \mid \mathcal{F}_n) = \sum_{i=1}^n \varepsilon_i^2 \E ( \widetilde\xi_i^2 \mid \rz{i},D_2) $  if
	\begin{equation}\label{eq:clt-moment-condition}
		\frac{1}{\widehat S_n^{2+\delta}} \sum_{i=1}^n \E(|W_{in}|^{2+\delta} \mid \mathcal{F}_n) \overset{p}{\to} 0.
	\end{equation}
	Now from Lemma \ref{lemma:variance-of-the-resampled-statistic-converges} we know that $\frac{\widehat S^2_n}{n\sigma_n^2} \overset{p}{\to} 1$. Using this we know that~\eqref{eq:clt-moment-condition} is equivalent to showing
	$$
	\frac{1}{n^{1+\delta/2}\sigma_n^{2+\delta}} \sum_{i=1}^n \E(|W_{in}|^{2+\delta} \mid \mathcal{F}_n) \overset{p}{\to} 0.
	$$
	The LHS above is equal to
	\begin{align*}
		\frac{1}{n^{1+\delta/2}\sigma_n^{2+\delta}} \sum_{i=1}^n \E(|W_{in}|^{2+\delta} \mid \mathcal{F}_n)  &= 	\frac{1}{n^{1+\delta/2}\sigma_n^{2+\delta}} \sum_{i=1}^n \E(|\varepsilon_i\widetilde\xi_i|^{2+\delta} \mid \ey{i},\rz{i},D_2) \\
		&= \frac{1}{n^{1+\delta/2}\sigma_n^{2+\delta}} \sum_{i=1}^n |\varepsilon_i|^{2+\delta}\E(|\widetilde\xi_i|^{2+\delta} \mid \rz{i},D_2) \\
		&\equiv IV_n.
	\end{align*}
	Our assumption~\eqref{eq:moment-condition-on-product-of-residuals-hat} implies $\E(IV_n \mid D_2) \overset{p}{\to} 0$, which by Lemma \ref{lemma:conditional-expectation-covergence-to-unconditional-probability-convergence} implies $IV_n\overset{p}{\to} 0$. Hence the condition for the conditional CLT holds. Next let us look at the statement of conditional CLT, using the fact that  $\frac{\widehat S^2_n}{n\sigma_n^2} \overset{p}{\to} 1$ we can show that~\eqref{eq:conditional-clt-for-product-of-residual} is equivalent to (by using conditional Slutsky, Theorem \ref{thm:conditional-slutskys-thoerem})
	$$
	\frac{1}{\sqrt n \sigma_n} \sum_{i=1}^n \varepsilon_i\widetilde\xi_i \mid \cy, \mz, D_2\overset{d,p}{\to} N(0,1).
	$$
	Again using conditional Slutsky (Theorem \ref{thm:conditional-slutskys-thoerem}) we have that $ T^{\textnormal{rHRT}}(\widetilde \cxa,\cy,\mz) \mid \cy,\mz, D_2\overset{d,p}{\to} N(0,1).$ This in turn implies $C_n'(\cy,\mz) \overset{p}{\to} z_{1-\alpha}$ by Lemma \ref{lemma:conditional-convergence-to-quantile-convergence}.
\end{proof}

\begin{proof}[Proof of Theorem \ref{thm:HRT-PCM-equivalence}]
By Lemma~\ref{lemma:first-reduction-of-HRT}, we have
\begin{equation}
T^{\text{HRT}}_n \geq C_n(\cy, \mz) \quad \Longleftrightarrow \quad T^{\text{rHRT}}_n \geq C_n'(\cy, \mz).
\end{equation}
By Lemmas~\ref{lem:sigma-hat-n-convergence}, \ref{lemma:extra-term-in-HRT-vanishes}, and~\ref{thm:conditional-convergence-of-HRT-to-normality}, we have
\begin{equation}
T^{\text{rHRT}}_n \geq C_n'(\cy, \mz) \quad \Longleftrightarrow \quad T^{\text{tPCM}}_n \geq C_n''(\cy, \mz), \quad \text{where} \quad C_n''(\cy, \mz) \convp z_{1-\alpha}.
\end{equation}
By Lemmas~\ref{lem:sigma-hat-n-convergence} and~\ref{lem:numerator}, we have $T^{\text{tPCM}}_n \convd N(0,1)$, so the non-accumulation condition~\eqref{eq:non-accumulation} holds. Therefore, by Lemma~\ref{lemma:asymptotic-equivalence-of-tests}, we conclude that the HRT and tPCM tests are asymptotically equivalent.
\end{proof}

\begin{proof}[Proof of Lemma \ref{lemma:linear-model-equivalence-to-HRT}]
	In Lemma \ref{lemma:linear-model-type-I-error} we have already verified all the assumptions pertaining to tPCM for the proposed linear model, so we only need to verify the assumptions \eqref{eq:variance-of-m-hat-given-Z}, \eqref{eq:tower-regression-rate-assumption}, \eqref{eq:variance-of-xi-rate-assumption}, \eqref{eq:regression-of-Y-on-Z-is-consistent-wrt-L-hat}, \eqref{eq:variance-consistency-condition}, \eqref{eq:variance-of-xi-hat-goes-to-zero}, and~\eqref{eq:moment-condition-on-product-of-residuals-hat}.
	
	Recall from the proof of Lemma \ref{lemma:linear-model-type-I-error} that $\law(\exa|\rz) \sim N(\rz^T\eta,1)  $ and $ \lawhat(\exa|\rz) = N(\rz^T \hat \eta, 1)$. We also have that $m(\exa,\rz) = \beta \exa + \rz^T \gamma$ and $\widehat m(\exa,\rz) = \hat \beta \exa + \rz^T\hat \gamma$. Observe that $\xi_i = \hat \beta \delta_i $, $\mathrm{Var}(\varepsilon_i | \rz{i}, D_2) = 1$ and  $\mathrm{Var}_\law[\xi_i | \rz{i}, D_2] = \hat{\beta}^2$, implying $\sigma_n^2 = \hat{\beta}^2$. Now note that $\widehat m(\rz{i}) = \E_{\lawhat}(\widehat m(X,Z)) = \widehat \beta(\rz{i}^T \widehat \eta) + \rz{i}^T\widehat \gamma$ and $\widetilde \xi_i = \widehat m(\exka{i},\rz{i}) - \widehat m(\rz{i}) = \widehat \beta(\exka{i} - \rz{i}^T\widehat \eta) = \widehat \beta \delta_i'$, where $\delta_i' \overset{i.i.d}{\sim} N(0,1)$. This implies that $\E(\widetilde \xi_i^2 \mid \rz{i},D_2) = \widehat\beta^2$.
  
	First, we verify equation~\eqref{eq:variance-of-m-hat-given-Z}:
\begin{align*}
	\frac{1}{\sigma_n^2} \E&\left[ \V\left( \left(\widehat m(\exa,\rz) - \widehat m(\rz)\right)^2 \mid \rz, D_2\right)\mid D_2\right] \\ &= \frac{1}{\widehat \beta^2} \E\left[ \V\left( \widehat \beta^2\left(\exa - \rz^T\widehat \eta\right)^2 \mid \rz, D_2\right)\mid D_2\right]\\
	&= \frac{\widehat \beta^4}{\widehat \beta^2}  \E\left[ \V\left( \left(\exa - \rz^T\widehat \eta\right)^2 \mid \rz, D_2\right)\mid D_2\right]\\
	&= \widehat{\beta}^2  \E\left[ \V\left(  \left(\rz^T(\eta -\widehat \eta) + \delta\right)^2 \mid \rz, D_2\right)\mid D_2\right]\\
	&\leq \widehat{\beta}^2  \E\left[ \E\left( \left(\rz^T(\eta -\widehat \eta) + \delta\right)^4 \mid \rz, D_2\right)\mid D_2\right]\\
	&\leq c_1 \widehat{\beta}^2  \E\left[ \left(\rz^T(\eta -\widehat \eta) \right)^4+ \delta^4 \mid D_2\right]\\
	&\leq c_1 \widehat{\beta}^2 \left[ \|\widehat{\eta}-\eta\|_2^4c_{\rz}^4 + \E \delta^4\right] = O_p\left(\frac{1}{n}\right)
\end{align*}
where we have used the fact that $\widehat{\beta} =O_p\left(\frac{1}{\sqrt n}\right)$ and $\|\widehat{\eta}-\eta\|_2 = o_p(1)$. 

Next, we verify equation~\eqref{eq:tower-regression-rate-assumption}:
\begin{align*}
		\frac{1}{\sqrt n\sigma_n}\sum_{i=1}^n (\widehat m(\rz{i}) - \E\left[\widehat m(\exa{i},\rz{i}) \mid \rz{i}, D_2\right])^2 &= \frac{1}{\sqrt n \widehat \beta} \sum_{i=1}^n \left(\widehat{\beta}(\rz{i}^T\widehat \eta - \rz{i}^T \eta)\right)^2\\
		&= \widehat{\beta} \frac{1}{\sqrt n} \sum_{i=1}^n \left(\rz{i}^T\widehat \eta - \rz{i}^T \eta\right)^2\\
		&\leq |\sqrt n \widehat{\beta} | |c_{\rz}^2 \|\widehat{\eta} -\eta\|_2^2|  =O_p\left(\frac{1}{n}\right),
\end{align*}
where we have used Cauchy-Schwartz inequality at the last inequality and used the fact that $\widehat{\beta} $ and $\|\widehat{\eta}-\eta\|_2 $ are $=O_p\left(\frac{1}{\sqrt n}\right)$.

Next, we note that equation~\eqref{eq:variance-of-xi-rate-assumption} follows from the fact that $\V_{\lawhat}[\sxi_i \mid \rz{i}, D_2] = \V_{\law}[\sxi_i \mid \rz{i}, D_2] = \widehat \beta^2$, which implies the LHS of~\eqref{eq:variance-of-xi-rate-assumption} is exactly $0$.

Next, we verify equation~\eqref{eq:regression-of-Y-on-Z-is-consistent-wrt-L-hat}:
  \begin{align*}
  	\frac{1}{n\sigma_n^2} \sum_{i=1}^n (m(\rz{i}) - \widehat m(\rz{i}))^2 \E(\widetilde \xi_i^2 | \rz{i}, D_2) 
  	&= 	\frac{1}{n\widehat \beta^2} \sum_{i=1}^n \left[\rz{i}^T\gamma - (\widehat \beta (\rz{i}^T \widehat \eta) + \rz{i}^T\widehat \gamma)\right]^2\widehat \beta^2 \\
  	&= \frac{1}{n} \sum_{i=1}^n \left[\widehat \beta(\rz{i}^T\widehat\eta) + \rz{i} (\widehat \gamma -\gamma)\right]^2\\
  	&\leq \widehat \beta^2 \frac{1}{n} \sum_{i=1}^n (\rz{i}^T\widehat \eta)^2 + \frac{1}{n} \sum_{i=1}^n (\rz{i}^T(\widehat \gamma - \gamma))^2\\
  	&\leq \widehat \beta^2 c_{\rz}^2\|\widehat{\eta}\|_2^2 + \|\widehat{\gamma} -\gamma\|_2^2 c^2_{\rz} =O_p\left(\frac{1}{n}\right)
  \end{align*}
The last line follows from the fact that $\widehat \beta^2$, $ \|\widehat{\gamma} -\gamma\|_2^2$ and $\|\widehat{\eta} -\eta\|_2^2$  are $O_p\left(\frac{1}{n}\right)$.

To show~\eqref{eq:variance-consistency-condition}, we observe that $\V_{\lawhat}[\sxi_i \mid \rz{i}, D_2] = \V_{\law}[\sxi_i \mid \rz{i}, D_2] = \widehat \beta^2$ from which it follows that the the LHS is exactly zero, and hence~\eqref{eq:variance-consistency-condition} is trivially true.

Next, we verify equation~\eqref{eq:variance-of-xi-hat-goes-to-zero}:
\begin{align*}
	\frac{\E\left[\V(\widetilde{\pxi}^2 \mid \rz, D_2)\mid D_2\right]}{\sigma_n^2} = \frac{\widehat{\beta}^4}{\widehat{\beta}^2} \E(\V(\delta'^2)) =\widehat{\beta}^2 = O_p\left(\frac{1}{n}\right)
\end{align*}

Finally, we verify equation~\eqref{eq:moment-condition-on-product-of-residuals-hat}:
\begin{align*}
		\frac{1}{\sigma_n^{2+\delta}}\E( |\peps\widetilde{\pxi}|^{2+\delta} \mid D_2) = \frac{1}{\widehat \beta^{2 + \delta}} \widehat\beta^{2+\delta}  \E(|\peps|^{2+\delta}) \E(|\delta|^{2+\delta}) = O_p(1) = O_p(n^\delta)
\end{align*}

\end{proof}

\begin{proof}[Proof of Lemma \ref{lemma:spline-equiv}]
	We analyze the spline model under the Model-X assumption, i.e., the covariate distribution \( \law_{\exa} \) is known and we may plug in \( \lawhat_{\exa_j \mid \rx{\mj}} = \law_{\exa_j \mid \rx{\mj}} \).
	
	This directly ensures that the following conditions are satisfied:
	\begin{itemize}
		\item~\eqref{eq:tower-regression-rate-assumption} and~\eqref{eq:variance-of-xi-rate-assumption} hold because the law \( \lawhat \) is correctly specified;
		\item~\eqref{eq:variance-consistency-condition} holds as the variance under \( \lawhat \) matches the true law.
	\end{itemize}
	
	Moreover, under this setting we have:
	\[
	\widehat \sxi_i = \sxi_i \quad \text{and} \quad \widetilde \sxi_i \mid \rx{i}{\mj} \overset{d}{=} \sxi_i \mid \rx{i}{\mj},
	\]
	which implies:
	\[
	\V(\widehat \sxi_i^2 \mid \rx{i}{\mj}, D_2) = \V(\sxi_i^2 \mid \rx{i}{\mj}, D_2) = \V(\widetilde \sxi_i^2 \mid \rx{i}{\mj}, D_2),
	\]
	so that conditions~\eqref{eq:variance-of-m-hat-given-Z} and~\eqref{eq:variance-of-xi-hat-goes-to-zero} both reduce to:
	\begin{equation}
		\label{eq:variance-of-sxi-goes-to-zero-supp}
		\frac{1}{\sigma_n^2} \E\left[ \V(\sxi^2 \mid \cx{\bullet}{\minus j}, D_2) \mid D_2 \right] \overset{p}{\to} 0.
	\end{equation}
	
	For condition~\eqref{eq:moment-condition-on-product-of-residuals-hat}, note that under the null hypothesis:
	\[
	\varepsilon_i \sxi_i \overset{d}{=} \varepsilon_i \widetilde \sxi_i,
	\]
	so this condition simplifies to the existing central limit condition~\eqref{eq:CLT-condition} (which was already verified to hold under the assumptions of Theorem \ref{thm:spline-type-I-error}).
	
	For the remaining assumption~\eqref{eq:regression-of-Y-on-Z-is-consistent-wrt-L-hat}, observe that:
	\begin{align*}
		(\widehat m(\rx{i}{\mj}) - m(\rx{i}{\mj}))^2 &= \left( \E[\widehat m(\exa{i}, \rx{i}{\mj}) \mid \rx{i}{\mj}, D_2] - m(\rx{i}{\mj}) \right)^2 \\
		&\leq \E\left[ (\widehat m(\exa{i}, \rx{i}{\mj}) - m(\rx{i}{\mj}))^2 \mid \rx{i}{\mj}, D_2 \right],
	\end{align*}
	and since \( \E[\widetilde \sxi_i^2 \mid \rx{i}{\mj}, D_2] = \E[\sxi_i^2 \mid \rx{i}{\mj}, D_2] \), condition~\eqref{eq:regression-of-Y-on-Z-is-consistent-wrt-L-hat} is implied by:
	\begin{equation}
		\frac{1}{n \sigma_n^2} \sum_{i=1}^n \E\left[ (\widehat m(\exa{i}, \rx{i}{\mj}) - m(\rx{i}{\mj}))^2 \mid \rx{i}{\mj}, D_2 \right] \E\left[ \sxi_i^2 \mid \rx{i}{\mj}, D_2 \right] \overset{p}{\to} 0,
	\end{equation}
	which is the regression consistency condition~\eqref{eq:consitency-of-reg-func}(which was already verified to hold under the assumptions of Theorem \ref{thm:spline-type-I-error}).
	
Finally, in the context of spline regression, condition~\eqref{eq:variance-of-sxi-goes-to-zero-supp} is implied by
\[
\|\Pi(\widehat \beta_{\exa})\|_\infty^2 = o_p(K^{-1}_{\prx}).
\]
To see this, note that \( \V(\sxi^2 \mid \cx{\bullet}{\minus j}, D_2) \leq \E(\sxi^4 \mid \cx{\bullet}{\minus j}, D_2) \), which yields
\[
\E\left[ \V(\sxi^2 \mid \cx{\bullet}{\minus j}, D_2) \mid D_2 \right] \leq \E(\sxi^4 \mid D_2).
\]
Applying the bound from~\eqref{eq:upper-bound-xi-simplified}, we have
\[
\E(\sxi^4 \mid D_2) \leq \|\Pi(\widehat \beta_{\exa})\|_\infty^4,
\]
where \( \widehat \beta_{\exa} \) denotes the fitted spline coefficients and \( \Pi \) is the spline basis matrix introduced in Section~\ref{sec:examples-type-1-error}. 

Combining this with the lower bound on \( \sigma_n^2 \) from~\eqref{eq:lower-bound-sigma-partial}, we obtain
\[
\frac{1}{\sigma_n^2} \E\left[ \V(\sxi^2 \mid \cx{\bullet}{\minus j}, D_2) \mid D_2 \right] \leq c' K_{\prx} \|\Pi(\widehat \beta_{\exa})\|_\infty^2,
\]
for some constant \( c' > 0 \). Thus, condition~\eqref{eq:variance-of-sxi-goes-to-zero-supp} follows directly from the decay of the spline coefficient norm:
\[
\|\Pi(\widehat \beta_{\exa})\|_\infty^2 = o_p(K^{-1}_{\prx}).
\]

In summary, the spline model under the Model-X assumption satisfies all the required conditions of Theorem~\ref{thm:HRT-PCM-equivalence}, establishing the asymptotic equivalence of the HRT and tPCM tests in this setting.

\end{proof}

\section{Computational cost of methods compared in GWAS-inspired setting} \label{sec:computational-cost-appendix}

\begin{algorithm}[h]
	\SetAlgoLined 
	\KwIn{Data $\{(\rx{i}{\bullet}, \ey{i})\}_{i = 1, \dots, 2n}$}
	Split the data into $D_1 \cup D_2$, with $D_1$ and $D_2$ containing $n$ samples each. \textcolor{blue}{Cost $O(1)$.}
	
	Estimate $\E[\ey \mid \rx]$ on $D_2$ \textcolor{blue}{via lasso}, call it $\widehat m(\rx)$. \textcolor{blue}{Cost $O(np)$.}
	
	\For{$j \gets 1$ \KwTo $p$}{
		Regress $\widehat m(\rx)$ on $\rx{\mj}$ using $D_2$ \textcolor{blue}{via lasso} to obtain $\widecheck m_j(\rx{\mj})$ and define $\widehat f_j(\rx)  \equiv \widehat m(\rx) - \widecheck m_j(\rx{\mj})$. \textcolor{blue}{Evaluating $\widehat m(\rx)$ on $D_2$ is free since it was fitted in step 2. Cost of lasso is $O(np)$.}

		Using $D_1$, regress $\ey$ on $\rx{\mj}$ \textcolor{blue}{via lasso} to obtain an estimate $\widetilde{m}_j(\rx{\mj})$ of $\E[\cy|\rx{\mj}]$. \textcolor{blue}{Cost $O(np)$.}

		Also on $D_1$, regress $\widehat f_j(\rx)$ \textcolor{blue}{via lasso} on $\rx{\mj}$ to obtain $\widehat m_{\widehat f_j} (\rx{\mj})$. \textcolor{blue}{Forming $\widehat f_j(\rx)$ on $D_1$ costs $O(ns)$. Running the lasso costs $O(np)$.}

		Compute $T^{\textnormal{vPCM}}_j$ based on equations~\eqref{eq:l-i-j} and~\eqref{eq:vpcm-test-stat}. \textcolor{blue}{All components of $T^{\textnormal{vPCM}}_j$ were computed previously. Cost $O(n)$.}

		Set $p_j \equiv 1 - \Phi(T^{\textnormal{vPCM}}_j)$. \textcolor{blue}{Cost $O(1)$.}
	}
	\Return $\{p_j\}_{j = 1, \dots, p}$.
	
	\caption{Vanilla PCM computational cost}
	\label{alg: pcm cost}
\end{algorithm}

\begin{algorithm}[h]
	\SetAlgoLined 
	\KwIn{Data $\{(\rx{i}{\bullet}, \ey{i})\}_{i = 1, \dots, 2n}$, $B_{\textnormal{HRT}}$ resamples.}
	Split the data into $D_1 \cup D_2$, with $D_1$ and $D_2$ containing $n$ samples each. \textcolor{blue}{Cost $O(1)$.}
	
	Estimate $\E[\ey \mid \rx]$ on $D_2$ \textcolor{blue}{via lasso}, call it $\widehat m(\rx)$. \textcolor{blue}{Cost $O(np)$.}

	Estimate $\law(\prx)$ on $D_2$ \textcolor{blue}{via Baum-Welch}, call it $\lawhat(\rx)$. \textcolor{blue}{Cost $O(np)$.}

	\textcolor{blue}{Compute all conditionals $\lawhat(\ex{i}{j} \mid \rx{i}{\mj})$ for all $i \in D_1$ and $j = 1, \dots, p$ via Perduca-Noel algorithm. Cost $O(np)$.}

	Compute test statistic $T^{\textnormal{HRT}}$ as in equation~\eqref{eq:HRT-test}. \textcolor{blue}{All components of $T^{\textnormal{HRT}}$ were computed previously. Cost $O(n)$.}
	
	\For{$j \gets 1$ \KwTo $p$}{
		\For{$b \gets 1$ \KwTo $B_{\textnormal{HRT}}$}{
			Sample $\exk{i}{j} \sim \lawhat(\ex{j} \mid \rx{i}{\mj})$ for all $i \in D_1$. \textcolor{blue}{Given step 4, this costs $O(n)$.}

			Compute $\widetilde T_j^b \equiv \frac{1}{n} \sum_{i=1}^n (\ey{i} - \widehat{m}(\exk{i}{j}, \rx{i}{\mj}))^2$. \textcolor{blue}{Evaluating the fitted lasso model $\widehat m$ on $n$ new observations costs $O(ns)$.}
		}

		Set $p_j \equiv \frac{1}{B_{\textnormal{HRT}} + 1} \left(1 + \sum_{b = 1}^{B_{\textnormal{HRT}}} \mathbbm{1}\left[ T^{\textnormal{HRT}} \leq \widetilde T_j^b \right] \right)$. \textcolor{blue}{Cost $O(B_{\textnormal{HRT}})$.}
	}
	\Return $\{p_j\}_{j = 1, \dots, p}$.
	\caption{Holdout Randomization Test computational cost}
	\label{alg: hrt cost}
\end{algorithm}

\begin{algorithm}[h]
	\SetAlgoLined 
	\KwIn{Data $\{(\srx_{i}, \ey{i})\}_{i = 1, \dots, 2n}$, $B_{\textnormal{tPCM}}$ resamples.}
	Split the data into $D_1 \cup D_2$, with $D_1$ and $D_2$ containing $n$ samples each. \textcolor{blue}{Cost $O(1)$.}
	
	Estimate $\E[\ey \mid \rx]$ on $D_2$ \textcolor{blue}{via lasso}, call it $\widehat m(\rx)$. \textcolor{blue}{Cost $O(np)$.}

	Estimate $\law(\prx)$ on $D_2$ \textcolor{blue}{via Baum-Welch}, call it $\lawhat(\rx)$. \textcolor{blue}{Cost $O(np)$.}

	\textcolor{blue}{Compute all conditionals $\lawhat(\ex{i}{j} \mid \rx{i}{\mj})$ for all $i \in D_1$ and $j = 1, \dots, p$ via Perduca-Noel algorithm. Cost $O(np)$.}
	
	\For{$j \gets 1$ \KwTo $p$}{
		Compute $\widehat{m}_j (\rx{i}{\mj}) \equiv \sum_{x_j \in \{0,1\}} \widehat{m} (\exk{i}{j} = x_j, \rx{i}{\mj}) \lawhat(\ex{i}{j} = \exs{j} \mid \rx{i}{\mj})$ for all $i \in D_1$. \textcolor{blue}{Since $\lawhat(\ex{i}{j} = \exs{j} \mid \rx{i}{\mj})$ were computed in step 4, the cost of this step is two evaluations of the fitted lasso model $\widehat m$ on $n$ new observations, costing $O(ns)$.}

		Define $R_{ij} \equiv (\ey{i}  - \widehat{m}_j (\rx{i}{\mj}))(\widehat m(\rx{i}{\bullet}) - \widehat{m}_j (\rx{i}{\mj}))$ for $i$ in $D_1$. \textcolor{blue}{The cost of this step is dominated by evaluating $\widehat m(\rx{i}{\bullet})$ for each $i \in D_1$, which costs $O(ns)$.}

		Compute $T^{\textnormal{tPCM}}_j \equiv \frac{\frac{1}{\sqrt n} \sum_{i=1}^n R_{ij}}{\sqrt{ \frac{1}{n}\sum_{i=1}^n R_{ij}^2 - \left(\frac{1}{n} \sum_{i=1}^n R_{ij}\right)^2}}$. \textcolor{blue}{Cost $O(n)$.}

		Set $p_j \equiv 1 - \Phi(T^{\textnormal{tPCM}}_j)$. \textcolor{blue}{Cost $O(1)$.}
	}
	\Return $\{p_j\}_{j = 1, \dots, p}$.
	\caption{Tower PCM computational cost}
	\label{alg: tpcm cost}
\end{algorithm}

\begin{algorithm}[h]
	\SetAlgoLined 
	\KwIn{Data $\{(\rx{i}{\bullet}, \ey{i})\}_{i = 1, \dots, 2n}$}
	
	Fit an HMM to $\law(\rx)$ on all observations via Baum-Welch, call it $\lawhat(\rx)$. \textcolor{blue}{Cost $O(np)$.}

	Based on $\lawhat(\rx)$, generate HMM knockoffs $\rxk$ for all observations via Algorithm 2 in \citet{SetS19}. \textcolor{blue}{Cost $O(np)$.}

	Run a lasso of $\cy$ on $[\mx, \mxk]$ using all observations. \textcolor{blue}{Cost $O(np)$.}

	Form knockoff statistics based on the coefficients fitted in Step 3 and apply the knockoff filter \citep{BC15} to select variables. \textcolor{blue}{Cost $O(p)$.}

	\Return selected variables.
	\caption{Model-X knockoffs computational cost}
	\label{alg: knockoffs cost}
\end{algorithm}

To justify the computational costs reported in Table~\ref{tab:gwas-example-computations}, we annotate Algorithms~\ref{alg: pcm} (PCM), \ref{alg: hrt} (HRT), and \ref{alg: tpcm} (tPCM) with specific computational choices and their associated costs in Algorithms~\ref{alg: pcm cost}, \ref{alg: hrt cost}, and \ref{alg: tpcm cost}, respectively. We do the same for model-X knockoffs (Algorithm~\ref{alg: knockoffs cost}), and omit GCM because its algorithm is similar to that of PCM and its computational cost is the same. To justify the costs reported in Algorithms~\ref{alg: pcm cost}, \ref{alg: hrt cost}, \ref{alg: tpcm cost}, and \ref{alg: knockoffs cost}, note that:
\begin{itemize}
\item Processing each variable in one iteration of the coordinate descent algorithm \citep{friedman2010regularization} requires a sum over $n$ observations, so processing all variables in one iteration requires $O(np)$ operations. Since we assume a constant number of iterations, the total cost is also $O(np)$.
\item The Baum-Welch algorithm with forward-backward message passing requires $O(np)$ operations per iteration \citep{Rabiner1989}. Since we assume a constant number of iterations, the total cost is also $O(np)$.
\item The Perduca-Noel algorithm for computing all conditionals $\lawhat(\ex{j} \mid \rx{\mj})$ requires $O(np)$ operations \citep{Perduca2013}.
\item Evaluating a fitted lasso model with $s$ nonzero coefficients on $n$ new observations requires the multiplication of an $n \times s$ matrix by an $s \times 1$ vector, which costs $O(ns)$ operations.
\item Constructing HMM knockoffs requires $O(np)$ operations \citep{SetS19}.
\end{itemize}
Putting together the costs reported in Algorithms~\ref{alg: pcm cost}, \ref{alg: hrt cost}, \ref{alg: tpcm cost}, and \ref{alg: knockoffs cost}, we obtain the computational costs reported in Table~\ref{tab:gwas-example-computations}.

\section{Methods compared in the simulation study} 

\subsection{Method implementation details} \label{sec:simulation-study-details}

\paragraph{tPCM} We apply tPCM (Algorithm~\ref{alg: tpcm}) with the \verb|ranger()| function from \texttt{ranger} package to fit a random forest for $\E[\ey \mid \rx]$. We used the \texttt{fastPhase} software for estimating the initial, transition, and emission probabilities of an HMM. These estimates were then converted into estimates of $\P(\ex{j} \mid \rx{\mj})$ using a forward-backwards algorithm; see Appendix B of \citet{Niu2025} for more details. We choose training proportion 0.45, determined as described in Appendix~\ref{sec:choosing-training-proportions}. 
	
	\paragraph{HRT} We apply the HRT (Algorithm~\ref{alg: hrt}) with the \verb|ranger()| function from \texttt{ranger} package to fit a random forest for $\E[\ey \mid \rx]$. We used the \texttt{fastPhase} software for estimating the initial, transition, and emission probabilities of an HMM. Like for tPCM, these estimates were then converted into estimates of $\P(\ex{j} \mid \rx{\mj})$ using a forward-backwards algorithm. We choose $B_{\text{HRT}} = 200$ resamples and training proportion 0.45. 

	\paragraph{PCM} We apply a variant of PCM that is closer to Algorithm 1 from \citet{Lundborg2022a} than vanilla PCM (Algorithm~\ref{alg: pcm} in Section~\ref{sec:vPCM}), as it includes Step 1 (iv). Step 1 (ii) was not possible in this case since we fit an interacted model. We continued to omit Step 2 of Algorithm 1 from \citet{Lundborg2022a}, which the authors claimed ``is not critical for good power properties." We also use \verb|ranger()| for fitting $\mathbb{E}[\ey \mid \rx{\mj}]$. Moreover, to maintain as fair a comparison as possible, we endow PCM with knowledge of the categorical nature of $\rx$. This meant that for any step where a function of $\rx$ is regressed on $\rx{\mj}$ (Steps 1 (iii) and 3 (i) from Algorithm 1 from \citet{Lundborg2022a}), we first estimated the conditional probability, $\widehat{\P}(\ex{j} \mid \rx{\mj})$, which was used to compute the conditional expectation estimate $\widehat{\E}[f_j(\rx) \mid \rx{\mj}] \equiv \sum_{\exs{j} \in \{0,1\}} f_j(\exs{j}, \rx{\mj}) \widehat{\P}(\ex{j} =\exs{j}\mid \rx{\mj})$. These estimates were obtained using the classification version of \verb|ranger()|. We choose training proportion 0.45, determined as described in Appendix~\ref{sec:choosing-training-proportions}.

	\paragraph{knockoffs} We implemented knockoffs using the \verb|knockoff()| from the \texttt{knockoff} package. The choice of test statistic was \verb|stat.random_forest()|. Knockoffs were sampled using the estimated initial probability, transition, and emission matrix estimates from \texttt{fastPhase}, which were then passed into the \verb|knockoffHMM()| function from the \texttt{SNPknock} package.

	\paragraph{Oracle GCM} We also compare to an oracle version of the GCM test that is equipped with the true $\law(\ey \mid \rx)$ and $\law(\rx)$, and using the same tower property trick as the tPCM test. Since there was no nuisance function estimation, there was no sample splitting, and so the Oracle GCM test had a larger sample size than the other splitting methods. 

\subsection{Methods excluded from comparison} \label{sec:omission-justification}

Here we justify the omission of two additional methods from our simulation study. First, we omit the holdout grid test (HGT), a faster version of HRT proposed by \citet{Tansey2018}. The HGT employs a discrete, finite grid approximation and a caching strategy. \citet{Tansey2018} theoretically demonstrated the validity of the procedure under the model-X assumption. We chose to omit this method because when the joint distribution of $\prx$ is not known but estimated, the method may no longer be valid and/or depends on the level of discretization chosen. Furthermore, this method trades off computational resources for memory resources, complicating the comparison. Second, we omit a method proposed by \citet{Williamson2021a} for testing whether the functional~\eqref{eq:mmse-gap} equals zero because simulations in \citet{Lundborg2022a} demonstrated a sizable gap in power when compared with PCM. 

\section{False discovery rate in the simulation study} \label{sec:fdr}
\begin{figure}[H]
	\centering
	\includegraphics{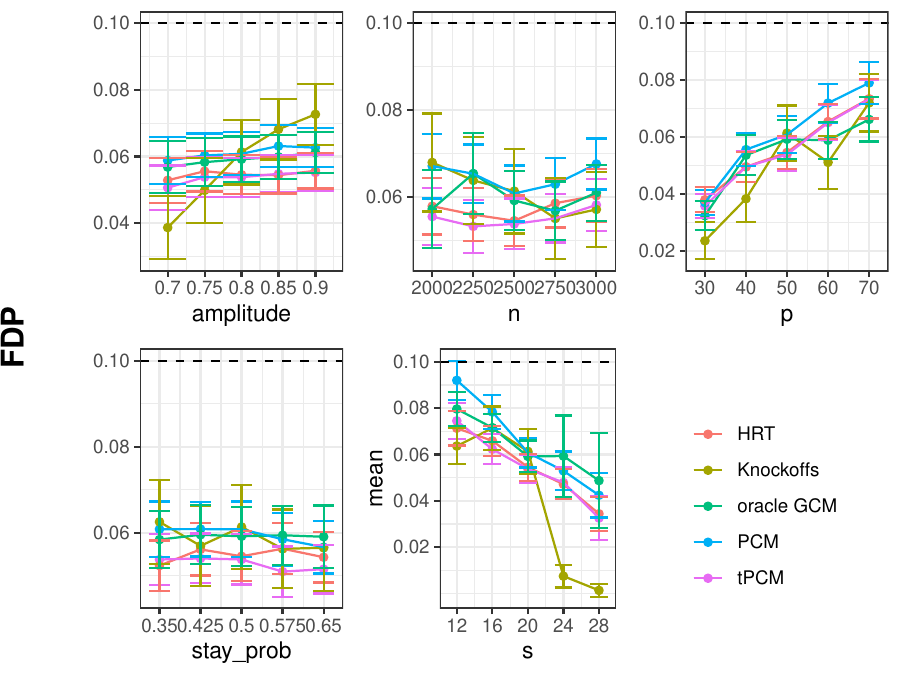}
	\caption{Type-I error control: in each plot, we vary one parameter. Each point is the average of 400 Monte Carlo replicates, and the error bars are the average $\pm 2 \times \widehat{\sigma}_f$, where $\widehat{\sigma}_f$ is the Monte Carlo standard deviation divided by $\sqrt{400}$.}
	\label{fig:fdr-rf}
\end{figure}

\section{Choosing the training proportions} \label{sec:choosing-training-proportions}
In this section, we justify our choice of the best training proportions for tower PCM and PCM. For tPCM, we compared training proportions in $\{0.3, 0.35, 0.4, 0.45, 0.5\}$. For PCM we compared training proportions in $\{0.4, 0.45, 0.5\}$. We plot the false discovery proportions and power for for each method in Figures \ref{fig:tPCM FDR proportion}, \ref{fig:PCM FDR proportion}, \ref{fig:tPCM power proportion hmm}, and \ref{fig:PCM power proportion hmm}. Generally, all choices of proportions seem to be controlling the type-I error for both tPCM and PCM.  It is unclear what we should expect, since smaller training proportion means more data for the in-sample fits on the test split, but a poorer estimate of the direction of the alternative on the training split. In terms of power, though there is not a single training proportion that dominates uniformly for both tPCM and PCM, 0.45 are generally the highest for both. We do note that the simulation setting for choosing the proportion was similar to but does not exactly match the simulation setting from the main text. Nevertheless, we utilized the 0.45 proportion for the simulation in the main text.

\begin{figure}
	\centering
	\includegraphics{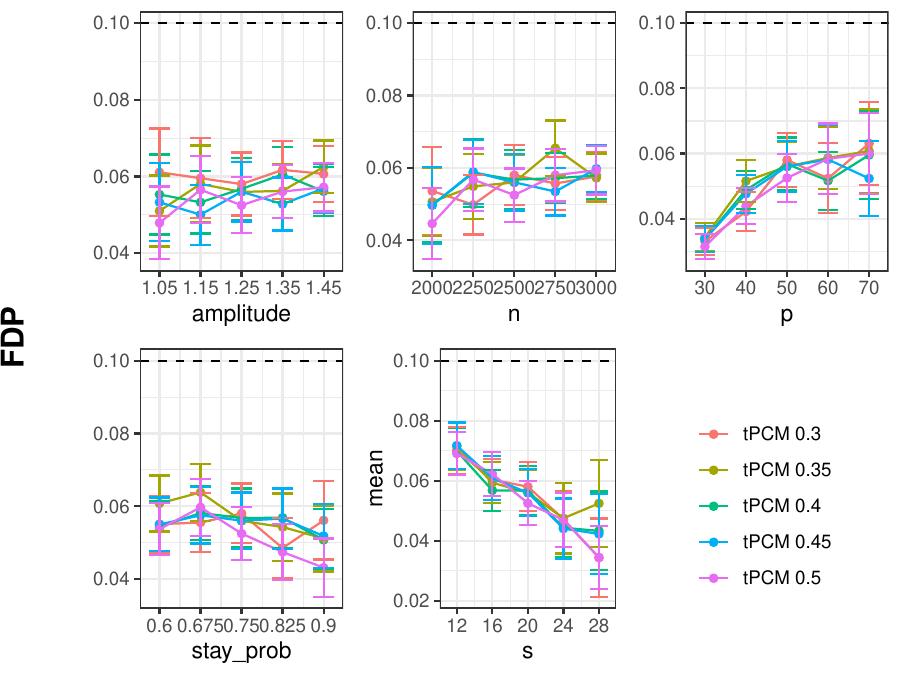}
	\caption{A false discovery rate comparison between different training proportions for tPCM.}
	\label{fig:tPCM FDR proportion}
\end{figure}

\begin{figure}
	\centering
	\includegraphics{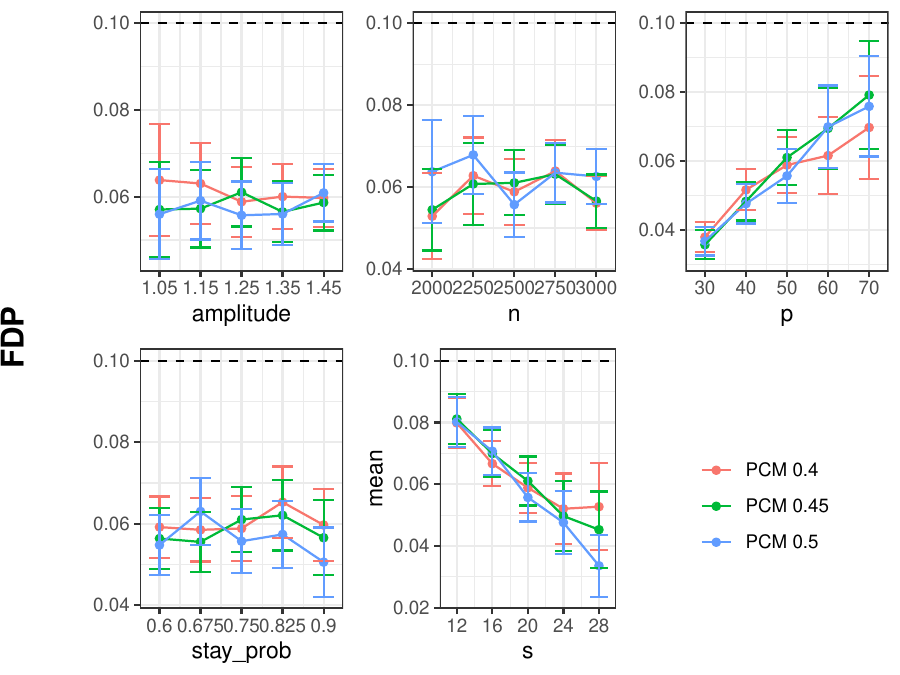}
	\caption{A false discovery rate comparison between different training proportions for PCM.}
	\label{fig:PCM FDR proportion}
\end{figure}

\begin{figure}
	\centering
	\includegraphics{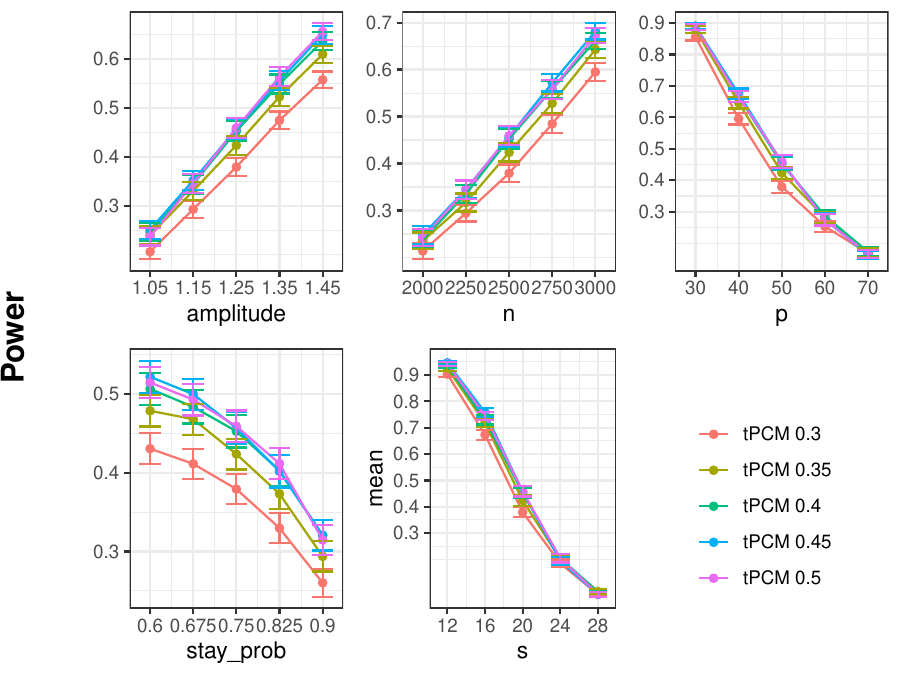}
	\caption{A power comparison between different training proportions for tPCM.}
	\label{fig:tPCM power proportion hmm}
\end{figure}

\begin{figure}
	\centering
	\includegraphics{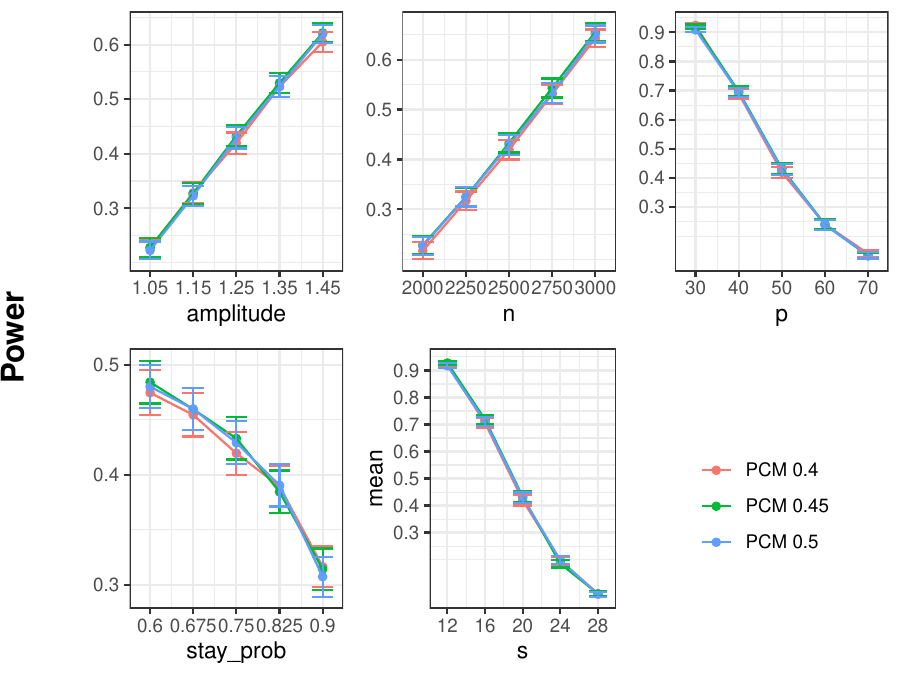}
	\caption{A power comparison between different training proportions for PCM.}
	\label{fig:PCM power proportion hmm}
\end{figure}

\section{Method implementation details in the data analysis} \label{sec:data-analysis-details}
\paragraph{HRT and tPCM}
HRT and tPCM utilized a 0.3 training proportion. On the training sample, we obtained fits for $\E[\ey \mid \rx]$ and $\law(\rx)$. Since $\ey$ is binary, for $\E[\ey \mid \rx]$, we used a classification random forest by converting the outcome to a factor and applying the \verb|ranger()| function from the \texttt{ranger} package. For $\mathcal{L}(\rx)$, we used the same fit as in \citet{Liu2020} and \citet{Li2021c}, which was the graphical lasso as implemented in the \texttt{CVglasso()} function from the \texttt{CVglasso} package with parameter \texttt{lam.min.ratio = 1e-6}. HRT utilized $B_{\text{HRT}} = 3300 \approx 2 \times p / \alpha$ resamples, and tPCM used $B_{\text{tPCM}} = 25$ resamples to approximate conditional means. 

\paragraph{PCM}
As in the simulation study, PCM was implemented as described in Algorithm 1 of \citet{Lundborg2022a}, except for Step 2. It did not included the extra step (1(ii)) that can be performed when the contribution to $\E[\ey \mid \rx]$ from $\ex{j}$ can be separated from the contributions from the other predictors, as we fit an interacted model. PCM also used a 0.3 sample split, and also used the a classification random forest as in the previous methods for fitting $\E[\ey \mid \rx]$ on the training split, as well as for fitting $\E[\ey \mid \rx{\mj}]$. We chose to fit $\E[f_j(\ex{j}) \mid \rx{\mj}]$ using the Lasso as implemented in the \texttt{glmnet} package on the evaluation split. We felt this choice was a reasonable analog to the graphical lasso fit used for tPCM and HRT.

\paragraph{knockoffs}
As in the simulation study, we used the \verb|stat.random_forest()| statistic from the \texttt{knockoff} package, with the outcome converted to a factor so that a classification forest was fit. We also sampled multivariate gaussian knockoffs using the graphical lasso with the same hyperparameters as used for HRT and tPCM.

\paragraph{tGCM}
tGCM is akin to the oracle GCM from the simulation, except $\E[\ey \mid \rx]$ and $\law(\rx)$ are estimated from the data. tGCM uses the same tower-based acceleration as the tPCM test. There is no danger of a degenerate limiting distribution under the null, so we can make use of the full sample for testing through 5 fold cross-fitting. For each of the five equally sized folds, $\E[\ey \mid \rx]$ and $\law(\rx)$ are estimated on the remaining 4/5 of the data using the same estimators as for HRT and tPCM. The tower trick is utilized to estimate $\E[\ey \mid \rx{\mj}]$ from the estimates for $\E[\ey \mid \rx]$ and $\law(\rx)$ using 25 resamples.

\section{Additional simulations results under an alternative DGP}
\label{sec:gam-banded-sim-results}
In this section, we investigate the finite-sample performance of tPCM with a simulation-based assessment of Type-I error, power, and computation time under a different data-generating process. The Type-I error of choice was the family-wise error rate at level $\alpha = 0.05$. We consider a generalized additive model (GAM) specification for the distribution of $Y \mid X$. The goal of the simulation is to corroborate the findings of the previous sections: (1) tPCM is computationally efficient, (2) tPCM controls the Type-I error, and (3) tPCM is as powerful as HRT and PCM. 
	\subsection{Data-generating model}

	We pick $s$ of the $p$ variables to be nonnull at random. Let $\mathcal{S}$ denote the set of nonnulls. We then define our data-generating model as follows:
	\begin{equation*}
	\law_n(\rx) = N(0,\Sigma(\rho)), \ \law_n(\ey \mid \rx) = N\left(\sum_{i \in \mathcal{S}, \textnormal{odd}} (X_i - 0.3)^2 / \sqrt{2} \theta + \sum_{i \in \mathcal{S}, \textnormal{even}} - \cos(X_i)\theta,1 \right)
	\end{equation*}
	Here, the covariance matrix $\Sigma(\rho)$ is an AR(1) matrix with parameter $\rho$; that is, $\Sigma(\rho)_{ij} = \rho^{|i-j|}$. Therefore, the entire data-generating process is parameterized by the five parameters $(n, p, s, \rho, \theta)$; see Table \ref{tab:gam sim parameters}. Since we utilized sample split proportions other than 0.5, in this section, we let $n$ denote the \emph{total} sample size, i.e. the combined size of $D_1$ and $D_2$. We vary each of the five parameters across five values each, setting the remaining to the default values (in bold).

	\begin{table}[h!]
		\centering
		\begin{tabular}{ccccc}
			\hline
			$n$ & $p$ & $s$ & $\rho$ & $\theta$ \\
			\hline
			800 & 30 & 4 & 0.2 & 0.15 \\
			1000 & 40 & 8 & 0.35 & 0.2 \\
			\textbf{1200} & \textbf{50} & \textbf{12} & \textbf{0.5} & \textbf{0.25} \\
			1400 & 60 & 16 & 0.65 & 0.3 \\
			1600 & 70 & 20 & 0.8 & 0.35 \\
			\hline
		\end{tabular}
		\caption{The values of the sample size $n$, covariate dimension $p$, sparsity $s$, autocorrelation of covariates $\rho$, and signal strength $\theta$ used for the
		simulation study. Each of the parameters $n$, $p$, $s$, $\rho$, $\theta$ was varied among the values displayed in the
		table while keeping the other four at their default values, indicated in bold. For example, $p = 50$, $s = 12$, $\rho = 0.5$, $\theta = 0.25$ were kept fixed while varying $n \in \{800, 1000, 1200, 1400, 1600\}$.}
		\label{tab:gam sim parameters}
	\end{table}

	\subsection{Methodologies compared}
	We applied the four methods tPCM, HRT, PCM, and oracle GCM in conjunction with a Bonferroni correction at level $\alpha = 0.05$ to control the family-wise error rate. For all methods, quantities such as $\E[\ey \mid \rx]$ and $\E[f_j(\ex{j}) \mid \rx{\mj}]$ were fit using (sparse) GAMs. tPCM and HRT exploited knowledge of the banded structure and so $\law(\rx)$ was fit using a banded precision matrix estimate. PCM was also endowed with knowledge of the banded covariance structure. This meant that for any step requiring a $\E[f_j(\ex{j}) \mid \rx{\mj}]$ fit, we actually only regressed $f_j(\ex{j})$ on $\ex{j-1}$ and $\ex{j+1}$, since $\ex{j}$ is independent of all other $\rx_k$ given $\ex{j-1}$ and $\ex{j+1}$. Oracle GCM was given knowledge of the true $\E[\ey \mid \rx]$ and $\law(\rx)$ models. More specifics are given below:
	
	\paragraph{tPCM} We apply tPCM (Algorithm~\ref{alg: tpcm}) with the  \verb|bam()| function from \texttt{mgcv} package for GAM fitting for $\E[\ey \mid \rx]$ with penalization parameter \verb|bs = "cs"|, and the banded precision matrix estimation from the \texttt{CovTools} package for $\law(\rx)$. We choose $B_{\text{tPCM}} = 25$ resamples and training proportion 0.4, the latter determined as described in Appendix~\ref{sec:choosing-training-proportions-gam}. 
	
	\paragraph{HRT} We apply the HRT (Algorithm~\ref{alg: hrt}) with the \verb|bam()| function from \texttt{mgcv} package for GAM fitting for $\E[\ey \mid \rx]$ and the banded precision matrix estimation from the \texttt{CovTools} package for $\law(\rx)$. We choose $B_{\text{HRT}} = 5000$ resamples and training proportion 0.4. Because HRT was the slowest of the methods considered, we only applied it to the default simulation setting for the sake of computational feasibility.

	\paragraph{PCM} We apply a variant of PCM that is closer to Algorithm 1 from \citet{Lundborg2022a} than vanilla PCM (Algorithm~\ref{alg: pcm} in Section~\ref{sec:vPCM}), as it includes Step 1 (ii) and Step 1 (iv). Step 1 (ii) was possible in this case since we fit a GAM. We continued to omit Step 2 of Algorithm 1 from \citet{Lundborg2022a}, which the authors claimed ``is not critical for good power properties." We also use \verb|bam()| for fitting $\mathbb{E}[\ey \mid \rx{\mj}]$. Moreover, to maintain a fair comparison, we endow PCM with knowledge of the banded covariance structure for the predictors. This meant that for any step where a function of $\ex{j}$ is regressed on $\rx{\mj}$ (Steps 1 (iii) and 3 (i) from Algorithm 1 from \citet{Lundborg2022a}), we actually only regressed $\ex{j}$ on $\ex{j-1}$ and $\ex{j+1}$, since $\ex{j}$ is independent of all other $\rx_k$ given $\ex{j-1}$ and $\ex{j+1}$ under the banded structure. These regressions were also performed using \verb|bam()|. We choose training proportion 0.3, determined as described in Appendix~\ref{sec:choosing-training-proportions-gam}.

	\paragraph{Oracle GCM} We also compare to an oracle version of the GCM test that is equipped with the true $\law(\ey \mid \rx)$ and $\law(\rx)$, as well as the same tower property-based acceleration as the tPCM test (also based on 25 resamples). Since there was no nuisance function estimation, there was no sample splitting, and so the Oracle GCM test had a larger sample size than the other methods. 

	\begin{remark}
	We omitted the two methods discussed in Section \ref{sec:omission-justification} for the same reasons. We also omitted model-X knockoffs since we desired family-wise error rate control, and model-X knockoffs is not designed to produce fine-grained $p$-values necessary to control this error rate.
	\end{remark}

	\subsection{Simulation results}

	Results for family-wise error, power, and computation time are presented in Figures \ref{fig:fwer-gam}, \ref{fig:power-gam}, and \ref{fig:computation-gam} respectively. Below are our observations from these results:

	\begin{itemize}
		\item As we expect, all methods tend in improve in terms of power as $n$ increases, amplitude increases, $p$ decreases, and $\rho$ decreases. For $s$, there is no such monotonic relationship.
		\item All methods control the family-wise error rate, indicating that in this setting, the $\law(\ey\mid \rx)$ and $\law(\rx)$ are learned sufficiently well.
		\item The oracle GCM has significantly lower power than the other methods, as the test statistic it is based on is most powerful against partially linear alternatives, which is not the case in the simulation design. The other methods have roughly equal power.
		\item Among the three powerful methods, tPCM is by far the fastest, with the gap widening as $p$ grows. 
 	\end{itemize}
	\begin{figure}[H]
	\centering
	\includegraphics{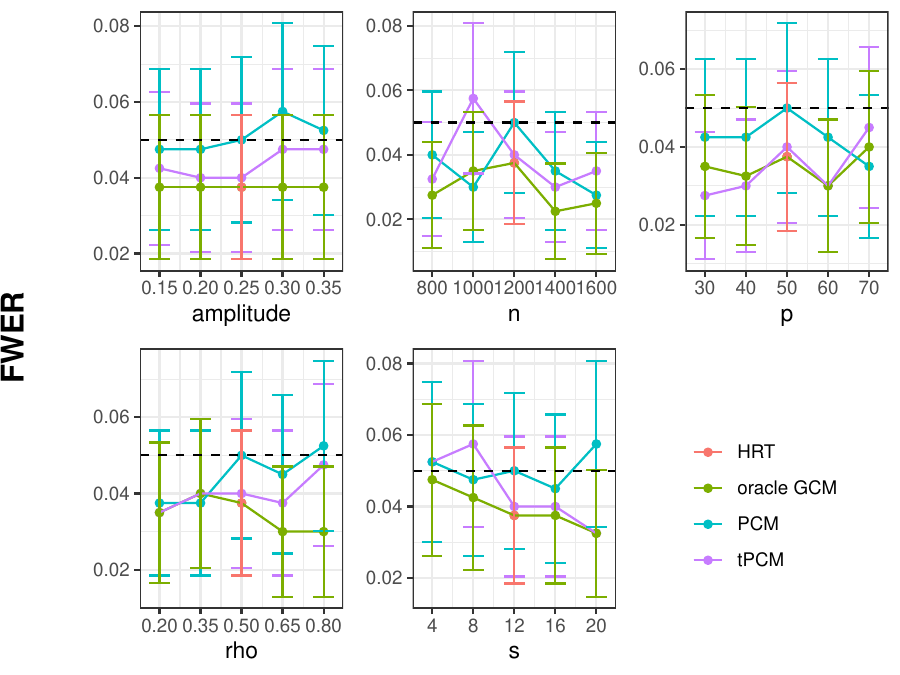}
	\caption{Type-I error control: in each plot, we vary one parameter. Each point is the average of 400 Monte Carlo replicates, and the error bars are the average $\pm 2 \times \widehat{\sigma}_f$, where $\widehat{\sigma}_f$ is the Monte Carlo standard deviation divided by $\sqrt{400}$.}
	\label{fig:fwer-gam}
\end{figure}
	\begin{figure}[H]
		\centering
		\includegraphics{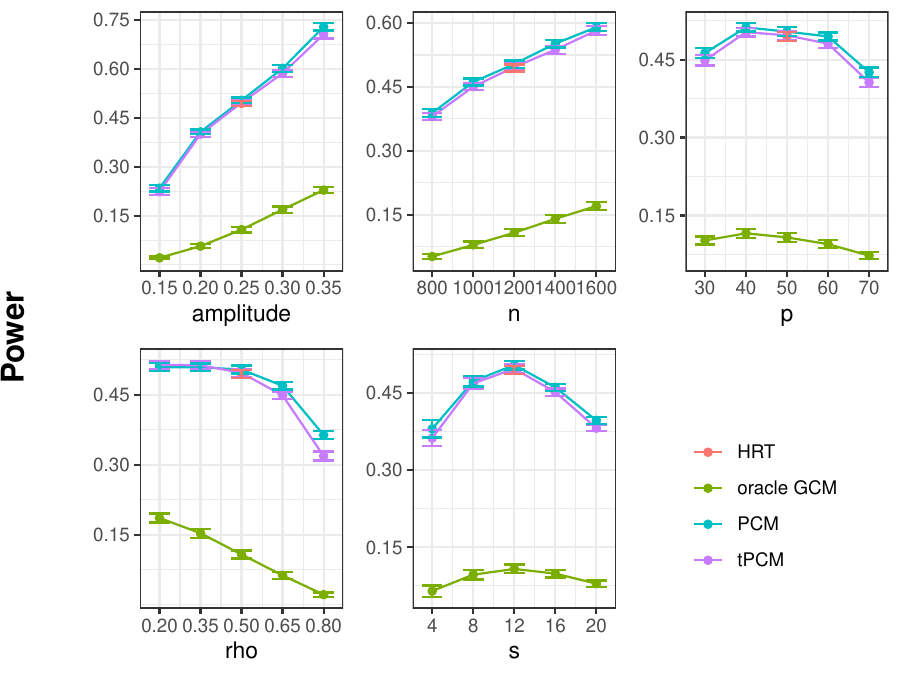}
		\caption{Power: in each plot, we vary one parameter. Each point is the average of 400 Monte Carlo replicates, and the error bars are the average $\pm 2 \times \widehat{\sigma}_p$, where $\widehat{\sigma}_p$ is the Monte Carlo standard deviation divided by $\sqrt{400}$.}
		\label{fig:power-gam}
	\end{figure}
	\begin{figure}[H]
		\centering
		\includegraphics{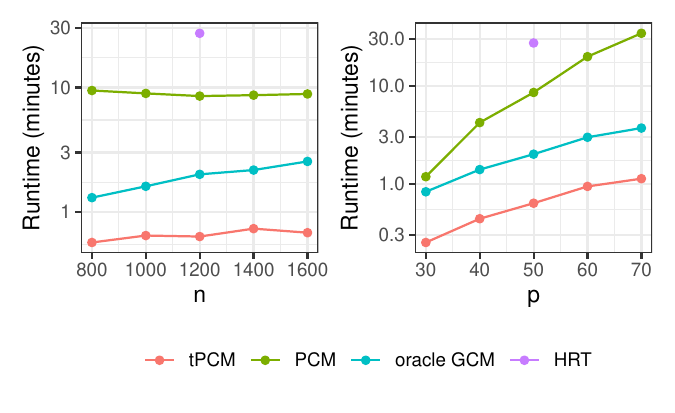}
		\caption{Computation: in each plot, we vary one parameter. Each point is the average of 400 Monte Carlo replicates.}
		\label{fig:computation-gam}
	\end{figure}

	\subsection{Computational comparison in a larger setting}
	In the main simulation setting, we chose smaller $n$ and $p$ so that it would be computationally feasible to run 400 Monte Carlo replicates of all methods to assess statistical performance. To further demonstrate the computational advantage of tPCM, we considered a larger setting with the same data-generating model as before, but with different parameters. Specifically, we fixed $n = 2500$, $p = 100$, $\rho = 0.5$, $\theta = 0.25$, $s = 15$, and varied $p \in \{100, 125, 150, 175, 200\}$.  We forego any statistical comparison and simply measure the time taken to perform each procedure once for each of the five settings of $p$. HRT, PCM, and tPCM all used a 0.4 training proportion, HRT used $5 \times p / 0.05$ resamples, and tPCM and oracle GCM used 25 resamples. These results are shown in the right panel of Figure \ref{fig:computation-rf}. As expected, the computational gap between tPCM and HRT and PCM widens as $p$ increases, and when $p = 200$, tPCM is more than 130 times faster than HRT and PCM.

	\subsection{Choosing the training proportions} \label{sec:choosing-training-proportions-gam}
In this section, we justify our choice of the best training proportions for tower PCM and PCM. For tPCM, we compared training proportions in $\{0.3, 0.4, 0.5, 0.6, 0.7\}$. For PCM we compared training proportions in $\{0.3, 0.4, 0.5\}$. We plot the family-wise error rates and power for for each method in Figures \ref{fig:tPCM FWER proportion}, \ref{fig:PCM FWER proportion}, \ref{fig:tPCM power proportion}, and \ref{fig:PCM power proportion}. In terms of type-I error for tPCM, 0.7 seems the most conservative which is perhaps not surprising, as it uses more data for the nuisances and less for testing. The rest of the proportions do not follow a monotonic trend, however. Generally, all proportions seem to be controlling the type-I error, though 0.5 and 0.6 exhibit some slight inflation for some settings. The type-I error rate for PCM is also not monotone. It is unclear what we should expect, since smaller training proportion means more data for the in-sample fits on the test split, but a poorer estimate of the direction of the alternative on the training split. In terms of power, though there is not a single training proportion that dominates uniformly for both tPCM and PCM, 0.4 and 0.3 are generally the highest, respectively. 

\begin{figure}
	\centering
	\includegraphics{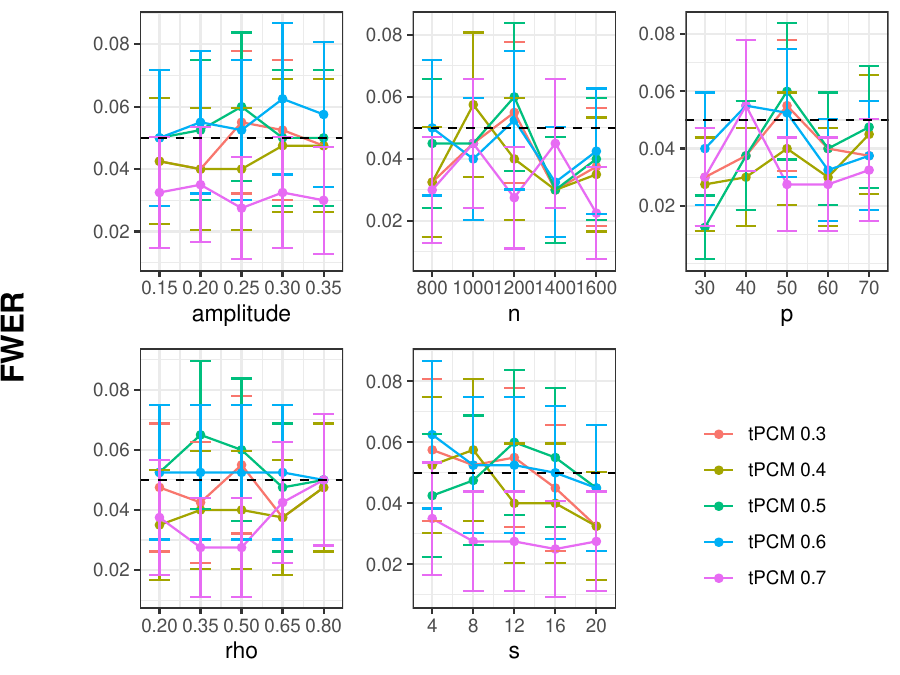}
	\caption{A family-wise error rate comparison of between  different training proportions for tPCM.}
	\label{fig:tPCM FWER proportion}
\end{figure}

\begin{figure}
	\centering
	\includegraphics{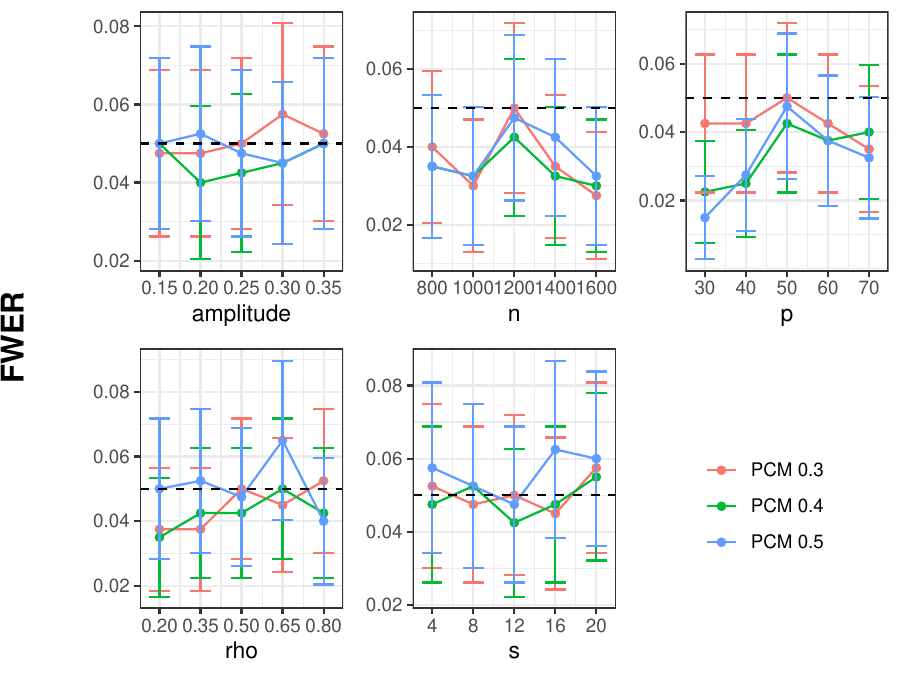}
	\caption{A family-wise error rate comparison between different training proportions for PCM.}
	\label{fig:PCM FWER proportion}
\end{figure}

\begin{figure}
	\centering
	\includegraphics{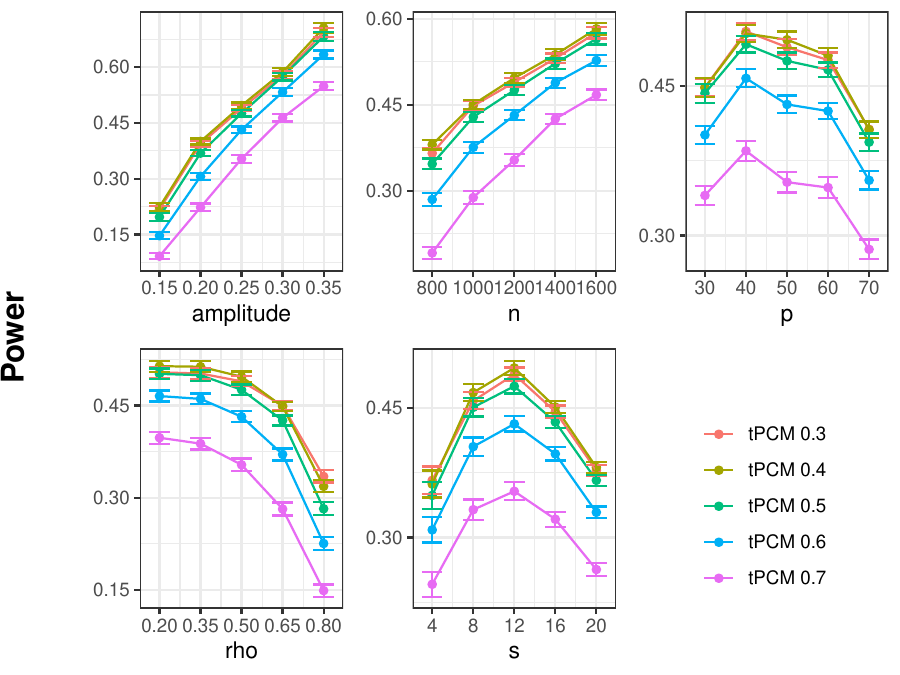}
	\caption{A power comparison between different training proportions for tPCM.}
	\label{fig:tPCM power proportion}
\end{figure}

\begin{figure}
	\centering
	\includegraphics{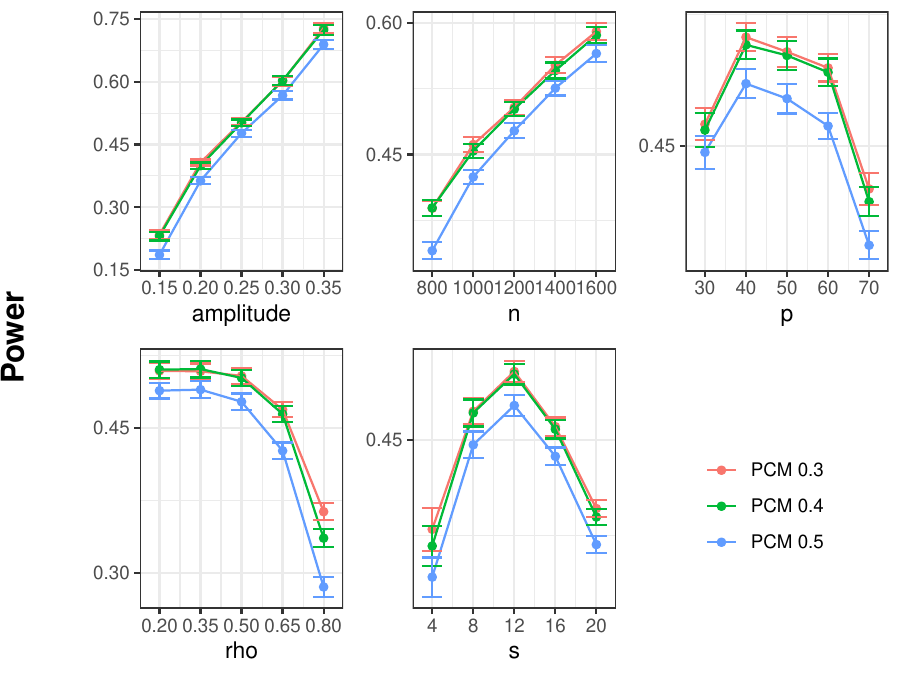}
	\caption{A power comparison between different training proportions for PCM.}
	\label{fig:PCM power proportion}
\end{figure}
\end{document}